\documentclass[aps,prb,english,superscriptaddress,twocolumn,tightenlines,showpacs,longbibliography,floatfix,amsfonts]{revtex4-2}
\usepackage[T1]{fontenc}
\usepackage[utf8]{inputenc}
\usepackage[normalem]{ulem}
\setcounter{secnumdepth}{3}
\usepackage{textcomp}
\usepackage{graphicx,float}
\usepackage{subfigure}
\usepackage[dvipsnames,svgnames]{xcolor}
\usepackage{float}
\usepackage{mathrsfs,mathtools,xfrac}
\usepackage{amsmath,amssymb,amsthm,amscd,bm,bbm}
\usepackage{txfonts} 
\usepackage{afterpackage,multirow,comment}
\usepackage{array,booktabs,minibox}
	\newcolumntype{P}[1]{>{\centering\arraybackslash}p{#1}} 
\usepackage[colorlinks,bookmarks=false,citecolor=blue,linkcolor=blue,urlcolor=blue,filecolor=blue]{hyperref}
\usepackage{wasysym}
\usepackage{listings}
\usepackage{framed}
\lstset{basicstyle=\ttfamily\small, breaklines=true}

\newcommand{\im}{\operatorname{im}}

\newcommand{\ket}[1]{| #1 \rangle}
\newcommand{\bra}[1]{\langle #1 |}

\newcommand{\tr}{\operatorname{Tr}}
\newcommand{\pt}{\partial}
\newcommand{\dg}{\dagger} 
\newcommand{\eps}{\varepsilon}

\newcommand{\gsd}{\operatorname{GSD}}

\newcommand{\ord}{\operatorname{ord}}
\renewcommand{\hom}{\operatorname{Hom}}

\newtheorem{thm}{Theorem}
\newtheorem{defn}{Definition}

\newtheorem{lem}{Lemma}
\newtheorem{prop}{Proposition}
\newtheorem{cor}{Corollary}
\newtheorem{fact}{Fact}

\begin{document}
\title{Symmetry-enriched topological order and quasifractonic behavior in $\mathbb{Z}_N$ stabilizer codes}
\author{Siyu He}
\affiliation{Institute of Theoretical Physics, Chinese Academy of Sciences, Beijing
100190, China}
\affiliation{School of Physical Sciences, University of Chinese Academy of Sciences,
Beijing 100049, China.}
\author{Hao Song}
\email{songhao@itp.ac.cn}

\affiliation{Institute of Theoretical Physics, Chinese Academy of Sciences, Beijing
100190, China}

\begin{abstract}
We study a broad class of qudit stabilizer codes, termed $\mathbb{Z}_N$ bivariate-bicycle (BB) codes, arising either as two-dimensional realizations of modulated gauge theories or as $\mathbb{Z}_N$ generalizations of binary BB codes. Our central finding, derived from the polynomial representation, is that the essential topological properties of these $\mathbb{Z}_N$ codes can be determined by the properties of their $\mathbb{Z}_p$ counterparts, where $p$ are the prime factors of $N$, even when $N$ contains prime powers ($N = \prod_i p_i^{k_i}$). This result yields a significant simplification by leveraging the well-studied framework of codes with prime qudit dimensions. In particular, this insight directly enables the generalization of the algebraic-geometric methods (e.g., the Bernstein-Khovanskii-Kushnirenko theorem) to determine anyon fusion rules in the general qudit situation. Moreover, we elucidate the symmetry-enriched topological (SET) order underlying the quasifractonic behavior in qudit BB codes (including the Delfino-Chamon-You model), resolving the associated anyon mobility puzzle. We also develop an efficient computational algebraic method, based on Gr\"{o}bner bases over the ring of integers, to determine both the topological order and its SET properties.
\end{abstract}

\maketitle

\section{Introduction}

Topological order offers a unifying framework for understanding exotic phases of quantum matter indistinguishable by local order parameters. Characterized by long-range entanglement, emergent gauge fields, robust ground state degeneracy, and anyonic excitations with nontrivial braiding statistics~\cite{Wen1990, Wen1990_2, Kitaev2003, Kitaev2006, Levin2006, Kitaev2006_2, Wen2017}, it also offers promising proposals for fault-tolerant quantum computation by encoding quantum information in nonlocal degrees of freedom~\cite{Dennis2002, Nayak2008}.
  
The toric code furnishes a simple and illustrative example of a physical system exhibiting topological order and provides a paradigmatic model for fault-tolerant quantum computation~\cite{Kitaev2003}. Recently, there has been increasing interest in generalizing the toric code in both the condensed matter and the quantum computing communities, with each following distinct yet complementary paths.
  
In quantum computing, one emphasis has been on quantum low-density parity-check (QLDPC) codes~\cite{Postol2001, MacKay2004, Kovalev2013, Bravyi2014, Tillich2014, Gottesman2014, Babar2015, Hastings2021, Breuckmann2021, Panteleev2021, Panteleev2022-1,  Panteleev2022-2, Breuckmann2021-2, Tremblay2022, Raveendran2022, Xu2024}, which pave the way to improve the scalability limits and mitigate the qubit overhead problem of the toric code. Among them, the bivariate-bicycle (BB) codes~\cite{Bravyi2024}, a natural generalization of the toric code proposed by IBM, have garnered notable attention~\cite{Bravyi2022, Shaw2025, Eberhardt2024, Eberhardt2025, Berthusen2025, Cowtan2024, Gong2024, Poole2024, Williamson2024, Wang2024, Liang2024-2, Blue2025, Sayginel2025, Yoder2025, Postema2025, Voss2025}, as they promise to sharply reduce the qubit overhead without requiring significant long-range resources. Recently, significant progress has been made in understanding both classical and quantum LDPC codes from the condensed matter perspective~\cite{Lim2021, Rakovszky2023, Rakovszky2024, Deroeck2024, Liang2024-1, Yin2025, Placke2024}. Specifically, the anyon theory of BB codes has been established~\cite{Chen2025, Liang2025}, revealing close connections between the quantum error correction properties of BB codes and their topological order.
  
From the condensed matter perspective, generalizations of the toric code have been motivated by exploring symmetry structures and alternative phases of matter. A natural starting point is to generalize binary systems to $\mathbb{Z}_{N}$ systems describing more general Abelian symmetries. In particular, modulated symmetries are global symmetries dependent on the spatial details of the lattice~\cite{Sala2022, Watanabe2023, Delfino2023_2, Delfino2024, Sala2024, Han2024, Hiromi2024_1, Hiromi2024_2, Pace2025, Yoshitome2025, Kim2025_1, Kim2025_2}. Exploring modulated symmetries leads to the discovery of rich phenomena, including Hilbert space fragmentation~\cite{Sala2020, Khemani2020, Moudgalya2020} and UV/IR mixing~\cite{Oh2022_1, Oh2022_2, Gorantla2021, Gorantla2022, Pace2022, Casasola2024, Kim2025_3}. A remarkable example of modulated symmetries is the subsystem symmetry, which gives rise to fractons that are immobile in isolation~\cite{Chamon2005, Bravyi2011, Haah2011, Yoshida2013, Vijay2015, Vijay2016, Williamson2016, Pretko2017, Ma2017, Shirley2019, Nandkishore2019, Pretko2020, Song2019, Prem2019, Devakul2019, Song2022, Song2024, Canossa2024, Zhang2025}. These insights from modulated symmetries have inspired further generalizations of the toric code~\cite{Watanabe2023, Delfino2023_2, Delfino2024}, uncovering unconventional properties beyond conventional topological order~\cite{Delfino2023, Chen2025}. These generalizations share many similarities with those considered in the quantum computing community. We expect similar methods to be applicable to both studies. However, in contrast to quantum computing, discussions beyond the binary situation present additional challenges. 
  
In this work, we consider various generalizations of toric codes for $\mathbb{Z}_N$ qudits exhibiting modulated symmetries. These models fall into a broad class of stabilizer codes we construct, which we refer to as $\mathbb{Z}_N$ BB codes. We employ the polynomial representation~\cite{Haah2013} to describe their topological orders. Notably, we find that the essential topological-order properties of $\mathbb{Z}_N$ qudit codes—including the topological condition and the fusion rules—are determined once these properties are known for the corresponding $\mathbb{Z}_p$ codes, where $p$ runs over the prime factors of $N$. This is a significant simplification, as the topological orders of stabilizer codes with prime qudit dimensions have been well studied. Specifically, it enables us to generalize the algebraic-geometric method of Ref.~\cite{Chen2025}, which uses the Bernstein-Khovanskii-Kushnirenko (BKK) theorem to determine fusion rules, from qubits to the general qudit cases. In addition, we elucidate the symmetry-enriched topological (SET) structure~\cite{Essin2013, Mesaros2013, Huang2014, Chiu2016, Barkeshli2019} underlying the quasifractonic behavior. We develop an efficient computational algebraic method, based on the Gr\"{o}bner bases over integers, to determine both the topological order and its symmetry-enriched properties for qudit BB codes. As an application, our framework provides a precise characterization of the Delfino-Chamon-You (DCY) model~\cite{Delfino2023_2} and resolves its mobility puzzle regarding whether one-to-one anyon hopping is permitted over any finite distance.
  
This paper is structured as follows. In Sec.~\ref{sec2}, we introduce the models considered from a gauging perspective, followed by a review of the polynomial representation of stabilizer codes in Sec.~\ref{sec3}. We then analyze BB codes, discussing their topological orders in Sec .~\ref {sec4} and their quasifractonic behaviors in Sec.~\ref{sec5}. Next, Sec.~\ref{sec6} examines the topological orders and quasifractonic behaviors of the models introduced in Sec.~\ref{sec2}. The computational algebraic method used to obtain these results is presented in Sec.~\ref{sec7}. Finally, we summarize and discuss our findings in Sec.~\ref{sec8}.

\section{Models from gauging symmetries \label{sec2}}

Various $\mathbb{Z}_N$ generalizations of the toric code are considered in this paper, where $N$ can be any positive integer. These models are all defined on a square lattice with a $\mathbb{Z}_N$ qudit assigned to each edge. 

For each qudit, we introduce the generalized Pauli matrices
\begin{equation}
    X=\sum_{j\in\mathbb{Z}_N}\ket{j}\bra{j+1},\quad\quad Z=\sum_{j\in\mathbb{Z}_N}\omega^{j}\ket{j}\bra{j},\label{eq:definition_of_Pauli_operators}
\end{equation}
where $\omega=e^{\frac{2\pi i}{N}}$. For brevity, we refer to generalized Pauli simply as Pauli below. They satisfy $X^N=Z^N=1$ and the commutation relation $XZ=\omega ZX$. 

Qudits are assumed to be located at the centers of edges. Accordingly, as depicted in Fig.~\ref{fig:stabilizers_of_generalized_toric_code}(a) and (b), we denote the Pauli operators for qudits on horizontal edges by $X_{(i+1/2, j)}$ and $Z_{(i+1/2, j)}$, and those on vertical edges by $X_{(i, j+1/2)}$ and $Z_{(i, j+1/2)}$.

\begin{figure}[t]
    \centering
    \includegraphics[width=\linewidth]{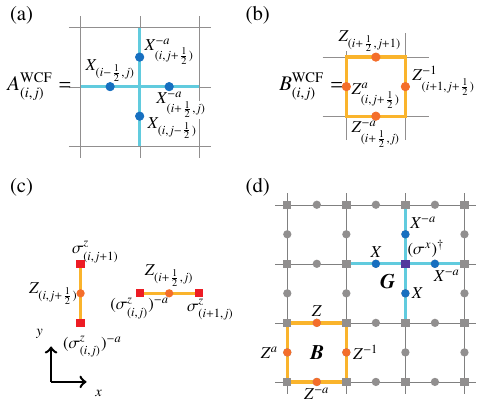}
    \caption{Illustration of the WCF model. (a) The $X$-type stabilizer $A_{(i,j)}^{\operatorname{WCF}}$. (b) The $Z$-type stabilizer $B_{(i,j)}^{\operatorname{WCF}}$. (c) The minimal coupling terms between gauge and matter qudits. (d) The zero flux terms (yellow) and the gauging symmetry operators (light blue). The dark blue and the orange dots correspond to the gauge-qudit Pauli $X$ and $Z$ operators on the edges, respectively. Similarly, the red and purple squares indicate the matter-qudit operators 
    $\sigma_x$ and $\sigma_z$ at the vertices. The explicit forms of each operator are shown next to their symbols. We omit the coordinates of operators in (d) for brevity.} 
    \label{fig:stabilizers_of_generalized_toric_code}
\end{figure}

We focus on stabilizer models of the Calderbank-Shor-Steane type, each defined by a Hamiltonian of the form
\begin{equation}
    H=-\sum_{i,j}A_{(i,j)}-\sum_{i,j}B_{(i,j)}+\operatorname{H.c.},\label{eq:Hamiltonian}
\end{equation}
where $A_{(i,j)}$ and $B_{(i,j)}$ are the products of Pauli $X$ and $Z$ operators, respectively, constructed such that all $A$ and $B$ operators mutually commute. Lattice translation symmetry is respected, with $(i,j)$ labeling the positions of $A$ and $B$ operators.

\subsection{Watanabe-Cheng-Fuji model\label{sec2a}}
The first model under investigation is the Watanabe-Cheng-Fuji (WCF) model~\cite{Watanabe2023}
\begin{equation}
H_{\mathrm{WCF}}=-\sum_{i,j}A_{\left(i,j\right)}^{\mathrm{WCF}}-\sum_{i,j}B_{\left(i,j\right)}^{\mathrm{WCF}}+\mathrm{H.c.}, \label{eq:WCF}
\end{equation}
whose stabilizers take the form
\begin{equation}
    \begin{aligned}
    A_{(i,j)}^{\operatorname{WCF}}&= X^{-a}_{(i+\frac{1}{2},j)}X^{-a}_{(i,j+\frac{1}{2})}X_{(i-\frac{1}{2},j)}X_{(i,j-\frac{1}{2})}, \\
    B_{(i,j)}^{\operatorname{WCF}}&= Z_{(i+1,j+\frac{1}{2})}^{-1}Z_{(i+\frac{1}{2},j+1)}Z^{-a}_{(i+\frac{1}{2},j)}Z^{a}_{(i,j+\frac{1}{2})},
\end{aligned}\label{eq:stabilizers of generalized toric code}
\end{equation}
as shown in Fig.~\ref{fig:stabilizers_of_generalized_toric_code}(a) and (b), where $a\in \mathbb{Z}_N$. 

For the case where $a$ is coprime to $N$, the WCF model can be naturally interpreted as a generalized lattice gauge theory, emerging from the process of gauging an exponential symmetry. Explicitly, we follow the procedure described in Ref.~\cite{Shirley2019}. We introduce $\mathbb{Z}_N$ qudits (as the matter field) on the vertices of the lattice, governed by the trivial paramagnetic Hamiltonian
\begin{equation}
    H_0=-\sum_{i,j}\left(\sigma_{(i,j)}^{x}+\mathrm{H.c.}\right),\label{eq:Hamiltonian_of_matter}
\end{equation}
as a system satisfying the exponential symmetry
\begin{equation}
    U_{\operatorname{exp}}=\prod_{i,j}(\sigma_{(i,j)}^{x})^{a^{i+j}},\label{eq:symmetry_GTC}
\end{equation}
where $\sigma_{(i,j)}^{x}$ is the generalized Pauli $X$ operator associated with the matter qudit, and it acts with an exponential spatial dependence $a^{i+j}$~\cite{note}. 
This exponential symmetry exemplifies a spatially modulated global symmetry, which generalizes the conventional uniform case.
To gauge the symmetry, we introduce the gauge field $Z_{(i+\frac{1}{2},j)}$ and $Z_{(i,j+\frac{1}{2})}$ on lattice edges, coupled to symmetric operators $(\sigma^{z}_{(i,j)})^{-a} \sigma^{z}_{(i+1,j)}$ and $(\sigma^{z}_{(i,j)})^{-a}\sigma^{z}_{(i,j+1)}$, as in Fig.~\ref{fig:stabilizers_of_generalized_toric_code}(c). The Hilbert space is constrained by the generalized Gauss law $G_{(i,j)}^{\operatorname{WCF}}=1$, where
\begin{equation}
    G_{(i,j)}^{\operatorname{WCF}}=(\sigma_{(i,j)}^{x})^{\dagger}A_{(i,j)}^{\operatorname{WCF}}.\label{eq:GTC_local_symmetry}
\end{equation}
Thus, the matter Hamiltonian $H_0$ in Eq.~\eqref{eq:Hamiltonian_of_matter} can be rewritten as $H_0 = -\sum (A^{\operatorname{WCF}}_{(i,j)} + \operatorname{h.c.})$. Additionally, gauge fluxes remain to be gapped out by adding terms $B_{(i,j)}^{\operatorname{WCF}}$ and $(B_{(i,j)}^{\operatorname{WCF}})^{\dagger}$, constructed as products of gauge field Pauli $Z$ operators that respect the Gauss law constraint. 
We thus obtain a gauge theory governed by $H_{\operatorname{WCF}}$. It can be further reduced into a spin model by the unitary transformation $\prod_{v} CA^{\operatorname{WCF}}_{v}$, 
which leaves $H_{\operatorname{WCF}}$ invariant and trivializes the Gauss law constraint $G_{v}^{\operatorname{WCF}}=1$ to $\sigma_{v}^{x}=1$, where 
\begin{equation}
CA_{v}=\sum_{m\in\mathbb{Z}_{N}}P_{v,m}A_{v}^{m}
\end{equation}
with $P_{v,m}$ denoting the projector selecting $\sigma_v^{z}=\omega^m$ at vertex $v=(i,j)$. This results in exactly the spin model described by Eq.~\eqref{eq:WCF}.

\subsection{Delfino-Chamon-You model\label{sec2b}}
Another intriguing example related to exponential symmetry is the DCY model~\cite{Delfino2023_2}, whose excitation mobility still remains a puzzle, and will be addressed in this paper. Its Hamiltonian is given by
\begin{equation}
    H_{\operatorname{DCY}}=-\sum_{i,j}A_{(i,j)}^{\operatorname{DCY}}-\sum_{i,j}B_{(i,j)}^{\operatorname{DCY}}+\operatorname{h.c.},
\end{equation}
with terms depicted in Fig.~\ref{fig:exponential symmetry model}(a) and (b). Explicitly, 
\begin{equation}
    \begin{aligned}
    A_{(i,j)}^{\operatorname{DCY}}&= X^{a}_{(i+\frac{1}{2},j)}X^{a}_{(i,j+\frac{1}{2})}X_{(i-\frac{1}{2},j)}^{-(a+1)}X_{(i,j-\frac{1}{2})}^{-(a+1)}X_{(i-\frac{3}{2},j)}X_{(i,j-\frac{3}{2})}, \\
    B_{(i,j)}^{\operatorname{DCY}}&= Z_{(i+\frac{1}{2},j)}^{a} Z_{(i,j+\frac{1}{2})}^{-a}
    Z_{(i+\frac{1}{2},j+1)}^{-(a+1)}Z_{(i+1,j+\frac{1}{2})}^{a+1}
    Z_{(i+\frac{1}{2},j+2)}Z^{-1}_{(i+2,j+\frac{1}{2})}  
\end{aligned}\label{eq:stabilizers of exponential symmetry model},
\end{equation}
where $a\in \mathbb{Z}_N$. Note that there are two qudits per unit cell, which we take to be located at $(i-\frac{1}{2},j)$ and $(i,j-\frac{1}{2})$; by shifting them to the lattice vertices, the model may be transformed into precisely the form presented in Ref.~\cite{Delfino2023_2}.

\begin{figure}[t]
    \centering
    \includegraphics[width=\linewidth]{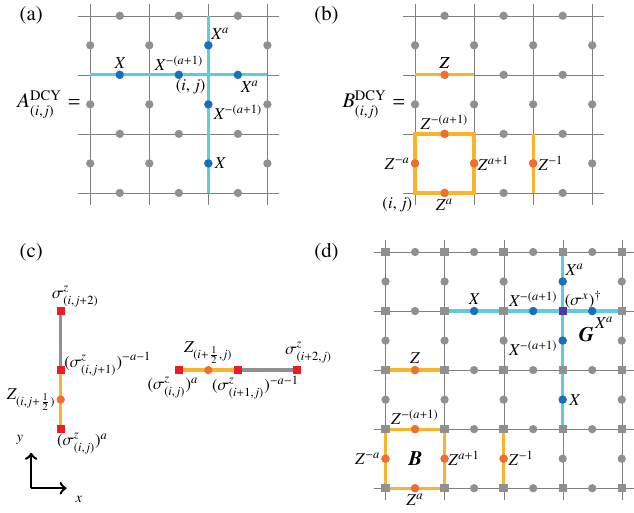}
    \caption{Illustration of the DCY model. (a) $X$-type stabilizer $A_{(i,j)}^{\operatorname{DCY}}$.
    (b) $Z$-type stabilizer $B_{(i,j)}^{\operatorname{DCY}}$. 
    In (a) and (b), the location of the stabilizers is marked by the coordinate $(i,j)$ in each figure, while their constituent Pauli operators are shown without explicit coordinates. 
    (c) The minimal coupling terms between gauge and matter qudits. 
    (d) The zero flux condition (yellow) and the gauging symmetry operators (light blue) with all coordinate labels omitted. 
    Throughout the figures, deep blue dots, orange dots, red squares, and purple squares represent the $X$, $Z$, $\sigma^z$, and $\sigma^x$ operators, respectively.}
    \label{fig:exponential symmetry model}
\end{figure}

The DCY model can also be obtained by the gauging procedure. We start with the same matter Hamiltonian $H_0$ in Eq.~\eqref{eq:Hamiltonian_of_matter}, but consider a different symmetry structure generated by
\begin{equation}
\begin{array}{rlrl}
U_0 &= \prod_{i,j} \sigma_{(i,j)}^x, & \quad
U_x &= \prod_{i,j} (\sigma_{(i,j)}^x)^{a^i}, \\
U_y &= \prod_{i,j} (\sigma_{(i,j)}^x)^{a^j}, & \quad
U_{\mathrm{exp}} &= \prod_{i,j} (\sigma_{(i,j)}^x)^{a^{i+j}},
\end{array} \label{eq:sym_DCY}
\end{equation}
including the uniform symmetry $U_0$,  the exponential symmetry $U_{\operatorname{exp}}$ discussed previously, and two additional symmetries with exponential modulation along one spatial coordinate each~\cite{note}.
If these global symmetries are independent, 
symmetric operators are generated by $(\sigma_{(i,j)}^{z})^{a}(\sigma_{(i+1,j)}^{z})^{-(a+1)}\sigma_{(i+2,j)}^{z}$ and $(\sigma_{(i,j)}^{z})^{a}(\sigma_{(i,j+1)}^{z})^{-(a+1)}\sigma_{(i,j+2)}^{z}$, to which gauge field qudits $Z_{(i+\frac{1}{2},j)}$ and $Z_{(i,j+\frac{1}{2})}$ are associated. 
Applying the gauge procedure results in the DCY model. See Fig.~\ref{fig:exponential symmetry model}(c) and (d).

\subsection{Qudit BB codes and hypergraph product codes\label{sec2c}}

More generally, we may construct a broad family of stabilizer Hamiltonians of the form Eq.~\eqref{eq:Hamiltonian}, with
\begin{equation}
    \begin{aligned}
    A_{(i,j)}&= \prod_{i',j'}X^{c_{i',j'}}_{(i-i'+\frac{1}{2},j-j')}X^{d_{i',j'}}_{(i-i',j-j'+\frac{1}{2})}, \\
    B_{(i,j)}&= \prod_{i',j'}Z^{d_{i',j'}}_{(i+i'+\frac{1}{2},j+j')}Z^{-c_{i',j'}}_{(i+i',j+j'+\frac{1}{2})}, 
\end{aligned}\label{eq:stabilizers_of_BB_code_model}
\end{equation}
where $c_{i',j'}$ and $d_{i',j'}\in\mathbb{Z}_N$ specify the powers of Pauli matrices on the lattice edges. We assume only finitely many of the coefficients $c_{i',j'}$ and $d_{i',j'}$ are nonzero, ensuring that each stabilizer is locally supported on the lattice. We refer to them as \textit{qudit BB codes}, which generalize the qubit BB codes introduced in Ref.~\cite{Bravyi2024} to the $\mathbb{Z}_N$ setting.
  
If $c_{i',j'}=0$ for $j'\neq 0$ and $d_{i',j'}=0$ for $i'\neq 0$, we refer to this type of BB code as a \textit{hypergraph product (HGP) code}, since it arises from two classical codes via a tensor product construction in quantum error correction~\cite{Tillich2014}. Explicitly, the stabilizers of HGP codes take the form
\begin{equation}
    \begin{aligned}
    A_{(i,j)}&= \prod_{i',j'}X^{c_{i',0}}_{(i-i'+\frac{1}{2},j)}X^{d_{0,j'}}_{(i,j-j'+\frac{1}{2})}, \\
    B_{(i,j)}&= \prod_{i',j'}Z^{d_{0,j'}}_{(i+i'+\frac{1}{2},j)}Z^{-c_{i',0}}_{(i,j+j'+\frac{1}{2})}, 
\end{aligned}
\end{equation}
  
Both \( H_{\operatorname{WCF}} \) and \( H_{\operatorname{DCY}} \) are special cases of this family. The WCF model is obtained by setting \( c_{00} = d_{00} = -a \) and \( c_{10} = d_{01} = 1 \), with all other $c_{i',j'}$ and $d_{i',j'}$ set to zero. The DCY model, on the other hand, corresponds to \( c_{00} = d_{00} = a \), \( c_{10} = d_{01} = -(a + 1) \), and \( c_{20} = d_{02} = 1 \), with all remaining parameters vanishing.  

Commutation of the $A$ and $B$ terms can be explicitly verified using the polynomial formalism presented below. 
 
In addition, to enable a systematic investigation of this family of models and resolve remaining puzzles in the DCY model, we will develop efficient methods that generalize the algebraic techniques originally devised for binary codes~\cite{Haah2013,Chen2025}.

\section{Polynomial representation\label{sec3}}
The polynomial representation~\cite{Haah2013} provides a convenient and powerful tool for describing and studying stabilizer codes with translation symmetry. Here, we outline the representation adapted to $\mathbb{Z}_N$ qudit systems.

The polynomial ring formalism is set up through the Laurent polynomial ring $R=\mathbb{Z}_N[x^\pm,y^\pm]$, which can be viewed as the group ring $\mathbb{Z}_N[\Lambda]$. Here, $\Lambda=\{x^iy^j|i,j\in \mathbb{Z}\}$ denotes the lattice translation group, with $(i,j)$ locating unit cells. Here, we choose the convention that the unit cell contains two qudits at $(i+\frac{1}{2},j)$ and $(i,j+\frac{1}{2})$. Within this framework,  Pauli operators are represented as vectors in $R^4$, in particular,
\begin{equation}
    \begin{aligned}
    &X_{(i+\frac{1}{2},j)}^{c}=\begin{pmatrix}
        cx^iy^j\\0\\0\\0
    \end{pmatrix},\quad X_{(i,j+\frac{1}{2})}^{c'}=\begin{pmatrix}
        0\\c'x^iy^j\\0\\0
    \end{pmatrix},\\
    &Z_{(i+\frac{1}{2},j)}^{d}=\begin{pmatrix}
        0\\0\\dx^iy^j\\0
    \end{pmatrix},\quad Z_{(i,j+\frac{1}{2})}^{d'}=\begin{pmatrix}
        0\\0\\0\\d'x^iy^j
    \end{pmatrix},
    \end{aligned}\label{eq:polynomial_representations}
\end{equation}
where coefficients $c$, $c'$, $d$ and $d'\in \mathbb{Z}_N$ specify the powers of the corresponding Pauli matrices. 
Multi-qudit Pauli operators $\prod_{\ell}X_{\ell}^{c_\ell} \cdot \prod_{\ell'}Z_{\ell'}^{d_{\ell'}}$ are represented by summing the vectors of their constituent factors.

For two multi-qudit Pauli operators represented as $v_1,v_2\in R^4$, they commute with each other if and only if the constant term of the symplectic bilinear form ${v}_1^\dg\lambda_qv_2$ vanishes. Here, $v_1^{\dg}=\bar{v}_1^T$ where $T$ denotes the transpose operation and $\bar{v}_1(x,y)=v_1(\bar{x},\bar{y})$ with $\bar{x}=x^{-1}$ and $\bar{y}=y^{-1}$, and $\lambda_q$ denotes the standard symplectic form matrix, given by
\begin{equation}
    \lambda_q=\begin{pmatrix}
        0&\mathbf{1}_{2\times 2}\\-\mathbf{1}_{2\times 2}&0
    \end{pmatrix}.
\end{equation}
  
In the polynomial representation, the $\mathbb{Z}_N$ qudit BB code in Sec.~\ref{sec2c} is specified by two Laurent polynomials $f=\sum_{i,j}c_{i,j}x^iy^j$ and $g=\sum_{i,j}d_{i,j}x^iy^j$. The stabilizer group of the BB code is described by the image of the stabilizer map $\partial$, which is defined as 
\begin{align}
    \partial= \begin{pmatrix}
        \bar{f}&0 \\
        \bar{g}&0 \\
        0&g\\
        0&-f
        \end{pmatrix}=\begin{pmatrix}
            \partial_A&0 \\
            0&\partial_B
            \end{pmatrix}
        \label{eq:pial}.
\end{align}
In particular, the first and second columns of $\partial$ can generate all $X$-type stabilizers $A_{(i,j)}$ and $Z$-type stabilizers $B_{(i,j)}$ by translation actions, respectively. The symplectic bilinear form of the first and second columns of $\partial$ equals $0$, so all stabilizers of BB codes commute with each other, and BB codes are well defined.
  
Moreover, all excitation patterns created by local operators of BB codes are described by the image of an excitation map $\eps$. As in Ref.~\cite{Haah2013}, the excitation map of the BB code specified by Laurent polynomials $f$ and $g$ is expressed as
\begin{align}
    \eps=\begin{pmatrix}
        0&0&f&g\\
        -\bar{g}&\bar{f}&0&0
        \end{pmatrix}=\begin{pmatrix}
            0&\eps_A\\
            \eps_B&0
            \end{pmatrix}.
        \label{eq:eps}
\end{align}
All stabilizers commute with the Hamiltonian, ensuring that $\eps \circ\partial = 0$. Formally, this means that the stabilizer map $\partial$ and the excitation map $\eps$ form the following chain complex:  
\begin{equation}
    R^2\xrightarrow{\partial}R^4\xrightarrow{\eps}R^2.\label{eq:cc}
\end{equation}

For convenience, we introduce the following notation for BB codes.
\begin{defn}
    The $\mathbb{Z}_N$ qudit BB code specified by Laurent polynomials $f$ and $g\in R$ is referred to as the $\mathbf{BB}_N(f,g)$ code.
\end{defn}

Finally, we give the explicit polynomials specifying the WCF model and the DCY model as concrete examples. Namely, the WCF model in Sec.~\ref{sec2a} is specified by
\begin{equation}
    f=x-a,\quad g=y-a,\label{eq:poly1}
\end{equation} 
while the DCY model in Sec.~\ref{sec2b} is specified by
\begin{equation}
    f=x^2-(a+1)x+a,\quad g=y^2-(a+1)y+a.\label{eq:poly2}
\end{equation}

\section{Topological order of BB codes\label{sec4}}
In this section, we employ the polynomial representation to characterize the topological order of BB codes. Notably, we prove that key topological features, including the topological condition and the anyon fusion rules of the $\mathbb{Z}_N$ qudit BB code, can be systematically obtained through the $\mathbb{Z}_p$ qudit counterparts of this code for all prime factors $p$ of $N$.
\subsection{Topological condition\label{sec4a}}
A gapped quantum many-body Hamiltonian is topological when it exhibits topological order, i.e., its degenerate ground states cannot be distinguished by any local operator in the thermodynamic limit. In stabilizer code models like BB codes, this requires that any operator supported on a finite region of an infinite lattice that commutes with every stabilizer must, up to a phase, belong to the stabilizer group.
  
In the polynomial representation, the stabilizer group of BB codes is characterized by $\im\partial$. Since $\eps$ is the excitation map, the kernel of $\eps$, denoted as $\ker\eps$, represents all finite-support operators that cannot create excitations, and hence commute with all stabilizers. Therefore, BB codes exhibit topological order exactly when
\begin{equation}
    \ker\eps=\im\partial.\label{eq:topo_condition}
\end{equation}
In other words, the chain complex Eq.~\eqref{eq:cc} is exact. Given the explicit form of $\partial$ and $\eps$ in Eqs.~\eqref{eq:pial}~and~\eqref{eq:eps}, respectively, the $\mathbf{BB}_N(f,g)$ code is topological if and only if $\ker\eps_A=\im\partial_B$, or equivalently, $\ker\eps_B=\im\partial_A$.
  
As described by the below theorem, the topological condition of the $\mathbf{BB}_{N}(f,g)$ code can be checked by considering the $\mathbb{Z}_p$ counterparts of this code for all $p\mid N$, namely $\mathbf{BB}_{p}(f,g)$ code, whose polynomials is defined over $R_p=\mathbb{Z}_p[x^{\pm},y^{\pm}]\cong R/pR$.
\begin{thm}
    The $\mathbf{BB}_N(f,g)$ code is topological if and only if for all prime $p\mid N$, the $\mathbf{BB}_p(f,g)$ code is topological.\label{lem:topo_prime_factors}
\end{thm}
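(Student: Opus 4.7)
The plan is to separate the $\mathbb{Z}_{N}$ problem into independent prime-power pieces via the Chinese Remainder Theorem, and then handle each prime power by induction on $k$. Writing $N=\prod_{i}p_{i}^{k_{i}}$, the ring isomorphism $\mathbb{Z}_{N}\cong\prod_{i}\mathbb{Z}_{p_{i}^{k_{i}}}$ extends to $R_{N}\cong\prod_{i}R_{p_{i}^{k_{i}}}$, under which the chain complex $R_{N}^{2}\xrightarrow{\partial}R_{N}^{4}\xrightarrow{\eps}R_{N}^{2}$ splits as a direct product of the corresponding complexes over each $R_{p_{i}^{k_{i}}}$. Middle exactness of a product complex is equivalent to middle exactness of each factor, so the statement reduces to the prime-power case: $\mathbf{BB}_{p^{k}}(f,g)$ is topological if and only if $\mathbf{BB}_{p}(f,g)$ is.

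For the reverse direction ($\mathbf{BB}_{p}$ topological $\Rightarrow$ $\mathbf{BB}_{p^{k}}$ topological), I would proceed by induction on $k$, with the base case $k=1$ being trivial. Taking $\tilde{v}\in\ker\eps\subset R_{p^{k}}^{4}$, reduction modulo $p^{k-1}$ yields a cycle over $R_{p^{k-1}}$, to which the induction hypothesis produces $\bar{u}\in R_{p^{k-1}}^{2}$ with $\partial\bar{u}=\tilde{v}\bmod p^{k-1}$. Lifting $\bar{u}$ to $\tilde{u}\in R_{p^{k}}^{2}$ and writing $\tilde{v}-\partial\tilde{u}=p^{k-1}\tilde{w}$, the identity $\eps\partial=0$ forces $p^{k-1}\eps\tilde{w}=0$ in $R_{p^{k}}^{2}$, so $\eps\tilde{w}\equiv0\pmod p$. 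Applying the base hypothesis over $R_{p}$ then gives $\tilde{w}\equiv\partial\tilde{u}'\pmod p$ for some $\tilde{u}'$; multiplying by $p^{k-1}$ promotes this to $p^{k-1}\tilde{w}=\partial(p^{k-1}\tilde{u}')$, and assembling yields $\tilde{v}=\partial(\tilde{u}+p^{k-1}\tilde{u}')\in\im\partial$.

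For the forward direction ($\mathbf{BB}_{p^{k}}$ topological $\Rightarrow$ $\mathbf{BB}_{p}$ topological), I would establish the one-step descent $\mathbf{BB}_{p^{k}}\Rightarrow\mathbf{BB}_{p^{k-1}}$ and iterate down to $k=1$. Given $\bar{v}\in\ker\eps\subset R_{p^{k-1}}^{4}$, lift to $\tilde{v}\in R_{p^{k}}^{4}$; then $\eps\tilde{v}\in p^{k-1}R_{p^{k}}^{2}$ and hence $\eps(p\tilde{v})=0$. By hypothesis $p\tilde{v}=\partial\tilde{u}$ for some $\tilde{u}\in R_{p^{k}}^{2}$. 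Reducing modulo $p$ gives $\partial(\tilde{u}\bmod p)=0$ over $R_{p}$. The crucial input is now that $\ker\partial=0$ over $R_{p}$: since $R_{p}=\mathbb{F}_{p}[x^{\pm},y^{\pm}]$ is an integral domain, inspection of $\partial$ in Eq.~\eqref{eq:pial} shows its kernel vanishes whenever $(f,g)\not\equiv(0,0)\pmod p$. This forces $\tilde{u}\in pR_{p^{k}}^{2}$; writing $\tilde{u}=p\tilde{u}_{1}$ and canceling the $p$ yields $\tilde{v}\equiv\partial\tilde{u}_{1}\pmod{p^{k-1}}$, completing the descent.

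The main technical obstacle is precisely this injectivity step for $\partial$ over $R_{p}$ in the forward descent: since $R_{p^{k}}$ has zero divisors, a naive descent cannot by itself strip the factor of $p$ from $\tilde{u}$, and the integral-domain structure of $R_{p}$ is what saves the argument. A secondary issue is the degenerate case $(f,g)\equiv(0,0)\pmod p$ for some $p\mid N$, in which $\mathbf{BB}_{p}$ has no stabilizers and is trivially non-topological; here one directly exhibits an explicit element of $\ker\eps\setminus\im\partial$ over $R_{N}$ (for instance of the form $(p^{k-1},0,0,0)$) to confirm that $\mathbf{BB}_{N}$ also fails the topological condition, keeping the equivalence intact.
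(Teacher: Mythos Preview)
Your CRT reduction to prime powers and your inductive argument for the reverse direction ($\mathbf{BB}_{p}\Rightarrow\mathbf{BB}_{p^{k}}$) are correct and essentially coincide with the paper's iterative construction of $h=h^{(0)}+ph^{(1)}+\cdots+p^{k-1}h^{(k-1)}$. Your forward direction via iterated one-step descent $\mathbf{BB}_{p^{k}}\Rightarrow\mathbf{BB}_{p^{k-1}}$ is a valid alternative to the paper's direct jump from $p^{k}$ to $p$, and is arguably cleaner in that the only cancellation you need is injectivity of $\partial$ over the integral domain $R_{p}$.

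The genuine gap is in the degenerate case $(f,g)\equiv(0,0)\pmod{p}$. Your proposed witness $(p^{k-1},0,0,0)^{T}$ does not in general lie outside $\im\partial$. Take $p=2$, $k=3$, $f=2$, $g=4$ over $R_{8}=\mathbb{Z}_{8}[x^{\pm},y^{\pm}]$: then $\partial(2,0)^{T}=(4,8,0,0)^{T}=(4,0,0,0)^{T}$, so your witness is actually a boundary (yet $\mathbf{BB}_{8}(2,4)$ is indeed non-topological, witnessed for instance by $(0,0,2,-1)^{T}$). No element of the simple form $(p^{j},0,0,0)^{T}$ works uniformly once the $p$-adic orders of $f$ and $g$ differ.

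The paper isolates exactly this step as a separate lemma: if $\mathbf{BB}_{p^{k}}(f,g)$ is topological then $p\nmid f$ or $p\nmid g$. The proof takes $p^{t}$ to be the maximal power dividing both, writes $f=p^{t}f_{0}$ and $g=p^{t}g_{0}$, and uses $(g_{0},-f_{0})^{T}\in\ker\eps_{A}$ as the witness. Showing it is not a boundary requires knowing that at least one of $f_{0},g_{0}$ is a non-zero-divisor in $\mathbb{Z}_{p^{k}}[x^{\pm},y^{\pm}]$, which the paper obtains from McCoy's theorem (equivalently: a Laurent polynomial over $\mathbb{Z}_{p^{k}}$ with at least one unit coefficient is a non-zero-divisor). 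This is the missing algebraic input in your ``secondary issue'' paragraph; once you supply it, your descent argument goes through.
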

We present the detailed proof in Appendix~\ref{app1}. Applying Theorem~\ref{lem:topo_prime_factors} leads to a convenient criterion for topological order, presented in the following corollary.
\begin{cor}
    The $\mathbf{BB}_N(f,g)$ code is topological if and only if the quotient module
    \begin{equation}
        \mathcal{Q}=\frac{R}{(f,g)}
    \end{equation}
    is finite, where $(f,g)$ denotes the ideal generated by polynomials $f$ and $g$.\label{cor:topo_finite}
\end{cor}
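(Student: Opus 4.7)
The plan is to layer this corollary on top of Theorem~\ref{lem:topo_prime_factors}, which already reduces the topological condition for $\mathbf{BB}_N(f,g)$ to the analogous condition for each $\mathbf{BB}_p(f,g)$ with $p \mid N$ prime. It therefore suffices to establish two intermediate claims: (a) for each prime $p$, the code $\mathbf{BB}_p(f,g)$ is topological iff the quotient $R_p/(f,g)$ is finite; and (b) $\mathcal{Q} = R/(f,g)$ is finite iff $R_p/(f,g)$ is finite for every prime $p \mid N$. Combining (a), (b), and Theorem~\ref{lem:topo_prime_factors} immediately yields the corollary.

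For step (a), I would work in $R_p = \mathbb{Z}_p[x^{\pm},y^{\pm}]$, a two-dimensional Cohen--Macaulay UFD over the finite field $\mathbb{Z}_p$. The block-diagonal form of $\partial$ and $\eps$ in Eqs.~\eqref{eq:pial} and~\eqref{eq:eps} reduces the topological condition~\eqref{eq:topo_condition} to the exactness of the Koszul-type sequence $R_p \xrightarrow{(g,-f)^T} R_p^2 \xrightarrow{(f,g)} R_p$, i.e., to the statement that $(g,-f)$ generates the full syzygy module of $(f,g)$. A standard Koszul-complex argument over a Cohen--Macaulay ring shows this is equivalent to $f, g$ forming a regular sequence in $R_p$, which in the UFD $R_p$ amounts to $\gcd(f,g)$ being a unit; by Krull's height theorem, this in turn is equivalent to $R_p/(f,g)$ being zero-dimensional as a ring, and hence finite since it is a finitely generated algebra over the finite field $\mathbb{Z}_p$.

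For step (b), one direction is immediate: the surjection $R \twoheadrightarrow R_p$ descends to $R/(f,g) \twoheadrightarrow R_p/(f,g)$, so finiteness of $\mathcal{Q}$ forces finiteness of each $R_p/(f,g)$. For the converse, write $N = \prod_i p_i^{k_i}$; by the Chinese Remainder Theorem, $R/(f,g) \cong \prod_i R_{p_i^{k_i}}/(f,g)$, reducing to the prime-power case. Setting $M = R_{p^k}/(f,g)$ and considering the $p$-adic filtration $M \supset pM \supset \cdots \supset p^k M = 0$, multiplication by $p^i$ yields a surjection $M/pM \twoheadrightarrow p^i M / p^{i+1} M$ for each $0 \le i < k$; since $M/pM \cong R_p/(f,g)$ is finite by hypothesis, every graded piece, and hence $M$ itself, is finite. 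The substantive obstacle lies in step (a), which is what actually connects chain-complex exactness to the algebraic finiteness statement and relies on the two-dimensional Cohen--Macaulay structure of $R_p$; step (b) is then a formal consequence of CRT and a Nakayama-style filtration argument.
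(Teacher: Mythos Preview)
Your proof is correct and follows the same high-level strategy as the paper: reduce to the prime case via Theorem~\ref{lem:topo_prime_factors}, establish step~(a) for each prime, and separately verify in step~(b) that finiteness of $\mathcal{Q}$ over $\mathbb{Z}_N$ is equivalent to finiteness of each $R_p/(f,g)$. The difference is in the commutative algebra tool used for step~(a). The paper invokes the Buchsbaum--Eisenbud exactness criterion to conclude that exactness of the complex in Eq.~\eqref{eq:chain_complex_ptB} is equivalent to $\operatorname{codim}(f_p,g_p)=2$, and then uses the dimension formula $\dim(R_p/I)+\operatorname{codim}(I)=\dim R_p$ to get Krull dimension zero. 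You instead recognize the relevant sequence as the Koszul complex on $(f,g)$, reduce exactness to $(f,g)$ being a regular sequence, and use the UFD structure of $R_p$ to rephrase this as $\gcd(f,g)$ being a unit before passing to height via Krull's theorem. Your route is more elementary and self-contained, avoiding the general Buchsbaum--Eisenbud machinery, at the cost of the intermediate $\gcd$ detour; the paper's route is more compact but leans on a heavier black box. For step~(b), the paper merely cites a Nakayama-type result, while you spell out the CRT decomposition and the $p$-adic filtration explicitly, which makes your argument more transparent.
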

The proof of this criterion is offered in Appendix~\ref{app1}. As we discuss in the following section, if the BB code is topological, $\mathcal{Q}$ describes the anyon types of the electric charge sector of this model, implying the total number of anyon types is finite. This result aligns with the general mobility theorem for topological excitations of translationally invariant 2D stabilizer codes in Ref.~\cite{Ruba2024}.
\subsection{Anyon types and fusion rules \label{sec4b}}
In a topologically ordered phase, a fundamental problem is to classify its anyonic excitation patterns. Two anyonic excitations belong to the same type if they can be transformed into each other by local operators, and the fusion rules specify how combining two anyons can yield a composite excitation with a possibly different anyon type. In this subsection, we investigate how to obtain the anyon types and fusion rules of the BB code via its polynomial representation.
  
As we show in Appendix~\ref{app2}, if the BB code is topological, every element in $R^2$ specifies a legitimate excitation pattern,  which means it can be created by either a local or an infinitely extended operator. Hence, for the $\mathbf{BB}_N(f,g)$ code, all anyon types and their fusion rules are specified by the quotient module
\begin{equation}
    \mathcal{C}=\frac{R^{2}}{\mathrm{im}\eps}=\mathcal{Q}\oplus\overline{\mathcal{Q}},\label{eq:anyon}
\end{equation}
with $\mathcal{Q}$ and $\overline{\mathcal{Q}}$ defined as:
\begin{equation}
    \mathcal{Q}=\frac{R}{\left(f,g\right)},\qquad\overline{\mathcal{Q}}=\frac{R}{(\bar{f},\bar{g})},
\end{equation}
since $\im\eps$ captures precisely those excitation patterns that are realizable by local operators, and two elements $\omega,\omega'\in R^2$ represent two excitation patterns that can transform into each other by local operators if and only if they are equivalent in $\mathcal{C}$. The sectors $\mathcal{Q}$ and $\overline{\mathcal{Q}}$ in Eq.~\eqref{eq:anyon} specify the $X$ and $Z$-type excitations of BB codes. Physically, they correspond to the electric and magnetic charges of the model in the emergent gauge theory, respectively.
  
Like the topological condition, the anyon types and fusion rules of the $\mathbf{BB}_N(f,g)$ code are also determined by those of the $\mathbf{BB}_p(f,g)$ codes for all prime factors $p$ of $N$. Explicitly,

\begin{thm}
Let $N=\prod_{i=1}^n p_i^{k_i}$ be the prime factorization of $N$.
Then, the fusion rules of anyons in the $\mathbf{BB}_N(f,g)$ code, which correspond to the additive group structure of $\mathcal{C}$, is given by
\begin{equation}
\mathcal{C} \cong \bigoplus_{i=1}^{n} \mathbb{Z}_{p_i^{k_i}}^{2Q_i}, \label{eq:anyon_relation}
\end{equation}
where, for each $i$, the integer $Q_i$ is determined by
\begin{equation}
Q_i = \dim_{\mathbb{Z}_{p_i}}\left(\frac{R_{p_i}}{(f,g)}\right)
= \frac{1}{2}\dim_{\mathbb{Z}_{p_i}}(\mathcal{C}_{p_i}),
\end{equation}
and $\mathcal{C}_{p_i}$ denotes the anyon module of the $\mathbf{BB}_{p_i}(f,g)$ code.
\label{thm:anyon_type_relation}
\end{thm}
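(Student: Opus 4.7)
My strategy is to split the problem one prime at a time via the Chinese Remainder Theorem and then show that each prime-power piece is a \emph{free} module of the expected rank. The ring decomposition $\mathbb{Z}_N\cong\bigoplus_i\mathbb{Z}_{p_i^{k_i}}$ extends to $R\cong\bigoplus_i R_{p_i^{k_i}}$, so
\begin{equation}
\mathcal{Q}\cong\bigoplus_{i=1}^n\frac{R_{p_i^{k_i}}}{(f,g)},\qquad\overline{\mathcal{Q}}\cong\bigoplus_{i=1}^n\frac{R_{p_i^{k_i}}}{(\bar f,\bar g)}.
\end{equation}
It therefore suffices to determine the $\mathbb{Z}_{p^k}$-module structure of $M_k:=R_{p^k}/(f,g)$ for each prime power $p^k$ dividing $N$.

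I would prove by induction on $k$ that $M_k\cong(\mathbb{Z}_{p^k})^Q$, where $Q=\dim_{\mathbb{Z}_p}(R_p/(f,g))$; the base $k=1$ is immediate. The inductive step rests on the short exact sequence
\begin{equation}
0\longrightarrow R_p/(f,g)\xrightarrow{\;\cdot\,p^{k-1}\;}M_k\longrightarrow M_{k-1}\longrightarrow 0,
\end{equation}
in which the first arrow sends $r$ to the class of $p^{k-1}\tilde r$ (for any lift $\tilde r\in R_{p^k}$) and the second is reduction modulo $p^{k-1}$. Granting this sequence, $|M_k|=p^{kQ}$; since also $M_k/pM_k\cong R_p/(f,g)\cong(\mathbb{Z}_p)^Q$, Nakayama's lemma (applicable because $\mathbb{Z}_{p^k}$ is local with maximal ideal $p\mathbb{Z}_{p^k}$) shows that $M_k$ is generated by $Q$ elements over $\mathbb{Z}_{p^k}$, so the induced surjection $(\mathbb{Z}_{p^k})^Q\twoheadrightarrow M_k$ is forced to be an isomorphism by order comparison.

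The main obstacle, and the only place where the topological hypothesis genuinely enters, is the injectivity of the first map above --- equivalently, the ideal identity $I\cap p^{k-1}R_{p^k}=p^{k-1}I$ for $I=(f,g)$. One inclusion is immediate; for the other, given $h=af+bg$ with $h$ divisible by $p^{k-1}$, reducing modulo $p^{k-1}$ exhibits $(\bar a,\bar b)$ as a first Koszul syzygy of $(f,g)$ in $R_{p^{k-1}}$. By Theorem~\ref{lem:topo_prime_factors} applied with $N=p^{k-1}$, the $\mathbf{BB}_{p^{k-1}}(f,g)$ code is topological, so every such syzygy is a multiple of the trivial syzygy $(g,-f)$; lifting this relation back to $R_{p^k}$ rewrites $h$ as an element of $p^{k-1}I$. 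This vanishing-of-first-Koszul-homology input is what distinguishes the proof from a purely abelian-group count.

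Finally, the involution of $R$ sending $x\mapsto x^{-1}$ and $y\mapsto y^{-1}$ is a ring automorphism carrying $(f,g)$ to $(\bar f,\bar g)$, so the same induction yields $R_{p_i^{k_i}}/(\bar f,\bar g)\cong(\mathbb{Z}_{p_i^{k_i}})^{Q_i}$ with the \emph{same} rank $Q_i$. Summing the $\mathcal{Q}$ and $\overline{\mathcal{Q}}$ contributions via Eq.~\eqref{eq:anyon} then gives $\mathcal{C}\cong\bigoplus_i\mathbb{Z}_{p_i^{k_i}}^{2Q_i}$, and $Q_i=\tfrac12\dim_{\mathbb{Z}_{p_i}}\mathcal{C}_{p_i}$ follows immediately.
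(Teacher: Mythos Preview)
Your proof is correct and the overall logic is sound. Compared with the paper's argument in Appendix~\ref{app2.5}, you take a more elementary route. The paper establishes freeness of $M_k=R_{p^k}/(f,g)$ by computing $\operatorname{Tor}_1^{\mathbb{Z}_{p^k}}(\mathbb{Z}_p,M_k)$ directly---showing that $\operatorname{Ann}_{M_k}(p)=p^{k-1}M_k$---and then invoking the black-box criterion (Stacks Project, Tag 051H) that $\operatorname{Tor}_1=0$, together with a nilpotent ideal and a basis modulo that ideal, forces freeness. You instead run an explicit induction on $k$, building the short exact sequence $0\to M_1\to M_k\to M_{k-1}\to 0$ and finishing with Nakayama plus an order count. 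Both proofs rest on the same essential input---the Koszul exactness $\ker(f\;\,g)=\operatorname{im}(g,-f)^T$ supplied by the topological hypothesis via Theorem~\ref{lem:topo_prime_factors}---but deploy it differently: the paper applies it once over $R_{p^k}$ to control $p$-torsion, whereas you apply it over $R_{p^{k-1}}$ to obtain the ideal identity $I\cap p^{k-1}R_{p^k}=p^{k-1}I$ that makes your exact sequence injective on the left. Your approach has the virtue of being self-contained, avoiding the Tor formalism and the cited freeness lemma entirely; the paper's approach is more conceptual and would transport more readily to settings where one has direct access to $\operatorname{Tor}_1$ without an obvious inductive filtration.
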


In what follows, we refer to $Q_i$ as the \emph{topological index}. The proof of Theorem~\ref{thm:anyon_type_relation} involves basic homological algebra and is detailed in Appendix~\ref{app2.5}. Theorem~\ref{thm:anyon_type_relation} provides a convenient and efficient way to evaluate the anyon types and their fusion rules of BB codes using results from prime qudit dimensions. Specifically, this enables us to extend the algebraic-geometric approach for determining fusion rules based on the BKK theorem.  This method was previously limited to qubit BB codes~\cite{Chen2025}; it can now be applied to composite qudit dimensions. See Secs.~\ref{sec6c} and \ref{sec6d} for two concrete applications of Theorem~\ref{thm:anyon_type_relation}.

\section{Quasifractonic behavior and SET orders of BB Codes\label{sec5}}
In this section, we analyze anyon hopping and the GSD, revealing how their distinctive features emerge from the interplay between topological order and translation symmetry.

\subsection{Quasifractonic mobility of anyons}

\begin{figure}[t]
    \centering
    \includegraphics[width=\linewidth]{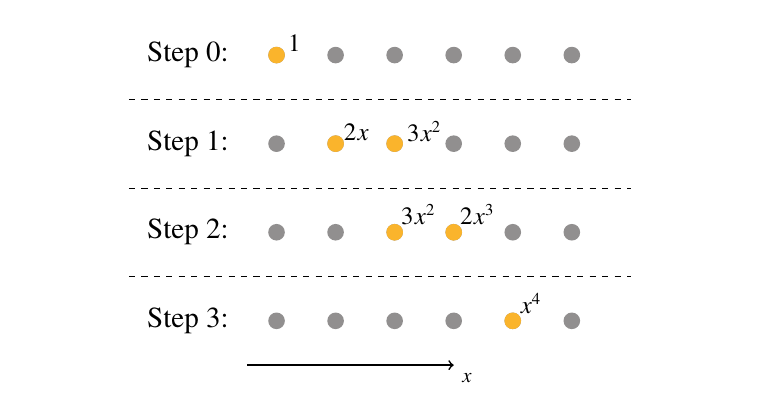}
    \caption{The quasifractonic behavior of the DCY model. Here $N=4$ and $a=1$. This figure depicts the process of moving excitations in the DCY model. The gray dots represent $Z$-type stabilizers at each position, where we only draw the stabilizers along the $\hat{x}$ directions in each step. The yellow dots represent the excitations, denoted by the monomials next to the dots. In step 1, the anyonic excitations are split by moving. After step 3, the original excitation is moved from $(0,0)$ to $(4,0)$.}
    \label{fig:quasi_fracton}
\end{figure}

Like the qubit BB codes~\cite{Chen2025}, anyons in $\mathbb{Z}_N$ BB codes typically cannot be moved smoothly by local operators.
  
To illustrate this, we consider a concrete example: the DCY model, and for simplicity assume that $a$ and $N$ are coprime [see Eq.~\eqref{eq:poly2}]. A complete treatment of anyon mobility in this model is presented in Sec.~\ref{sec6b}.

We focus on $Z$-type excitations, which can be labeled as an element in $R$. Initially, assume that the lowest energy excitation is located at the origin, denoted as $1$. Annihilating this excitation by performing a qudit operation $X_{(2,{1}/{2})}^{-1}$ will create two new excitations: one located at $(1,0)$ and the other located at $(2,0)$, and these excitations are collectively denoted by $(a+1)x-ax^2$. This indicates that moving anyons by performing local operations may cause them to split, in contrast to the toric code, where they can be moved freely without creating additional excitations. This process is depicted in step 1 in Fig.~\ref{fig:quasi_fracton}.
  
This behavior resembles that of fractons, which split into multiple pieces when one attempts to move them. However, unlike the fractons, anyon excitations still permit one-to-one hopping over certain finite distances. We refer to this behavior as \textit{quasifractonic mobility}. As an explicit example, consider the DCY model with $N=4$ and $a=1$. In this case, the above process splits the original excitation into two anyons denoted as $3x^2+2x$. However, as depicted in Fig.~\ref{fig:quasi_fracton}, performing operations $X_{(3,{1}/{2})}^{2}$ and $X_{(4,{1}/{2})}$ successively annihilates all other excitations and creates a new excitation $x^4$, indicating that the original anyon is moved from $(0,0)$ to $(4,0)$.

\subsection{SET orders and mobility}
\label{sec5b}

A central question arises: Is it universally true for topological qudit BB codes that their topological excitations---though fractonlike under local moves---always permit one-to-one hopping over sufficiently large but finite distances? We answer this question in the affirmative below. Apparent discrepancies in the literature (e.g., the DCY model with non-coprime parameters $N=12$ and $a=2$~\cite{Delfino2023_2}) often stem from computational oversights in specific cases.

A general resolution of this mobility puzzle is naturally formulated in terms of SET order, which captures the interplay between symmetry and topological order. Here, the relevant symmetry is lattice translation, which is often essential for understanding fracton and fracton-like phenomena~\cite{Williamson2018, Pai2019, Dua2019}. The obstruction to elementary hopping [e.g., $(0,0)\to(1,0)$] arises because translations act nontrivially on anyons by permuting anyon types~\cite{Barkeshli2019, Chen2025}. 
Generically, moving an anyon $\eta\in\mathcal{C}$ by a lattice vector $(i,j)$ is forbidden if and only if $\eta$ and its translate $x^i y^j \eta$ represent distinct anyon types.

Thus, to identify all lattice displacements that admit one-to-one anyon transport, we introduce the mobility group $\Lambda_{\mathcal{C}}$. It is a subgroup of the translation group $\Lambda$, defined by
\begin{align}
    \Lambda_{\mathcal{C}}&=\{\lambda\in\Lambda|\lambda \eta=\eta,\forall \eta\in\mathcal{C}\}\notag\\
    &=\{\lambda\in\Lambda|\lambda-1\in(f,g)_R\},\label{eq:mobility_group}
\end{align}
where $(f,g)_R$ denotes the ideal of $R$ generated by polynomials $f$ and $g$.
Physically, $\Lambda_{\mathcal{C}}$ consists of precisely those translations that leave all anyon types invariant; consequently, it specifies exactly which lattice displacements admit one-to-one anyon transport implemented by string operators.
  
As a key aspect of $\Lambda_\mathcal{C}$, we define the anyon periodicities along the $\hat{x}$ and $\hat{y}$ directions as the minimal translation distances under which all anyon types remain invariant. These periodicities, denoted by $l_x$ and $l_y$, are given by
\begin{equation}
\begin{aligned}
l_x &=\min\{\alpha\in\mathbb{Z}_{>0} \mid x^{\alpha}-1\in (f,g)_R\},\\
l_y &=\min\{\beta\in\mathbb{Z}_{>0} \mid y^{\beta}-1\in (f,g)_R\}.
\end{aligned}
\label{eq:periodicity_general}
\end{equation}
Crucially, $l_x$ and $l_y$ always exist and are finite for any topological BB code, regardless of whether it is a qubit or qudit system. This follows from Corollary~\ref{cor:topo_finite}, which establishes that the set of anyon types $\mathcal{C}$ is finite. Consequently, $\mathcal{C}$ forms a finite $\Lambda$-set under lattice translations, ensuring that every translation orbit must close. This guarantees the existence of minimal positive integers $l_x$ and $l_y$ satisfying Eq.~\eqref{eq:periodicity_general}. This finiteness ensures that all excitations in topological BB codes---while appearing fracton-like at length scales below $l_x$ and $l_y$---possess long-range one-to-one mobility, fundamentally distinguishing them from true fractons.

For HGP codes, $l_x$ and $l_y$ completely characterize $\Lambda_{\mathcal{C}}$. In this case, the polynomials specifying $f$ and $g$ become univariate. As a result, the mobility group $\Lambda_C$ takes the form
\begin{equation}
    \Lambda_{\mathcal{C}}=\langle x^{l_x},y^{l_y}\rangle\cong l_x\mathbb{Z}\oplus l_y\mathbb{Z},\label{eq:mobility group with univariate polynomials}
\end{equation}
where $\langle x^{l_x},y^{l_y}\rangle$ denotes the subgroup of the translation group generated by $x^{l_x}$ and $y^{l_y}\in\Lambda$. For general BB codes with multivariate defining polynomials, $\langle x^{l_x},y^{l_y}\rangle$ may only be a subgroup of $\Lambda_{\mathcal{C}}$.

We conclude this subsection by examining the DCY model with parameters $N=12$ and $a=2$ as an explicit illustration of the general framework. This specific case is of particular interest because the existing literature \cite{Delfino2023_2} has questioned the finiteness of its periodicities, $l_x$ and $l_y$. For this model, the defining polynomials are $f=(x-1)(x-2)$ and $g=(y-1)(y-2)$, as given in Eq.~\eqref{eq:poly2}. To analyze the mobility, it suffices to consider the charge sector $\mathcal{Q}$ of the anyon module $\mathcal{C}=\mathcal{Q}\oplus \overline{\mathcal{Q}}$:
\begin{equation}
\mathcal{Q} = \frac{\mathbb{Z}_{12}[x^{\pm}, y^{\pm}]}{(f, g)} \cong \frac{\mathbb{Z}_{3}[x^{\pm}, y^{\pm}]}{(f, g)} \oplus \frac{\mathbb{Z}_{4}[x^{\pm}, y^{\pm}]}{(f, g)},
\end{equation}
where we utilize the decomposition $\mathbb{Z}_{12} \cong \mathbb{Z}_3 \oplus \mathbb{Z}_4$. We note 
\begin{enumerate}
    \item In the $\mathbb{Z}_3$ sector, $2 \equiv -1 \pmod 3$, so $f = (x-1)(x+1) = x^2-1$ and $g = y^2-1$.
    \item In the $\mathbb{Z}_4$ sector, $(x-2)$ is invertible since $(x-2)(x+2)x^{-2} \equiv 1 \pmod 4$. Similarly, $(y-2)$ is also invertible.
\end{enumerate}
Consequently, the charge module simplifies to:
\begin{equation}
\mathcal{Q} \cong \frac{\mathbb{Z}_{3}[x^{\pm}, y^{\pm}]}{(x^2 - 1, y^2 - 1)} \oplus \frac{\mathbb{Z}_{4}[x^{\pm}, y^{\pm}]}{(x - 1, y - 1)}.
\end{equation}
Under this direct sum decomposition of $\mathcal{Q}$, a unit $X$-type excitation at the origin is represented by $1\choose 1$. Translation along $\hat{x}$ acts via multiplication by $x$, yielding the orbit (in the quotient)
\begin{equation}
{1 \choose 1}\xrightarrow{\cdot x}{x \choose x}\equiv{x \choose 1}\xrightarrow{\cdot x}{x^{2} \choose x^{2}}\equiv{1 \choose 1}.
\end{equation}
This confirms  $l_x = 2$, contradicting the claim of non-existence of finite periods in Ref.~\cite{Delfino2023_2}. A quick check confirms this: The $Z$-string operator $Z_{(\frac{1}{2},0)} Z_{(-\frac{1}{2},0)}^3 Z_{(-\frac{3}{2},0)}^6$ successfully moves a unit $X$-type excitation from $(0,0)$ to $(2,0)$. Section~\ref{sec6b} extends this analysis to arbitrary DCY models, further confirming the general result on the existence of one-to-one anyon hopping processes.

\subsection{Ground state degeneracy}
\label{sec3c}
Another feature of translation symmetry enriched topological orders, absent in conventional topological orders, is the possible dependence of GSD on the system size~\cite{Watanabe2023, Chen2025}. In this section, we describe how to evaluate the GSD of the BB code on a torus of size $L_x\times L_y$ and discuss its relation to the underlying SET order in detail.

On a torus, the periodic boundary condition gives $x^{L_x}=y^{L_y}=1$. Thus, we work over the ring
\begin{equation}
    R_L \coloneqq\frac{R}{(x^{L_x}-1,y^{L_y}-1)}.
\end{equation}
Under this periodic boundary condition, the chain complex describing the $\mathbf{BB}_N(f,g)$ code becomes
\begin{equation}
R_{L}^2\xrightarrow{\partial_{L}=\left(\begin{array}{cc}
\partial_{AL} & 0\\
0 & \partial_{BL}
\end{array}\right)}R_{L}^{4}\xrightarrow{\varepsilon_L=\left(\begin{array}{cc}
0 & \varepsilon_{AL}\\
\varepsilon_{BL} & 0
\end{array}\right)}R_{L}^2,\label{eq:ccpbc}
\end{equation} 
where the maps are induced from, and take the same form as, those in Eqs.~\eqref{eq:pial} and \eqref{eq:eps}. Physically, $\partial_L$ and $\eps_L$ describe the stabilizer and excitation map of the BB code, respectively.

The GSD of the $\mathbf{BB}_{N}(f,g)$ code is given by $\gsd =\tr P$, where $P$ is the projector onto the ground-state subspace. Explicitly, $P$ can be expressed as
\begin{equation}
    P = \frac{1}{|S|}\sum_{s\in S}s
    =\frac{1}{|S_A|}\sum_{s_A\in S_A}s_A\cdot\frac{1}{|S_B|}\sum_{s_B\in S_B}s_B,\label{eq:PaPb}
\end{equation}
Here, $S = \langle S_A, S_B \rangle$ is the stabilizer group, where $S_A$ and $S_B$ are the groups generated by the operators $\{A_{(i,j)}\}$ and $\{B_{(i,j)}\}$ [defined in Eq.~\eqref{eq:Hamiltonian}] over all sites. The evaluation of this trace yields:
\begin{equation}
    \gsd(L_x,L_y) =\tr P =\left|\frac{R_{L}}{\left(f,g\right)}\right|^{2}.\label{eq:gsd1}
\end{equation}
In the following, we may use the shorthand $\gsd$, rather than $\gsd(L_x,L_y)$ for brevity. The detailed derivation, with special attention to ensure its validity for $\mathbb{Z}_{N}$ even when $N$ is a composite number, is presented in Appendix~\ref{app4}.

Clearly, the GSD depends on the system's size. According to Eq.~\eqref{eq:anyon}~and~\eqref{eq:gsd1}, it cannot exceed its total number of anyon types, and reaches this maximum value if and only if the system size is an integer multiple of the periodicity given in Eq.~\eqref{eq:periodicity_general}.

This reflects the SET nature of BB codes: The translation symmetry acts nontrivially on the anyons, so unless the system's size matches anyon periodicities, not all anyon types are preserved when they traverse non-contractible loops on a torus. Consequently, non-contractible loop operators do not exist for some anyon types, which in turn reduces the GSD.

Furthermore, the relation between $\gsd$ and the periodicities is quantitatively characterized by 
\begin{equation}
    \gsd(L_x,L_y)=\gsd\left(\gcd(L_x,l_x), \gcd(L_y,l_y)\right),
\end{equation}
where $\gcd$ denotes the greatest common divisor. The proof is analogous to that for the binary special case in Ref.~\cite{Chen2025}, requiring only the replacement of dimension arguments with cardinality arguments.
\section{Analytical results for explicit models\label{sec6}}
This section analytically characterizes the essential topological properties of the models introduced in Sec.~\ref{sec2}. First, in Sec.~\ref{sec6a}, we simply and concisely reproduce the known topological properties of the WCF model~\cite{Watanabe2023}. Next, in Sec.~\ref{sec6b}, our analysis of the DCY model resolves its mobility puzzle and provides what we believe to be an accurate determination of its GSD. Finally, in Secs.~\ref {sec6c}~and~Sec.~\ref{sec6d}, we apply Theorem~\ref{thm:anyon_type_relation} to determine the fusion rules for the product and BB codes with the toric layout. These results demonstrate the broad applicability of our framework for characterizing topological order.
\subsection{WCF model}
\label{sec6a}
We first consider the WCF model specified by the Laurent polynomials $f=x-a$ and $g=y-a$. We provide a simple and concise derivation of its topological properties originally obtained in~\cite{Watanabe2023}.
  
\subsubsection{Anyon types and fusion rules}

We specify the anyon type of the WCF model by invoking Theorem~\ref{thm:anyon_type_relation}. For each prime factor $p_i$ of the qudit dimension $N=\prod_{i=1}^np_i^{k_i}$, the corresponding $\mathbb{Z}_{p_i}$ WCF model is in the trivial phase if $p_i$ divides $a$, but has fusion rules $\mathbb{Z}_{p_{i}}^2$ if $a$ and $p_i$ are coprime. Thus, the fusion rules of the WCF model are specified by
\begin{equation}
    \mathcal{C}_{\operatorname{WCF}}\cong\mathbb{Z}_{N_a}^2\label{eq:fstc},
\end{equation}
where $N_a$ is the largest factor of the qudit dimension $N$ that is coprime to $a$. For example, if we take $N=12$ and $a=10$, we have $N_a=3$.

Consequently, the WCF model is in a trivial phase when $N_a=1$ and in a topologically ordered phase when $N_a>1$. Furthermore, the total number of anyon types is $|\mathcal{C}_{\operatorname{WCF}}|=N_a^2$.
  
\subsubsection{Mobility}

According to Eq.~\eqref{eq:periodicity_general}, the anyon types remain unchanged under translations by $\alpha$ and $\beta$ along $\hat{x}$ and $\hat{y}$ directions exactly when
\begin{subequations}
\begin{align}
        x^{\alpha}-1=0\bmod (N,f),\label{eq:l_x}\\
        y^{\beta}-1=0\bmod(N,g), \label{eq:l_y}
\end{align}    
\end{subequations}
where $\alpha,\beta\in\mathbb{Z}$. This gives the \emph{mobility conditions} of anyons. The anyon periodicities $l_x$ and $l_y$ are given by the smallest positive $\alpha$ and $\beta$ satisfying Eqs.~\eqref{eq:l_x}~and~\eqref{eq:l_y}, respectively.
  
We focus first on the $\hat{x}$ direction. We decompose the base ring $\mathbb{Z}_N=\mathbb{Z}_{\frac{N}{N_a}}\oplus\mathbb{Z}_{N_a}$ using the Chinese remainder theorem. It is clear that the following equalities hold:
\begin{subequations}
\begin{align}
        x^{\alpha}-1&=0 \bmod \left(\frac{N}{N_a},f\right),\\
        x^{\alpha}-1&=(a^\alpha-1)\bmod (N_a,f).
\end{align}
\end{subequations}
Thus, the mobility condition in Eq.~\eqref{eq:l_x} becomes $a^{\alpha}=1\bmod N_a$ since $x^{\alpha}-1$ needs to vanish on both direct summands. Euler's theorem guarantees the existence of such an $\alpha$. We denote the smallest positive integer $\alpha$ satisfying $a^{\alpha}=1\bmod N_a$ by $\ord_{N_a}(a)$. It is also the order of $a$ in the multiplicative group of units of $\mathbb{Z}_{N_a}$. Thus, the periodicity along the $\hat{x}$ direction is $\ord_{N_a}\left(a\right)$. 
  
The periodicity along the $\hat{y}$ direction can be obtained analogously. We conclude that the periodicities of the WCF model  along the  $\hat{x}$ and $\hat{y}$ directions are
\begin{equation}
    l_x=l_y=l_{\operatorname{WCF}}=\ord_{N_a}(a),
\end{equation}
and the mobility of anyons can be described by the subgroup $\Lambda_{{\operatorname{WCF}}}$ of $\Lambda$ as follows
\begin{align}
  \Lambda_{{\operatorname{WCF}}}=\langle x^{l_{\operatorname{WCF}}},y^{l_{\operatorname{WCF}}}\rangle\cong l_{\operatorname{WCF}}\mathbb{Z}\oplus l_{\operatorname{WCF}}\mathbb{Z}.\label{eq:pgb}
\end{align}
  
\subsubsection{Ground state degeneracy}

The GSD of the WCF model equals $|R_L/(f,g)|^2$ according to Eq.~\eqref{eq:gsd1}. By decomposing $R_L$ into a direct sum, we get
\begin{equation}
    \frac{R_L}{(f,g)}\cong\frac{{\mathbb{Z}_{\frac{N}{N_a}}[x^{\pm},y^{\pm}]}}{(x^{L_x}-1,y^{L_y}-1,f,g)}\oplus\frac{{\mathbb{Z}_{N_a}[x^{\pm},y^{\pm}]}}{(x^{L_x}-1,y^{L_y}-1,f,g)}.\label{eq:tc_decom}
\end{equation}
We will analyze the two direct summands separately. For the first direct summand, we note that all prime factors of $N/N_a$ also divide $a$ by definition. This implies that $a$ is nilpotent in the base ring $\mathbb{Z}_{{N}/{N_a}}$. However, the relations $f=x-a$ and $g=y-a$ require $a$ to be a unit in the quotient ring, as it is identified with the invertible elements $x$ and $y$. Thus, this summand is trivial.

For the second direct summand, since $a$ and $N_a$ are coprime, we can perform the long division. Dividing polynomials $x^{L_x}-1$ and $y^{L_y}-1$ by $f$ and $g$, respectively, reduces the second direct summand to
\begin{equation}
\begin{aligned}
        &\frac{\mathbb{Z}_{N_a}[x^{\pm},y^{\pm}]}{(x^{L_x}-1,y^{L_y}-1,f,g)}=\frac{\mathbb{Z}_{N_a}[x,y]}{(f,g,a^{L_x}-1,a^{L_y}-1)}\\
        &\cong\mathbb{Z}_{\gcd(N_a,a^{L_x}-1,a^{L_y}-1)}.\label{eq:tc_2}
\end{aligned}
\end{equation}
  
Combining these results, we obtain the GSD of the WCF model: 
\begin{equation}
    \gsd_{\operatorname{WCF}}=\left|\frac{R_{L}}{\left(f,g\right)}\right|^{2}=\left[\gcd(N_a,a^{L_x}-1,a^{L_y}-1)\right]^2.
\end{equation}
\subsection{DCY model}
\label{sec6b}

\begin{figure}[t]
    \centering
    \includegraphics[width=\linewidth]{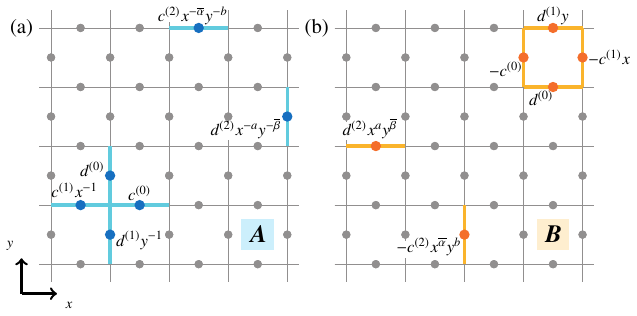}
    \caption{Illustration of the $X$-type (a) and the $Z$-type (b) stabilizers of the bivariate-bicycle code specified by $f=c^{(0)}+c^{(1)}x+c^{(2)}x^{\bar{\alpha}}y^{b}$ and $g=d^{(0)}+d^{(1)}y+d^{(2)}x^{a}y^{\bar{\beta}}$. On the lattice, a unit cell at the vertex $(i,j)$ consists of two qudits on the edges $(i,j+\frac{1}{2})$ and $(i+\frac{1}{2},j)$. The deep blue and orange circles on the lattice edges represent the constituent $X$ and $Z$ Pauli operators, respectively. Each operator is labeled by a monomial in which the exponents of $x$ and $y$ denote the unit cell coordinates, and the coefficient gives the operator's power.}
    \label{fig:bbcode}
\end{figure}

Now we turn to the DCY model specified by polynomials $f=(x-a)(x-1)$ and $g=(y-1)(y-a)$. We begin by determining its anyon types and their fusion rules. We then resolve its mobility puzzle by directly evaluating its mobility group. Finally, we present what we believe is the first accurate evaluation of the GSD of the DCY model using the polynomial framework.
  
\subsubsection{Anyon types and fusion rules}---In analogy with the WCF model, applying Theorem~\ref{thm:anyon_type_relation}, the fusion rules of the DCY model are given by
\begin{equation}
    \mathcal{C}_{\operatorname{DCY}}=\mathbb{Z}_{N_a}^6\oplus\mathbb{Z}_N^2.
\end{equation}
The DCY model is always in the topologically ordered phase, and the total number of anyon types is $|\mathcal{C}_{\operatorname{DCY}}|=N_a^6N^2$.
  
\subsubsection{Mobility}

Similar to the WCF model, the periodicities $l_x$ and $l_y$ are the smallest positive $\alpha$ and $\beta$ satisfying the mobility conditions in Eqs.~\eqref{eq:l_x}~and~\eqref{eq:l_y}. We begin by concentrating on the $\hat{x}$ direction. It is evident that
\begin{subequations}
\begin{align}
    x^{\alpha}-1&=0\bmod \left(\frac{N}{N_a},f\right),\label{period_BQa}\\
    \quad x^{\alpha}-1&= G_{\alpha}(x-1) \bmod (N_a,f).\label{period_BQb}
\end{align}
\end{subequations}
where 
\begin{equation}
G_{\alpha}\coloneqq\sum_{m=0}^{\alpha-1}a^m  
\end{equation}
denotes the finite geometric series in the variable $a$, up to degree $\alpha-1$. We decompose the base ring into $\mathbb{Z}_N=\mathbb{Z}_{\frac{N}{N_a}}\oplus\mathbb{Z}_{N_a}$. According to Eqs.~\eqref{period_BQa}~and~\eqref{period_BQb}, the mobility condition in Eq.~\eqref{eq:l_x} is trivial on the first direct summand, and reduces to $G_{\alpha}=0\bmod N_a$ on the second. This condition implies $a^{\alpha}=1\bmod N_a$. Therefore, the periodicity $l_x$ must be an integer multiple of $\ord_{N_a}(a)$. Additionally, using the property $a^{m+\ord_{N_a}(a)}=a^m\bmod N_a$, it can be shown that $nG_{\ord_{N_a}(a)}=G_{n\ord_{N_a}(a)}\bmod N_a$ for any positive integer $n$. Thus, basic algebraic manipulation gives
\begin{equation}
    l_{\operatorname{DCY}}=\ord_{N_a}\left(a\right)\frac{N_a}{\gcd(G_{\ord_{N_a}(a)},N_a)}.
\end{equation}
  
Therefore, the periodicities of the DCY model along $\hat{x}$ and $\hat{y}$ directions are 
\begin{equation}
    l_x=l_y=l_{\operatorname{DCY}}.
\end{equation}
and the mobility group $\Lambda_{{\operatorname{DCY}}}$ whose elements preserve all anyon types is
\begin{equation}
    \Lambda_{{\operatorname{DCY}}}=\langle x^{l_{\operatorname{DCY}}},y^{l_{\operatorname{DCY}}}\rangle\cong l_{\operatorname{DCY}}\mathbb{Z}\oplus l_{\operatorname{DCY}}\mathbb{Z}.
\end{equation}

The mobility group $\Lambda_{{\operatorname{DCY}}}$ gives a precise characterization of the DCY model's anyon mobility and resolves the anyon mobility puzzle of the DCY model. It never becomes trivial, so the one-to-one hoppings are always permitted, aligning with the argument in Sec.~\ref{sec5b}.

\subsubsection{Ground state degeneracy}

The GSD of the DCY model is given by $|R_L/(f,g)|^2$ with $f,g$ satisfying Eq.~\eqref{eq:poly2}. In this case, the quotient module $R_L/(f,g)$ still decomposes according to the direct sum in Eq.~\eqref{eq:tc_decom}, so we analyze the two summands separately.

For the first summand, as established previously, $a$ is nilpotent in the base ring. Thus, $x-a$ and $y-a$ are units in the first direct summand. It follows that
\begin{equation}
    \frac{\mathbb{Z}_{\frac{N}{N_a}}[x^{\pm},y^{\pm}]}{(x^{L_x}-1,y^{L_y}-1,f,g)}\cong \frac{\mathbb{Z}_{\frac{N}{N_a}}[x^{\pm},y^{\pm}]}{(x-1,y-1)}\cong\mathbb{Z}_{\frac{N}{N_a}}.\label{eq:quad_gsd_first_direct_sum}
\end{equation}
For the second summand, $N_a$ is coprime to $a$, so this allows us to reduce the second direct summand by the long division:
\begin{equation}
    \frac{\mathbb{Z}_{N_a}[x^{\pm},y^{\pm}]}{(x^{L_x}-1,y^{L_y}-1,f,g)}=\frac{\mathbb{Z}_{N_a}[x,y]}{(f,g,h_x,h_y)}\cong \frac{\mathbb{Z}_{N_a}[x]}{(f,h_x)}\otimes_{\mathbb{Z}_{N_a}}\frac{\mathbb{Z}_{N_a}[y]}{(g,h_y)},\label{eq:quad_gsd_second_direct_sum}
\end{equation}
where we define $h_x=G_{L_x}(x-1)$ and $h_y=G_{L_y}(y-1)$ for convenience. The analysis of each tensor product factor yields the following isomorphisms:
\begin{equation}
\begin{aligned}
    \frac{\mathbb{Z}_{N_a}[x]}{(f,h_x)}&\cong\mathbb{Z}_{N_a}\oplus\mathbb{Z}_{\gcd(N_a,G_{L_x})},\\
    \frac{\mathbb{Z}_{N_a}[y]}{(g,h_{y})}&\cong\mathbb{Z}_{N_a}\oplus\mathbb{Z}_{\gcd(N_a,G_{L_y})},\label{quad_gsd_univariate}
\end{aligned}
\end{equation}
Substituting Eq.~\eqref{quad_gsd_univariate} into Eq.~\eqref{eq:quad_gsd_second_direct_sum} reduces the second direct summand of Eq.~\eqref{eq:tc_decom} to
\begin{align}
    &\frac{\mathbb{Z}_{N_a}[x^{\pm},y^{\pm}]}{(x^{L_x}-1,y^{L_y}-1,f,g)}\cong \frac{\mathbb{Z}_{N_a}[x]}{(f,h_x)}\otimes_{\mathbb{Z}_{N_a}}\frac{\mathbb{Z}_{N_a}[y]}{(g,h_y)}\notag\\
    &\cong\mathbb{Z}_{N_a}\oplus\mathbb{Z}_{\gcd(N_a,G_{L_x})}\oplus\mathbb{Z}_{\gcd(N_a,G_{L_y})}\oplus\mathbb{Z}_{\gcd(N_a,G_{L_x},G_{L_y})}.
\end{align}
  
Consequently, the GSD of the DCY model is
\begin{equation}
    \gsd_{\operatorname{DCY}}=\left[N\gcd(N_a,G_{L_x})\gcd(N_a,G_{L_y})\gcd(N_a,G_{L_x},G_{L_y})\right]^2. \label{eq:DCY_GSD}
\end{equation}
We believe that this is an accurate calculation of the GSD of the DCY model. It corrects the previous result reported in Ref.~\cite{Delfino2023_2}.

As an explicit check, consider the case $N=2$ and $a=1$ (so $f=x^2-1$ and $g=y^2-1$) on a torus with $L_x=L_y=l$. In this case, the model reduces to four decoupled toric codes supported on the four sublattices distinguished by the parities of $(i,j)$: (i) $i$ odd, $j$ odd; (ii) $i$ even, $j$ odd; (iii) $i$ odd, $j$ even; and (iv) $i$ even, $j$ even. When $l$ is even, periodic boundary conditions preserve these sublattices, yielding $\operatorname{GSD}=4^4=256$; when $l$ is odd, periodic boundary conditions identify them, leaving a single toric code with $\operatorname{GSD}=4$. Our Eq.~\eqref{eq:DCY_GSD} correctly captures this system-size dependence. In contrast, Ref.~\cite{Delfino2023_2} [Eq.~(27) therein] predicts $\operatorname{GSD}=256$ for this example independent of $l$, and thus misses the odd-$l$ behavior.

\subsection{Hypergraph product codes\label{sec6c}}
We obtain the anyon types and their fusion rules for the $\mathbb{Z}_N$ qudit HGP code specified by univariate polynomials $f(x)$ and $g(y)$ in the following. According to Theorem~\ref{lem:topo_prime_factors}, the HGP code is not topological if and only if $fg=0$ and $f+g$ is not a monomial over $\mathbb{Z}_p[x^\pm,y^\pm]$ for all primes $p\mid N$.
For prime qudit dimensions, the fusion rules of HGP codes can be easily obtained by long division. Consequently, the fusion rules of HGP codes with an arbitrary qudit dimension $N$ can be directly obtained using Theorem~\ref{thm:anyon_type_relation}. The result is given by
\begin{equation}
    \mathcal{C}_{\operatorname{Pord}}\cong\bigoplus_{i=1}^n\mathbb{Z}_{p_i^{k_i}}^{2\operatorname{deg}_x(f_{p_i})\operatorname{deg}_y(g_{p_i})}.\label{eq:anyon_type_uni}
\end{equation}
Here, $f_{p_i}=f\bmod p_i$ and $g_{p_i}=g\bmod p_i$ denote the reductions of Laurent polynomials $f$ and $g$ in $\mathbb{Z}_{p_i}[x^{\pm},y^{\pm}]$. The degrees of Laurent polynomials, denoted as $\operatorname{deg}_x$ and $\operatorname{deg}_y$, are the differences between the highest and lowest exponents of the corresponding variable. For example, the $x$ degree of the polynomial $h=\bar{x}+1+x$ is 2. This result is consistent with the fusion rules for the WCF and DCY models derived in the preceding two subsections.
\subsection{BB codes with toric layouts\label{sec6d}}

\begin{figure}[t]
    \centering
    \includegraphics[width=\linewidth]{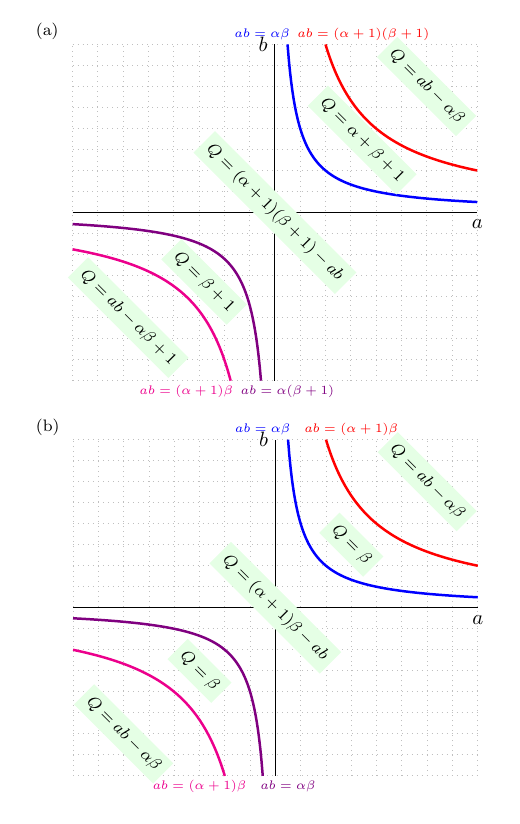}
    \caption{The topological index of the $\mathbf{BB}_p(f,g)$ code with $f$ and $g$ from Eq.~\eqref{eq:tc_layout} for two special cases: (a) all coefficients are non-zero modulo $p$. (b) only $d^{(1)}$ modulo $p$ is zero. For both cases, the parameter space is divided into five regimes by four hyperbolas, and the corresponding expression of the topological index in each regime is indicated. Parameters are restricted to $\beta\ge\alpha\ge0$ for (a), and to $\alpha,\beta\ge 0$ for (b), as all other values can be reduced to this range by suitable origin shifts and axis reflections.}
    \label{fig:topological_index}
\end{figure}

Here we show how the application of BKK theorem~\cite{Chen2025} can be extended—via Theorem~\ref{thm:anyon_type_relation}—to determine the anyon types and their fusion rules in $\mathbb{Z}_N$ BB codes. We focus on the $\mathbb{Z}_N$ BB codes with toric layouts, i.e., codes whose check operators largely mirror those of the toric code, except for the involvement of two additional, spatially separated qudits, as illustrated in Fig.~\ref{fig:bbcode}. These $\mathbf{BB}_N(f,g)$ codes are specified by polynomials of the following form
\begin{equation}
    f=c^{(0)}+c^{(1)}x+c^{(2)}x^{\bar{\alpha}}y^{b},\quad g=d^{(0)}+d^{(1)}y+d^{(2)}x^{a}y^{\bar{\beta}},\label{eq:tc_layout}
\end{equation}
where all coefficients are in $\mathbb{Z}_N$, and the parameters are denoted as $\bar{\alpha}=-\alpha$ and $\bar{\beta}=-\beta$ for convenience. In this case, the fusion rules are described by 
\begin{equation}
    \mathcal{C}_N\cong\bigoplus_{i=1}^n \mathbb{Z}_{p_i^{k_i}}^{2Q_i},\label{eq:tc_layout_0.5}
\end{equation}
where $Q_i$ denotes the topological index of the $\mathbf{BB}_{p_i}(f,g)$ code for each $p_i\mid N$. These topological indices can be obtained systematically from the Bernstein-Khovanskii-Kushnirenko theorem.
  
To illustrate the evaluation of the topological index, we use the $\mathbf{BB}_{p^k}(f,g)$ code as a concrete example, with $f$ and $g$ defined in Eq.~\eqref{eq:tc_layout}. In this case, the only prime factor is $p$, so the fusion rules for the anyon types are
\begin{equation}
    \mathcal{C}_{p^k}\cong \mathbb{Z}_{p^k}^{2Q},
\end{equation}
where $Q$ is the topological index of the $\mathbf{BB}_{p}(f,g)$ qudit BB code. In particular, Fig.~\ref{fig:topological_index}(a) depicts the topological index $Q$ when
\begin{equation}
    f_p=c^{(0)}_p+c^{(1)}_px+c^{(2)}_px^{\bar{\alpha}}y^{b},\quad g_p=d^{(0)}_p+d^{(1)}_py+d^{(2)}_px^{a}y^{\bar{\beta}}.\label{eq:tc_layout_case_1}
\end{equation}
Here, all coefficients in Eq.~\eqref{eq:tc_layout_case_1} are nonzero, and the subscript $p$ in each coefficient denotes modulo $p$. Likewise, Fig.~\ref{fig:topological_index}(b) shows the topological index $Q$ when
\begin{equation}
f_p=c^{(0)}_p+c^{(1)}_px+c^{(2)}_px^{\bar{\alpha}}y^{b},\quad g_p=d^{(0)}_p+d^{(2)}_px^{a}y^{\bar{\beta}}.\label{eq:tc_layout_case_2}
\end{equation}
Similarly, all coefficients in Eq.~\eqref{eq:tc_layout_case_2} are nonzero. For the general composite qudit dimension 
$N$, the anyon types are determined by applying this procedure to each of its prime factors, as prescribed by the direct sum in Eq.~\eqref{eq:tc_layout_0.5}.
\section{Computational algebraic methods\label{sec7}}
While analytically deriving the topological order and its symmetry-enriched properties of BB codes is challenging, these properties can be systematically obtained by computational algebraic methods using Gr\"{o}bner basis techniques of the polynomial rings defined over $\mathbb{Z}$~\cite{Moller1988, Kandri-Rody1988, Becker1993, Adams1994}. 

\begin{table*}[ht]
            \centering
            \begin{tabular}{c|c|c|c|c|c|c}
            \hline
            \hline
            $N$ & $f$ & $g$ & Fusion rules & Periodicity & Mobility group &GSD\\
            \hline
            4 & $1+x+y^2$ & $1+y+x^3$ & $\mathbb{Z}_4^8$ & $l_x=10,l_y=30$& $\langle x^{10}, x^{6}y^6\rangle,\langle y^{30}, x^{2}y^{12}\rangle$ & $1$\\
            \hline
            4 & $2x^3+y^2+y$ & $y^3+x^2+x$ &$\mathbb{Z}_4^4$&$l_x=3,l_y=1$& $\langle x^3,y\rangle$ &$4^4$\\
            \hline
            4 & $x^5y^{-1}+y+1$ & $y^5x^{-1}+x+1$ &$\mathbb{Z}_4^{48}$&$l_x=l_y=19530$& $\langle x^{19530},x^{12090}y^{930}\rangle,\langle y^{19530},x^{930}y^{12090}\rangle$&$1$\\
            \hline
            4 & $2+x+xy^2$ & $1+2y+x^4y^3$ &$\mathbb{Z}_4^{16}$&$l_x=16,l_y=4$& $\langle x^{16},x^8y^2\rangle,\langle y^4,x^8y^2\rangle$ &$4^3$\\
            \hline
            4 & $1+x+x^{-1}y^3$ & $1+y+2x^3y^{-2}$ &$\mathbb{Z}_4^4$&$l_x=3,l_y=1$& $\langle x^3,y\rangle$ &$4^4$\\
            \hline
            6 & $2x^3+y^2+y$ & $2y^3+x^2+x$ &$\mathbb{Z}_2^2\oplus\mathbb{Z}_3^{16}$&$l_x=l_y=104$& $\langle x^{104},x^{91}y^{13}\rangle,\langle y^{104},x^{13}y^{91}\rangle$ &$4$\\
            \hline
            8 & $x^3+y^2+y$ & $y^3+x^2+x$ &$\mathbb{Z}_8^{16}$&$l_x=l_y=48$&$\langle x^{48},x^{24}y^{24}\rangle,\langle y^{48},x^{24}y^{24}\rangle$&$8^9\times 2$\\
            \hline
            9 & $4+6x+x^{-1}y^{2}$ & $7+5y+2x^{-3}y^{-4}$& $\mathbb{Z}_9^{22}$& $l_x=39348,l_y=78696$ & $\langle x^{39348},x^{39342}y^{12}\rangle,\langle y^{78696},x^{6}y^{78684}\rangle$ & $9^4$\\
            \hline
            12 & $4x^3+y^2+y$ & $6y^5+x^2+x$ &$\mathbb{Z}_4^2\oplus\mathbb{Z}_3^{4}$&$l_x=2,l_y=8$& $\langle x^2,xy^4\rangle, \langle y^8,xy^4\rangle$ &16\\
            \hline
            12 & $3+4x+x^{-1}y^{-2}$ & $6+y+2x^2y$ &$\mathbb{Z}_3^{8}$&$l_x=2,l_y=4$& $\langle x^2,y^4\rangle$ & 1\\      
            \hline
            \hline
            \end{tabular}
            \caption{Topological properties of various $\mathbb{Z}_N$ bivariate-bicycle (BB) codes, including their defining polynomials $f$ and $g$, fusion rules, periodicities ($l_x, l_y$), mobility groups, and ground state degeneracies (GSD) computed for system size $L_x = L_y = 18$. Here, $\langle\bullet,\bullet\rangle$ denotes the subgroup of the translation group $\Lambda$ generated by a pair of elements in $\Lambda$.}
            \label{tab:topological properties of BB code}
\end{table*}

\subsection{Gr\"{o}bner bases over the ring of integers}
\subsubsection{Definition of Gr\"{o}bner bases}
In this section, we give a brief introduction to Gr\"{o}bner bases over $\mathbb{Z}$. For multivariate polynomials, the concept of degree for univariate polynomials is generalized to the monomial order~\footnote{Formally, the monomial order is a total well order $>$ of monomials satisfying that $m>1$ if $m\neq 1$, and that $m_1n>m_2n$ if $m_1>m_2$}. Once a monomial order is chosen, the leading term $\operatorname{Lt}(h)$ of a non-zero polynomial $h$  is the term whose monomial is largest with respect to this order. Over $\mathbb{Z}$, it is crucial to emphasize the distinction between the leading term, which includes the leading coefficient, and the leading monomial of a polynomial. For instance, given $h=2x^2+x\in\mathbb{Z}[x]$, its leading term is $\operatorname{Lt}(h)=2x^2$, whereas its leading monomial is $\operatorname{Lm}(h)=x^2$ with the leading coefficient $\operatorname{Lc}(h)=2$.

Given a monomial order, a Gr\"{o}bner basis $G$ of an ideal $I$ is a finite subset of $I$ such that
\begin{equation}
    \operatorname{Lt}(G)=\operatorname{Lt}(I),\label{eq:definition of Grobner basis}
\end{equation}
where $\operatorname{Lt}(G)$ and $\operatorname{Lt}(I)$ denote the ideals generated by the sets of leading terms of all elements in $G$ and $I$, respectively. 

To apply the Gr\"{o}bner basis techniques to the Laurent polynomial ring $R=\mathbb{Z}_N[x^{\pm},y^{\pm}]$, it is convenient to map the Laurent polynomial ring to a quotient of a polynomial ring using the following isomorphism
\begin{equation}
    \mathbb{Z}_N[x^{\pm},y^{\pm}]\cong\frac{\mathbb{Z}[x,y,t,u]}{(N,xt-1,yu-1)}.\label{eq:isomorphism_of_Laurent_ring}
\end{equation}
\subsubsection{Crucial properties of Gr\"{o}bner bases\label{sec7a2}}
In this section, we discuss some crucial properties of Gr\"{o}bner bases that are essential for determining the topological condition, fusion rules, periodicities, and the GSD of BB codes.

\emph{Ideal membership problems}---Like the field case, given a Gr\"{o}bner basis $G$ for any ideal $I$, the polynomial $h$ lies in $I$ if and only if it can be reduced to zero by $G$ through the division algorithm~\cite{Moller1988}, i.e.,
\begin{equation}
    h=0\bmod G.\label{eq:condition_of_ideal_membership}
\end{equation}

\emph{Normal forms and quotient rings}---Unlike over a field, even with a fixed monomial order and corresponding Gr\"{o}bner basis, the higher-order terms may not be fully reduced by the division algorithm over Noetherian rings, since the leading coefficients of the Gr\"{o}bner basis elements need not be invertible. Nevertheless, for the ring of integers, one can still define a set of unique normal forms (remainders) to represent the quotient by the ideal $I$~\cite{Zacharias1978}.
  
In the following, we present a practical and intuitive construction of the normal forms. For their existence and uniqueness, we refer the reader to~\cite{Adams1994}. For our purpose, we focus on the polynomial ring $\mathbb{Z}[x,y,t,u]$ in the following. Let $G$ be a Gr\"{o}bner basis for the ideal $I$. Then, the set of normal forms is given by
\begin{equation}
    \{r|r=\sum_{\mathbf{w}}c_{\mathbf{w}}\mathcal{X}^{\mathbf{w}},c_{\mathbf{w}}\in\mathbb{Z}_{N_{\mathbf{w}}}\},\label{eq:effective_coset_representative}
\end{equation} 
where $\mathcal{X}^{\mathbf{w}}=x^{\omega_x}y^{\omega_y}t^{\omega_t}u^{\omega_u}$ denotes a monomial in the polynomial ring $\mathbb{Z}[x,y,t,u]$ with $\mathbf{w}=(\omega_x,\omega_y,\omega_t,\omega_u)\in\mathbb{Z}_{\geq0}^4$. The range of the coefficient $c_{\mathbf{w}}$ is $\mathbb{Z}_{N_{\mathbf{w}}}$, with $N_{\mathbf{w}}$ defined as
\begin{equation}
    N_{\mathbf{w}}=\gcd(C_{\mathbf{w}}),\quad C_{\mathbf{w}}:=\{\operatorname{Lc}(h)|\mathcal{X}^{\mathbf{w}}=0\bmod\operatorname{Lm}(h),h\in G\},\label{eq:definition of G_M}
\end{equation}
and $\gcd(C_{\mathbf{w}})$ denotes the greatest common divisor of all elements in $C_{\mathbf{w}}$~\footnote{If $C_{\mathbf{w}}=\varnothing$, then we take $N_{\mathbf{w}}=N$.}. This is because the leading term $c\mathcal{X}^{\mathbf{w}}$ of a polynomial can be reduced by $G$ through the division algorithm if and only if its coefficient lies in the ideal $(C_{\mathbf{w}})_{\mathbb{Z}}$. 

It's crucial to note that although these normal forms of the coset are unique, the coefficients $c_{\mathbf{w}}$ in Eq.~\eqref{eq:effective_coset_representative} do not necessarily form independent cyclic components of the quotient ring's additive group. In other words, the additive group structure is not simply a direct sum of $\mathbb{Z}_{N_{\mathbf{w}}}$. Thus, unlike the field case, this set of normal forms does not immediately unveil the additive group structure of the quotient ring with respect to $I$.
\subsection{Topological properties of BB codes from Gr\"{o}bner basis}
In this section, we examine how to determine the essential topological properties of BB codes using the Gr\"{o}bner basis techniques.

As we discussed in Sec.~\ref{sec4a}, the $\mathbf{BB}_N(f,g)$ code exhibits topological order if and only if $\mathcal{Q}=R/(f,g)$ is finite. Eq.~\eqref{eq:isomorphism_of_Laurent_ring} enables us to consider the ideal $I=(N,f,g,xt-1,yu-1)$ over $\mathbb{Z}[x,y,t,u]$. Then, $\mathcal{Q}$ is identified with $\mathbb{Z}[x,y,t,u]/I$. Thus, after computing the Gr\"{o}bner basis $G$ of $I$, the topological condition of the $\mathbf{BB}_N(f,g)$ code is directly given by the finiteness of the set of normal forms in Eq.~\eqref{eq:effective_coset_representative}.
  
The anyon types and their fusion rules are specified by $\mathcal{C}=\mathcal{Q}\oplus\overline{\mathcal{Q}}$ in Sec.~\ref{sec4b}. Thus, we still identify $\mathcal{Q}$ with $\mathbb{Z}[x,y,t,u]/I$ as above. As discussed in Sec.~\ref{sec7a2}, the normal forms of $I$ with respect to a Gr\"{o}bner basis $G$ do not directly reveal the additive group of $\mathcal{Q}$, so the determination of the additive group of $\mathcal{Q}$ for the $\mathbf{BB}_N(f,g)$ code by Gr\"{o}bner basis is less straightforward than for binary BB codes, where the unique remainders form a basis of the vector space. Our strategy here is to invoke Theorem~\ref{thm:anyon_type_relation}: The total number of anyon types $|\mathcal{Q}|$ is directly obtained by counting the cardinality of the set of normal forms in Eq.~\eqref{eq:effective_coset_representative}. Then, by applying Theorem~\ref{thm:anyon_type_relation}, the fusion rules are directly given by comparing the prime factorizations of $|\mathcal{Q}|$ and the qudit dimension $N$. Specifically, according to Theorem~\ref{thm:anyon_type_relation}, if $N=\prod_{i=1}^{n}p_i^{k_i}$, the prime factorizations of $|\mathcal{Q}|$ is given by $|\mathcal{Q}|=\prod_{i=1}^{n}p_i^{j_i}$ with $k_i\mid j_i$ for some $j_i$, and the fusion rules are specified by
\begin{equation}
    \mathcal{C}\cong\bigoplus_{i=1}^n\mathbb{Z}_{p_i^{k_i}}^{2j_i/k_i}.
\end{equation}

Moreover, the anyon periodicity and the mobility group of the $\mathbf{BB}_N(f,g)$ code are given by Eqs.~\eqref{eq:mobility_group}~and~\eqref{eq:periodicity_general}, respectively. Determining them requires testing whether $x^{\alpha}y^{\beta}-1$ with $\alpha,\beta\in \mathbb{Z}$ lies in $(f,g)$ in $R$. This is an ideal membership problem. Using the isomorphism in Eq.~\eqref{eq:isomorphism_of_Laurent_ring}, it is equivalent to the ideal membership problem of $I=(N,f,g,xt-1,yu-1)$ in $\mathbb{Z}[x,y,t,u]$. Thus, the anyon periodicity and the mobility group are directly obtained by testing if $x^{\alpha}y^{\beta}-1$ reduces to zero in $I$ by the division algorithm with respect to a Gr\"{o}bner basis $G$.

Finally, the ground state degeneracy of the BB code is given by the size of $R_L/(f,g)$. This value is obtained by counting the number of normal forms with respect to the ideal $ I' = (N,f,g,x^{L_x}-1,y^{L_y}-1,xt-1,yu-1)$ in the polynomial ring. This can be computed via a Gr\"{o}bner basis, analogously to the analysis of the topological condition and fusion rules.

\subsection{Examples}
The Gr\"{o}bner basis of polynomial rings defined over $\mathbb{Z}$ can be computed by software like Macaulay 2~\cite{M2} and Singular~\cite{DGPS}. As an explicit example, we consider a $\mathbb{Z}_{12}$ qudits BB code with the toric layout defined by
\begin{equation}
f=1+x+6\bar{x}y^5,\quad g=1+y+4x^3\bar{y},
\end{equation}
  
To determine the topological condition and fusion rules of this code, by applying Eq.~\eqref{eq:isomorphism_of_Laurent_ring}, we identify $\mathcal{Q}$ with
\begin{equation}
    \mathcal{Q}\cong\frac{\mathbb{Z}[x,y,t,u]}{(12,xt-1,yu-1,1+x+6ty^5,1+y+4x^3u)}.
\end{equation}
We choose the lexicographic order with $u>t>y>x$ here. Macaulay 2 computation gives that the Gr\"{o}bner basis with respect to the ideal is
\begin{align}
    G=\{12, x-5,3y+3,y^2+y-4,t-5,u-y-4\}.
\end{align}
The leading terms of the elements in the Gr\"obner basis $G$ are $\operatorname{Lt}(G)=\{12,x,3y,y^2,t,u\}$. The corresponding leading monomials are $\operatorname{Lm}(G)=\{1,x,y,y^2,t,u\}$. This implies that any monomial divisible by $x$, $y^2$, $t$, or $u$ can be reduced by $G$. The only monomials that are not divisible by any of these are $1$ and $y$. These form the basis for our set of normal forms. 

We now determine the allowed integer coefficients for these basis monomials according to the rules in Eq. \eqref{eq:definition of G_M}: For the monomial $1$, the only relevant basis element is $12$, and its coefficient $c_0$ lies in $\mathbb{Z}_{12}$; For the monomial $y$, the relevant basis element is $3y+3$ and $12$, and its coefficient $c_y$ lies in $\mathbb{Z}_{\gcd(3,12)}=\mathbb{Z}_3$. Therefore, the set of normal forms is given by
\begin{equation}
    \{c_0+c_yy|c_0\in\mathbb{Z}_{12},c_{y}\in\mathbb{Z}_3\}.\label{eq:example}
\end{equation}
  
Since the set of normal forms is finite, the model exhibits topological order. Additionally, counting the cardinality of the set of normal forms in Eq.~\eqref{eq:example} yields that $|\mathcal{Q}|=36$. Then, by comparing the prime factorizations of $N$ and $|\mathcal{Q}|$,  the fusion rules of anyon types of the model are given by $\mathcal{C}\cong\mathbb{Z}_4^{2}\oplus \mathbb{Z}_3^{4}$.

The ideal membership tests in Macaulay 2 show that the periodicities along the $\hat{x}$ and $\hat{y}$ directions are $l_x=2$ and $l_y=8$, and that the mobility group is $\langle x^2,xy^4\rangle$, or equivalently $\langle y^8,xy^4\rangle$. Finally, applying the same Gr\"{o}bner basis methods to the ideal for a finite torus, we find that for the code of size $L_x=L_y=18$, the ground state degeneracy is 16.

The simple sample code of this example is provided in Appendix~\ref{app5}, which also demonstrates how the topological condition can be verified directly by definition. The topological properties of several other BB codes determined using this method are summarized in Table~\ref{tab:topological properties of BB code}.
\section{Conclusion\label{sec8}}
In this work, we explored several generalizations of the toric code model for qudits. By utilizing the polynomial representation, we systematically characterized essential topological properties, including the topological condition, the anyon types and their fusion rules of qudit BB codes. We established a convenient finiteness criterion for the topological condition of qudit BB codes. Our analysis revealed that both the topological condition and the fusion rules for general qudit dimensions can be understood by examining prime-dimensional BB codes. This insight allowed us to employ an algebraic-geometric framework, specifically leveraging the BKK theorem, to systematically evaluate the anyon types and their fusion rules. Additionally, we examined that unconventional phenomena in qudit BB codes, including the quasifractonic behavior and size-dependent ground state degeneracy, originate from their symmetry-enriched topological order. Furthermore, we analytically derived the key topological properties of the WCF and DCY models. Our work resolves the mobility puzzle and presents an accurate analytical computation of the ground state degeneracy for the DCY model. Lastly, for more complicated BB codes, we proposed a computational algebraic method to calculate their topological properties based on the Gr\"{o}bner bases over the ring of integers. We hope that our framework will facilitate future work on broader classes of qudit stabilizer codes and their topological and symmetry-enriched properties.

\begin{acknowledgments}
This work is supported by the National Natural Science Foundation of China (Grants No.~12522502, No.~12474145, No.~12275331, and No.~12447101), the Strategic Priority Research Program of Chinese Academy of Sciences (Grant No. XDB1680000), the Fundamental and Interdisciplinary Frontier Research Priority Program of Chinese Academy of Sciences (Grant No.~XDB0920000), and the Innovation Program for Quantum Science and Technology (Grant No.~2021ZD0301602).
\end{acknowledgments}

\appendix

\section{Proofs of Theorem~\ref{lem:topo_prime_factors} and Corollary~\ref{cor:topo_finite}}
\label{app1}
This appendix provides the algebraic background and formal proofs for Theorem~\ref{lem:topo_prime_factors} and Corollary~\ref{cor:topo_finite}.

\subsection{Preliminaries}
We begin by establishing notation and recalling key theorems. Let $N = \prod_{i}p_i^{k_i}$ be the prime factorization of $N$. The ring of Laurent polynomials over $\mathbb{Z}_N$ is denoted $R=\mathbb{Z}_N[x^{\pm},y^{\pm}]$. For any prime factor $p$ of $N$, let $R_p \coloneqq R/pR \cong \mathbb{Z}_p[x^{\pm},y^{\pm}]$, and we denote the image of an element $r \in R$ in $R_p$ by $r_p$.

A crucial tool is McCoy's theorem, which characterizes zero divisors in polynomial rings.
\begin{thm}[McCoy's Theorem, \cite{Mccoy1942, Mccoy1957}]
Let $S$ be a commutative ring. A polynomial $h \in S[x_1, \dots, x_m]$ is a zero divisor if and only if there exists a non-zero element $s \in S$ such that $s\cdot h=0$.
\end{thm}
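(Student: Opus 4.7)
The plan is to handle the backward implication first, which is essentially immediate: if $s\in S$ is non-zero with $sh=0$, then $s$ itself witnesses that $h$ is a zero divisor (in the usual convention, excluding the trivial $h=0$ case). The substantive content is the forward direction, which I would prove by induction on the number of variables $m$. A crucial point, which I would build into the induction hypothesis from the start, is that the statement must be strengthened to a \emph{simultaneous} version: for any finite family $f_1,\dots,f_n\in S[x_1,\dots,x_m]$ admitting a common non-zero annihilator in $S[x_1,\dots,x_m]$, there already exists a common non-zero annihilator in $S$ itself. Without this strengthening, the naive reduction across variables only yields an annihilator of one coefficient at a time, which does not close the induction.

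For the base case $m=1$, I would use the classical minimal-degree annihilator trick. Given $h=\sum_{j=0}^n a_j x^j$ with $a_n\neq 0$, choose among all non-zero annihilators $g=\sum_{l=0}^m b_l x^l$ of $h$ one of minimal degree. Comparing the top-degree coefficient of $gh=0$ gives $a_n b_m=0$; then $a_n g$ is still an annihilator of $h$, but its leading coefficient vanishes, so minimality forces $a_n g=0$, i.e., $a_n$ annihilates every coefficient of $g$. Iterating this by downward induction on the index $k$ (using the vanishing of the coefficient of $x^{k+m}$ in $gh$) shows that each $a_k$ annihilates every $b_l$, and in particular $b_m h=0$ with $b_m\neq 0$. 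The strong (multi-polynomial) version in one variable then follows by a support-separation trick: given $f_1,\dots,f_n$ with common annihilator $g\neq 0$, form $F=\sum_i f_i x^{M(i-1)}$ with $M$ larger than the degrees of all $f_i$; then $gF=0$, so the single-polynomial case produces $c\in S\setminus\{0\}$ with $cF=0$, and the disjoint supports force $cf_i=0$ for every $i$.

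For the inductive step $m\geq 2$, I would decompose $S[x_1,\dots,x_m]=S'[x_m]$ with $S'=S[x_1,\dots,x_{m-1}]$. Writing $f_i=\sum_k f_{i,k}x_m^k$ with $f_{i,k}\in S'$, the hypothesis furnishes a non-zero $g\in S'[x_m]$ annihilating all $f_i$ simultaneously. Applying the strong one-variable form over $S'$ yields $c'\in S'\setminus\{0\}$ with $c'f_{i,k}=0$ in $S'$ for every $i,k$; invoking the inductive hypothesis in $m-1$ variables on the family $\{f_{i,k}\}$ then produces $c\in S\setminus\{0\}$ with $cf_{i,k}=0$ for all $i,k$, and hence $cf_i=0$ for each $i$. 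The main obstacle in this plan is recognizing at the outset that the simultaneous-annihilator strengthening is required; once the inductive hypothesis is framed to cover families of polynomials, the reduction across variables becomes mechanical, and the only genuinely nontrivial idea is the minimal-degree annihilator argument used in the one-variable case.
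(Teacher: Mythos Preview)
Your proof is correct. The paper does not actually prove McCoy's theorem; it merely cites it as a classical result from \cite{Mccoy1942, Mccoy1957} and then uses it as a black box (the paper's own work begins with Lemma~\ref{lem:mccoy_laurent}, the extension to Laurent polynomials). So there is no ``paper's proof'' to compare against here.

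That said, your argument is the standard one and is carried out carefully. The key insight you correctly identified is that the naive induction on $m$ fails unless one strengthens the statement to the simultaneous form (a single $s\in S$ annihilating a finite family of polynomials). Your one-variable base case via the minimal-degree annihilator is the classical McCoy argument, and the support-separation trick $F=\sum_i f_i x^{M(i-1)}$ to upgrade to the simultaneous version is clean. The inductive step, viewing $S[x_1,\dots,x_m]=S'[x_m]$ and applying the strong one-variable case over $S'$ followed by the inductive hypothesis in $S'$, closes correctly. No gaps.
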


This theorem extends directly to the ring of Laurent polynomials.
\begin{lem}
McCoy's theorem holds for the Laurent polynomial ring $S[x_1^{\pm}, \dots, x_m^{\pm}]$.
\label{lem:mccoy_laurent}
\end{lem}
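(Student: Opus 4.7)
The plan is to reduce this Lemma to the classical McCoy theorem by exploiting that the Laurent polynomial ring is a localization of the ordinary polynomial ring. Specifically, one has $S[x_1^{\pm}, \ldots, x_m^{\pm}] \cong S[x_1, \ldots, x_m][(x_1 \cdots x_m)^{-1}]$. Since each $x_i$ is a non-zero-divisor in the polynomial ring---multiplication by $x_i$ merely shifts the free $S$-basis of monomials---the canonical map $\iota\colon S[x_1, \ldots, x_m] \hookrightarrow S[x_1^{\pm}, \ldots, x_m^{\pm}]$ is injective. Moreover, every Laurent polynomial $h$ can be transformed into an ordinary polynomial $\tilde{h} = M h$ by multiplication with a suitably chosen monomial $M = x_1^{N_1} \cdots x_m^{N_m}$, which is a unit in the Laurent ring. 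Hence $h$ is a zero divisor in the Laurent ring if and only if $\tilde{h}$ is.

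With this reduction in hand, the forward implication is immediate: if $s \tilde{h} = 0$ for some nonzero $s \in S$, then since $s$ remains nonzero in the Laurent ring by injectivity of $\iota$ restricted to $S$, both $\tilde{h}$ and $h = M^{-1}\tilde{h}$ are zero divisors there. For the reverse implication, suppose $\tilde{h}$ is a zero divisor in the Laurent ring; pick a nonzero $g$ with $\tilde{h} g = 0$, and clear denominators by multiplying $g$ by a suitable unit monomial to obtain a nonzero $\tilde{g} \in S[x_1, \ldots, x_m]$ still satisfying $\tilde{h}\tilde{g} = 0$ in the Laurent ring. By injectivity of $\iota$, this identity descends to $S[x_1, \ldots, x_m]$, so $\tilde{h}$ is a zero divisor there in the ordinary sense. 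The classical McCoy theorem then yields a nonzero $s \in S$ with $s\tilde{h} = 0$, and multiplying by $M^{-1}$ in the Laurent ring gives $s h = 0$, as desired.

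The argument involves no substantive obstacle once the localization viewpoint is adopted; the only ring-theoretic input needed beyond McCoy's original statement is the injectivity of $\iota$, which rests on the fact that the variables $x_1, \ldots, x_m$ are non-zero-divisors in $S[x_1, \ldots, x_m]$ for any commutative ring $S$. This is also precisely what makes the extension to general (not necessarily domain) base rings $S$, such as $\mathbb{Z}_N$ with composite $N$, transparent.
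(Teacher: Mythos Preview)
Your proof is correct and follows essentially the same route as the paper: clear denominators by multiplying by a monomial unit to land in the ordinary polynomial ring, apply classical McCoy there, and pull the resulting annihilator $s\in S$ back via the inverse monomial. Your write-up is somewhat more verbose in framing things via localization and in spelling out the injectivity of $\iota$, but the underlying argument is identical.
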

\begin{proof}
Suppose $\alpha, \beta \in S[x_1^{\pm}, \dots, x_m^{\pm}]$ are non-zero polynomials such that $\alpha\beta = 0$. We can choose a monomial $m = x_1^{k_1} \cdots x_m^{k_m}$ such that $m\alpha$ and $m\beta$ are ordinary polynomials in $S[x_1, \dots, x_m]$. Their product is $(m\alpha)(m\beta) = m^2(\alpha\beta) = 0$. By the standard McCoy's theorem, there exists a non-zero $s \in S$ such that $s(m\alpha) = 0$. Since $m$ is a unit in the Laurent polynomial ring, we can multiply by its inverse $m^{-1}$ to find $s\alpha=0$.
\end{proof}

With the help of McCoy's theorem, we show the following property of topological BB codes.
\begin{lem}
If the $\mathbf{BB}_{p^k}(f,g)$ code is topological, then $p \nmid f$ or $p \nmid g$ in the ring $R = \mathbb{Z}_{p^k}[x^\pm, y^\pm]$.
\label{lem:TO_and_GSD}
\end{lem}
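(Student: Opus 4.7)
The plan is to prove the contrapositive: assuming both $p\mid f$ and $p\mid g$ in $R=\mathbb{Z}_{p^k}[x^\pm,y^\pm]$, I will show that the chain complex $R^2\xrightarrow{\partial}R^4\xrightarrow{\eps}R^2$ of the $\mathbf{BB}_{p^k}(f,g)$ code fails to be exact at $R^4$. The approach is to exhibit an explicit element of $\ker\eps\setminus\im\partial$ tailored to the $p$-adic valuation of $f$.

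For the main case, suppose $f\neq 0$. Let $m_f\in\{1,\dots,k-1\}$ be the largest integer with $f\in p^{m_f}R$ and write $f=p^{m_f}f^{(0)}$ with $f^{(0)}\notin pR$. Then $f^{(0)}$ carries at least one coefficient that is a unit in $\mathbb{Z}_{p^k}$, and reducing modulo $p^{k-m_f}$ preserves this property; Lemma~\ref{lem:mccoy_laurent} then yields that the image of $f^{(0)}$ in $\mathbb{Z}_{p^{k-m_f}}[x^\pm,y^\pm]$ is a non-zero-divisor. I will take $v=(0,0,p^{k-m_f},0)^{T}\in R^4$, for which a direct computation gives $\eps v=(p^{k-m_f}f,0)^{T}=(p^k f^{(0)},0)^{T}=0$, confirming $v\in\ker\eps$.

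The central step is showing $v\notin\im\partial$. If $v=\partial(a,b)^{T}=(\bar{f}a,\bar{g}a,gb,-fb)^{T}$ held for some $a,b\in R$, the third and fourth coordinates would force $gb=p^{k-m_f}$ and $fb=0$. The relation $fb=p^{m_f}f^{(0)}b=0$ implies $f^{(0)}b\in p^{k-m_f}R$, and the non-zero-divisor property above yields $b\in p^{k-m_f}R$. Consequently $gb\in pR\cdot p^{k-m_f}R\subseteq p^{k-m_f+1}R$, which contradicts $gb=p^{k-m_f}$ since the latter has $p$-adic valuation exactly $k-m_f$ in $R$.

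The boundary cases are immediate: if $f=0$, take $v=(0,0,1,0)^T$; then $v\in\ker\eps$, but $v\in\im\partial$ would require $gb=1$, which is impossible since $g\in pR$ is not a unit in $R$. The case $g=0$ is symmetric, and $f=g=0$ gives $\partial=0\neq R^4=\ker\eps$. The primary obstacle here is the absence of an integral domain structure on $R$: instead of ordinary divisibility the argument must track $p$-adic valuations, and Lemma~\ref{lem:mccoy_laurent} is essential for converting the presence of a unit coefficient into the non-zero-divisor property that drives the key inference $fb=0\Rightarrow b\in p^{k-m_f}R$.
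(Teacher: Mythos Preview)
Your proof is correct. Both your argument and the paper's are contrapositive arguments that ultimately rest on Lemma~\ref{lem:mccoy_laurent}, but the specific elements produced and the bookkeeping differ.

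The paper works directly with the reduced sub-complex $R\xrightarrow{\partial_B}R^2\xrightarrow{\varepsilon_A}R$ and chooses the element symmetrically: it takes $t$ to be the largest power of $p$ dividing \emph{both} $f$ and $g$, writes $f=p^tf_0$, $g=p^tg_0$, and observes that $(g_0,-f_0)^T$ is automatically a syzygy in $\ker\varepsilon_A$. Exactness would then force $(g_0,-f_0)^T=h\cdot p^t(g_0,-f_0)^T$, and since one of $f_0,g_0$ is a non-zero-divisor in $R$ (by McCoy), this yields the impossible relation $1=hp^t$. Your approach is asymmetric: you track only the valuation $m_f$ of $f$, choose the ``constant'' element $v=(0,0,p^{k-m_f},0)^T$, and to extract the contradiction you must pass to the quotient $\mathbb{Z}_{p^{k-m_f}}[x^\pm,y^\pm]$ to invoke the non-zero-divisor property of $f^{(0)}$ there. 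The paper's route is a bit shorter and avoids the auxiliary quotient ring; yours has the mild advantage that the witness $v$ is physically transparent (a single-site $Z$-type Pauli operator raised to a power), and it does not require case-splitting on which of $f_0,g_0$ carries the unit coefficient.
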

\begin{proof}
Assume for contradiction that $p \mid f$ and $p \mid g$. Let $p^t$ be the highest power of $p$ (with $1 \le t < k$) that divides both $f$ and $g$, so we can write $f = p^t f_0$ and $g = p^t g_0$, where at least one of $f_0, g_0$ is not divisible by $p$. By Lemma~\ref{lem:mccoy_laurent}, this means at least one of $f_0, g_0$ is not a zero divisor in $R$.

The topological condition implies that the chain complex
\begin{equation}
    R\xrightarrow{{\partial_B = {g \choose -f} }}{R^2}\xrightarrow{{\varepsilon_A = (f\; g)}}R \label{eq:exact_proof_0}
\end{equation}
is exact. Consider the element $(g_0, -f_0)^T \in R^2$. It lies in the kernel of $\varepsilon_A$ because $\varepsilon_A (g_0,-f_0)^T = f g_0 - g f_0 = (p^t f_0)g_0 - (p^t g_0)f_0 = 0$.
Since the complex is exact, $\ker\varepsilon_A = \operatorname{im}\partial_B$, so there must exist some $h \in R$ such that:
\begin{equation}
    \begin{pmatrix} g_0 \\ -f_0 \end{pmatrix} = h \begin{pmatrix} g \\ -f \end{pmatrix} = h p^t \begin{pmatrix} g_0 \\ -f_0 \end{pmatrix}. \label{eq:exact_proof}
\end{equation}
Since at least one of $f_0$ or $g_0$ is not a zero divisor, we can conclude that $1 = h p^t$. This is impossible, as $p^t$ is not a unit in $R = \mathbb{Z}_{p^k}[x^\pm, y^\pm]$ for $t \ge 1$. Thus, our initial assumption must be false.
\end{proof}

\subsection{Proof of Theorem~\ref{lem:topo_prime_factors}}

With these preliminaries, we now prove Theorem~\ref{lem:topo_prime_factors}.

\begin{proof}[Proof of Theorem~\ref{lem:topo_prime_factors}]
By the Chinese remainder theorem, it suffices to prove the statement for $N=p^k$. We must show that the complex in Eq.~\eqref{eq:exact_proof_0} over $R = \mathbb{Z}_{p^k}[x^\pm, y^\pm]$ is exact if and only if the corresponding complex over the field $R_p = \mathbb{Z}_p[x^\pm, y^\pm]$ is exact.

($\implies$) Assume the complex over $R$ is exact. To show exactness over $R_p$, we need to prove $\ker\varepsilon_{Ap} \subseteq \operatorname{im}\partial_{Bp}$. Let $(r_p, s_p)^T \in \ker\varepsilon_{Ap}$, and let $r, s \in R$ be lifts of these elements. The condition $r_p f_p + s_p g_p = 0$ implies $rf+sg \in pR$. Multiplying by $p^{k-1}$ gives $p^{k-1}(rf+sg) = 0$, so $p^{k-1}(r,s)^T \in \ker\varepsilon_A$. By exactness over $R$, there exists $h \in R$ such that
\begin{equation}
    p^{k-1}\begin{pmatrix} r \\ s \end{pmatrix} = h \begin{pmatrix} g \\ -f \end{pmatrix}. \label{eq:relation_topo_2}
\end{equation}
Multiplying by $p$ yields $0 = ph(g, -f)^T$. By Lemma~\ref{lem:TO_and_GSD}, at least one of $f$ or $g$ is not a zero divisor, which implies $ph=0$. This means $h$ must be a multiple of $p^{k-1}$, so we can write $h=p^{k-1}h_0$ for some $h_0 \in R$. Substituting this into Eq.~\eqref{eq:relation_topo_2} and canceling the common factor of $p^{k-1}$ gives $(r,s)^T = h_0(g,-f)^T+pR^2$. Reducing this equation modulo $p$, we get $(r_p, s_p)^T = h_{0p}(g_p, -f_p)^T$, which shows that $(r_p, s_p)^T \in \operatorname{im}\partial_{Bp}$. Thus, the complex over $R_p$ is exact.

($\impliedby$) Assume the complex over $R_p$ is exact. We need to show $\ker\varepsilon_A \subseteq \operatorname{im}\partial_B$. Let $(r,s)^T \in \ker\varepsilon_A$. Reducing modulo $p$, $(r_p, s_p)^T$ is in $\ker\varepsilon_{Ap}$. By exactness over $R_p$, there exists $h^{(0)} \in R$ such that $(r_p, s_p)^T = h^{(0)}_p(g_p, -f_p)^T$. This implies that the difference is a multiple of $p$:
\begin{equation}
    \begin{pmatrix} r \\ s \end{pmatrix} - h^{(0)}\begin{pmatrix} g \\ -f \end{pmatrix} = p \begin{pmatrix} r^{(1)} \\ s^{(1)} \end{pmatrix} \quad \text{for some } r^{(1)}, s^{(1)} \in R. \label{eq:relation_topo_3}
\end{equation}
Applying $\varepsilon_A$ to both sides, and noting that the left side is in $\ker\varepsilon_A$, we find $p \cdot \varepsilon_A (r^{(1)}, s^{(1)})^T = 0$. Since $p$ is a zero-divisor, this implies $\varepsilon_A (r^{(1)}, s^{(1)})^T$ is a multiple of $p^{k-1}$, which is sufficient to show $(r^{(1)}_p, s^{(1)}_p)^T \in \ker\varepsilon_{Ap}$.
We can therefore repeat this process: find $h^{(1)}$ such that $(r^{(1)}, s^{(1)})^T - h^{(1)}(g, -f)^T$ is a multiple of $p$. Iterating this procedure $k$ times, we construct $h = h^{(0)} + p h^{(1)} + \dots + p^{k-1}h^{(k-1)}$ such that
\begin{equation}
    \begin{pmatrix} r \\ s \end{pmatrix} - h \begin{pmatrix} g \\ -f \end{pmatrix} = p^k \begin{pmatrix} r^{(k)} \\ s^{(k)} \end{pmatrix} = 0 \quad \text{in } R.
\end{equation}
This shows that $(r,s)^T \in \operatorname{im}\partial_B$, completing the proof.
\end{proof}

\subsection{Proof of Corollary~\ref{cor:topo_finite}}

Corollary~\ref{cor:topo_finite} claims that a $\mathbf{BB}_N(f,g)$ code is topological if and only if the quotient ring $R/(f,g)$ has finite cardinality. By Theorem~\ref{lem:topo_prime_factors}, one only needs to show this for $\mathbf{BB}_p(f,g)$ codes.

\begin{proof}[Proof of Corollary~\ref{cor:topo_finite}]
By Theorem~\ref{lem:topo_prime_factors}, the code is topological if and only if the $\mathbf{BB}_p(f,g)$ code is topological for every prime factor $p$ of $N$. By a standard result (a consequence of Nakayama's Lemma)~\cite{Eisenbud1995}, the $\mathbb{Z}_N$-module $R/(f,g)$ is finite if and only if the $\mathbb{Z}_p$-vector space $R_p/(f_p, g_p)$ is finite-dimensional for every prime $p\mid N$. Thus, it suffices to prove the corollary for the case $N=p$ is a prime.

For $R_p = \mathbb{Z}_p[x^{\pm}, y^{\pm}]$, the code is topological if and only if the chain complex
\begin{equation}
    R_p \xrightarrow{\partial_{Bp}} R_p^2 \xrightarrow{\varepsilon_{Ap}} R_p \label{eq:chain_complex_ptB}
\end{equation}
is exact, so the chain complex $0\to R_p \xrightarrow{\partial_{Bp}} R_p^2 \xrightarrow{\varepsilon_{Ap}} R_p$ (the injectivity of $\partial_{Bp}$ is trivial, as in Lemma~\ref{lem:TO_and_GSD}).
We invoke the Buchsbaum-Eisenbud exactness criterion~\cite{BUCHSBAUM1973} for a complex of free modules over the ring $R_p$. The criterion requires, among other things, that the codimension (or height) of the ideal of maximal minors of $\varepsilon_{Ap}$ be at least 2. The matrix for $\varepsilon_{Ap}$ is $(f_p, g_p)$, so its ideal of maximal ($1 \times 1$) minors is simply the ideal $I = (f_p, g_p)$.
The ring $R_p$ has Krull dimension 2. Therefore, the condition becomes $\operatorname{codim}(f_p, g_p) \ge 2$. Since codimension cannot exceed the dimension of the ring, this forces $\operatorname{codim}(f_p, g_p) = 2$.

By a fundamental result in commutative algebra (unmixedness theorem)~\cite{Eisenbud1995}, for an ideal $I$ in a polynomial ring over a field, its codimension and the Krull dimension of the quotient ring are related by:
\begin{equation}
    \dim(R_p/I) + \operatorname{codim}(I) = \dim(R_p).
\end{equation}
Substituting our values, we find that $\operatorname{codim}(f_p, g_p)=2$ is equivalent to $\dim(R_p/(f_p, g_p)) = 0$. For a finitely generated algebra over a field like $\mathbb{Z}_p$, having Krull dimension 0 is equivalent to being a finite-dimensional vector space. A finite-dimensional vector space over a finite field has finite cardinality.

Thus, for each prime $p$, the code being topological is equivalent to $|R_p/(f_p,g_p)|$ being finite. This completes the proof.
\end{proof}

\section{Legitimate excitation patterns}
\label{app2}
In this appendix, we show that every element of $R^2$ is a legitimate excitation pattern, meaning that it can be created by either a local or an infinitely extended operator. By linearity, it is sufficient to prove this for the generators $(1,0)^T$ and $(0,1)^T$.

Since the argument for each generator is analogous, our focus reduces to $(1,0)^T$. The condition for it to be legitimate is given by the following lemma.

\begin{lem}
    For a $\mathbf{BB}_N(f,g)$ code, the vector $(1,0)^T$ corresponds to a legitimate excitation if and only if $p\nmid f$ or $p\nmid g$.\label{lem:local_operator_can_be_created}
\end{lem}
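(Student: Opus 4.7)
I would argue the two directions separately. The ``only if'' direction is an obstruction via reduction modulo $p$, while the ``if'' direction requires an explicit construction of an infinitely extended operator whose excitation pattern equals $(1,0)^T$.

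For necessity, suppose some prime $p \mid N$ divides both $f$ and $g$. Any admissible operator -- local or infinitely extended -- corresponds to a vector $s = (s_1, s_2, s_3, s_4)^T$ in some completion $\hat R^4$ of $R^4$. Its excitation pattern is $\varepsilon(s) = (fs_3 + gs_4,\; -\bar g s_1 + \bar f s_2)^T$, whose first component lies in the ideal generated by $f$ and $g$ and therefore in $p\hat R$. Reducing coefficient-wise modulo $p$ (which commutes with any reasonable completion since it acts termwise) annihilates the first component, whereas the first component of $(1,0)^T$ is $1$, nonzero modulo $p$. This contradicts $\varepsilon(s) = (1,0)^T$ regardless of whether $s$ is local or infinitely extended.

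For sufficiency, suppose that for every prime $p \mid N$, at least one of $f, g$ is not divisible by $p$. The goal is to find $a, b$ in a suitable completion $\hat R$ of $R$ with $af + bg = 1$ and then set $s := (0, 0, a, b)^T$, for which $\varepsilon(s) = (af + bg, 0)^T = (1, 0)^T$. The steps are: \emph{(i)} apply the Chinese Remainder Theorem to decompose $R \cong \bigoplus_{p \mid N} R_{p^{k_p}}$ and work blockwise; \emph{(ii)} in each block, WLOG $p \nmid f^{(p)}$, so some coefficient of $f^{(p)}$ is a unit in $\mathbb{Z}_{p^{k_p}}$, and after multiplying by a suitable inverse Laurent monomial, $f^{(p)}$ acquires a unit constant term and is therefore invertible in the formal power series ring $\mathbb{Z}_{p^{k_p}}[[x,y]]$, yielding $a^{(p)} := (f^{(p)})^{-1}$ in a Laurent-series completion with $b^{(p)} := 0$ (the case $p \nmid g^{(p)}$ is symmetric); \emph{(iii)} recombine the $a^{(p)}, b^{(p)}$ via CRT into single elements $a, b \in \hat R$. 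As a minimal illustration, for $f = x-1$, $g = 0$ over $\mathbb{Z}_p$, the element $a = -\sum_{k \geq 0} x^k$ satisfies $af = 1$ by telescoping, and the associated $s$ represents a semi-infinite $Z$-string whose sole excitation sits at the origin.

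The main obstacle is step \emph{(iii)}: specifying the completion $\hat R$ rigorously so that the inverse $a$ exists inside it, the excitation map extends with $\varepsilon(s)$ remaining a (locally finite) element of $R^2$, and $s$ genuinely represents an ``infinitely extended operator'' of the kind allowed by the definition of legitimate excitation. A natural candidate is the ring of formal Laurent series supported in a half-plane or cone determined by the monomial order used for inversion, corresponding physically to half-infinite or wedge-shaped string operators whose excitations are confined to the boundary of the support region. Since different primes may call for different monomial orders (and hence different cones), the CRT gluing in step \emph{(iii)} must be carried out so that the resulting $s$ is coherently supported in a single such region, which is where most of the technical care is needed.
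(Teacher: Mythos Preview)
Your necessity argument is correct and matches the paper's.

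The sufficiency argument has the right architecture (CRT to prime powers, then invert one of $f,g$ in a completion), but step (ii) contains a genuine gap. You claim that since $p\nmid f^{(p)}$, some coefficient is a unit in $\mathbb{Z}_{p^{k_p}}$, and after multiplying by a Laurent monomial to bring that term to the constant position, $f^{(p)}$ becomes invertible in $\mathbb{Z}_{p^{k_p}}[[x,y]]$. This fails whenever the unit-coefficient term is not extremal in the Newton polygon. Take $f=p+x+px^{2}$ over $\mathbb{Z}_{p^{2}}$: the only unit term is $x$, but shifting it to the origin leaves $px^{-1}+1+px$, which is not in any power-series ring; under any weighted degree the monomial $x$ sits strictly between $1$ and $x^{2}$, so no choice of cone makes the geometric series converge. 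The hypothesis $p\nmid f$ guarantees a unit coefficient \emph{somewhere}, but not at a vertex.

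The paper supplies exactly the missing idea. Over $\mathbb{Z}_{p^{k}}$, write $f=pq+r$ with the coefficients of $r$ in $\{0,\dots,p-1\}$, so every nonzero coefficient of $r$ is a unit. Binomially expanding $r^{k}=(f-pq)^{k}$ and using $p^{k}=0$ shows $r^{k}=f\xi$ for some polynomial $\xi$. Now choose a weighted degree making the lowest term of $r$ unique; since that term automatically has unit coefficient, dividing through gives $1-u=fh$ with every term of $u$ of strictly positive weighted degree, and $\alpha:=h\sum_{i\ge0}u^{i}$ is a well-defined formal Laurent series with $\alpha f=1$. In the example above, $r=x$, $r^{2}=x^{2}=f\xi$ with $\xi=-p+x-px^{2}$, and one checks directly that $\alpha=x^{-2}\xi$ satisfies $\alpha f=1$.

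Your worry about step (iii) is less serious than you think: the paper simply works one prime power at a time and takes ``formal Laurent series'' to mean any series for which the product with the polynomial $f$ is termwise finite, with no requirement of a single global support cone. The real obstacle was the one in step (ii).
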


\begin{proof}
($\implies$) Assume that $\left(1,0\right)^{T}$ represents a legitimate
excitation. By definition, this means there are formal Laurent series $\alpha$ and $\beta$ (in variables $x$ and $y$) such that
\begin{equation}
1=\alpha f+\beta g.
\end{equation}
(A finite series, i.e., a polynomial, corresponds to a local operator, while an infinite series corresponds to an infinitely extended operator.) 

We proceed by contradiction. Assume that the conclusion is false, i.e.,
that $p\mid f$ and $p\mid g$. Then $p$  also divides $\alpha f+\beta g$.
This implies that $p$ divides $1$. This is a contradiction, as $p$ is not a unit in $\mathbb{Z}_{p^k}$.

($\impliedby$) Without loss of generality, assume $p\nmid f$. We
will show that $\left(1,0\right)^{T}$ is a legitimate excitation
by constructing a formal Laurent series $\alpha$ such that $\alpha f=1$.

First, we can use the division algorithm to write $f=pq+r$, where
$q$ and $r$ are Laurent polynomials. Since $p\nmid f$, the remainder
$r$ is non-zero. Since $p^{k}=0$ in the ring $\mathbb{Z}_{p^{k}}$,
expanding $r^{k}=\left(f-pq\right)^{k}$ shows that $r^{k}$ is divisible
by $f$. This means there exists a Laurent polynomial $\xi\in\mathbb{Z}_{p^{k}}\left[x^{\pm},y^{\pm}\right]$
such that 
\begin{equation}
r^{k}=f\cdot\xi.\label{eq:rk}
\end{equation}

Next, according to Fact~\ref{fact1} (given below), it is possible to choose a weighted
degree such that the lowest-degree term in $r$ is unique; let us
call it $cx^{a_{1}}y^{a_{2}}$. The lowest-degree term of $r^{k}$
is then $c^{k}x^{ka_{1}}y^{ka_{2}}$. By construction, $p\nmid c$,
so $c$ is invertible in $\mathbb{Z}_{p^{k}}$. 
Dividing both sides of Eq.~\eqref{eq:rk} by $c^{k}x^{ka_{1}}y^{ka_{2}}$ gives
\begin{equation}
1-u=f\cdot h, \label{eq:fh}
\end{equation}
where $u$ and $h$ denote polynomials defined as
\begin{equation}
u\coloneqq1-c^{-k}x^{-ka_{1}}y^{-ka_{2}}\cdot r^{k},\qquad h\coloneqq c^{-k}x^{-ka_{1}}y^{-ka_{2}}\xi.
\end{equation}
By construction, each term in $u$ has a strictly positive weighted
degree. Thus, 
$
\alpha\coloneqq h\sum_{i=0}^{\infty}u^{i}    
$
is a well-defined formal Laurent series, and Eq.~\eqref{eq:fh} implies $\alpha f=1$. Taking $\beta=0$, we have $\alpha f+\beta g=1$, which proves that $\left(1,0\right)^{T}$
is a legitimate excitation.
\end{proof}

The preceding proof has used the following fact.

\begin{fact} \label{fact1}
For any non-zero Laurent polynomial $f$ in $d$ variables, a weight vector $\mathbf{w}=(w_1, \dots, w_d)$ with positive integer components can be chosen to define a weighted degree under which $f$ has a unique term of minimal degree.
\end{fact}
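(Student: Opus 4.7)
\emph{Proof proposal.} The plan is to give an explicit construction of a positive integer weight vector that separates the exponents of $f$ completely, so that not only the minimum but every weighted degree appearing in $f$ is distinct. This is strictly stronger than what the statement requires, but it is by far the cleanest route, and it bypasses any convex-geometric machinery.

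First, I would reduce to the nontrivial case: if $f$ is a single monomial the claim is immediate, so assume $f$ has at least two terms. Let $S=\{\mathbf{e}_1,\ldots,\mathbf{e}_n\}\subset\mathbb{Z}^d$ denote the finite set of exponent vectors appearing in $f$, and set $M=\max_{i,k}|e_{i,k}|$. Since $f$ has at least two distinct exponents, $M\ge 1$ (the degenerate case $M=0$ can occur only when every exponent is the zero vector, i.e., $f$ is a constant, which was already excluded).

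Next, I would define $w_k=(2M+1)^{k-1}$ for $k=1,\ldots,d$, which are positive integers. The map $\mathbf{e}\mapsto\mathbf{w}\cdot\mathbf{e}$ is then the value of $\mathbf{e}$ read as a signed base-$(2M+1)$ expansion with digits in $[-M,M]$, and such expansions are unique, making this map injective on $S$. Explicitly: for distinct $\mathbf{e}_i,\mathbf{e}_j\in S$, let $k^*$ be the largest coordinate on which they differ. The position-$k^*$ contribution to $\mathbf{w}\cdot(\mathbf{e}_i-\mathbf{e}_j)$ has magnitude at least $(2M+1)^{k^*-1}$, while the lower positions contribute at most $2M\sum_{k<k^*}(2M+1)^{k-1}=(2M+1)^{k^*-1}-1$ in absolute value, which is strictly less. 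Hence the difference is nonzero, so the weighted degrees of distinct monomials of $f$ are all distinct, and in particular the minimum weighted degree is attained by a unique term.

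There is essentially no technical obstacle; the only care needed is to verify positivity of the weights (immediate from $w_k\ge 1$) and to dispatch the trivial $n=1$ case separately. An alternative and less constructive route would observe that the bad weight vectors, i.e., those for which two exponents of $f$ tie, lie in a finite union of rational hyperplanes in $\mathbb{R}^d$, and hence cannot cover the positive orthant $\mathbb{Z}_{>0}^d$; but the base-$(2M+1)$ construction above makes the choice entirely explicit and avoids any appeal to density arguments.
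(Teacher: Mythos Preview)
Your proof is correct and rests on the same core idea as the paper's: choose the weights as successive powers of a base larger than the spread of the exponents, so that the weighted degree becomes a base-$M$ reading of the exponent vector. The differences are cosmetic but worth noting. The paper first multiplies $f$ by a monomial to make all exponents non-negative, then identifies the lexicographically minimal exponent vector and uses weights $(M^{d-1},\dots,M^0)$ with $M$ exceeding the largest exponent to show that this lex-minimum becomes the unique weighted-degree minimum. You instead work directly with the signed exponents of the Laurent polynomial, take base $2M+1$ to accommodate digits in $[-M,M]$, and prove the stronger statement that the weighted degree is injective on the entire support. Your route is marginally cleaner: it avoids both the preliminary reduction to ordinary polynomials and the detour through lex order, at the cost of a slightly larger base.
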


\begin{proof}
Without loss of generality, we assume $f$ is an ordinary polynomial,
since it can be made so by multiplication with an appropriate monomial. 

An explicit choice of $\mathbf{w}$ can be made in the following steps.

1) \textit{Identify exponents.}
Terms in $f$ look like $c \cdot x_1^{a_1} \cdots x_d^{a_d}$. We represent each term by its exponent vector $\mathbf{a}=(a_1, \dots, a_d)$. Each component is non-negative, since $f$ is an ordinary polynomial. Let $S$ be the set of all such exponent vectors for the terms appearing in $f$.

2) \textit{Select a unique candidate.}
We use lexicographical order to find a unique ``smallest'' vector in $S$. A vector $\mathbf{a}$ is lexicographically smaller than $\mathbf{b}$ if the first non-zero entry in the difference vector $\mathbf{b}-\mathbf{a}$ is positive. Since $S$ is a finite, non-empty set of vectors, it has a unique lexicographically minimal element. Let us call it $\mathbf{a}_0$.

3) \textit{Construct the weights.}
Now we define a weight vector $\mathbf{w}$ that will force the term corresponding to $\mathbf{a}_0$ to have the unique lowest degree. Let $A_{\max}$ be the maximum value of any single exponent in any vector in $S$. For example, if $f=x^5 y^2 + z^{12}$, then $A_{\max}=12$. Choose an integer $M > A_{\max}$. Define the weight vector as a sequence of powers of $M$:
\begin{equation}
    \mathbf{w} = (M^{d-1}, M^{d-2}, \dots, M^1, M^0).
\end{equation}

4) \textit{Verify uniqueness.}
The weighted degree of a term with exponent vector $\mathbf{a}$ is the dot product $\mathbf{w} \cdot \mathbf{a}$. This is essentially the value of the number whose digits are $(a_1, \dots, a_d)$ in base $M$. Namely,
\begin{equation}
    \text{deg}_{\mathbf{w}}(x^\mathbf{a}) = \sum_{i=1}^d a_i M^{d-i}
\end{equation}
Since we chose $M$ to be larger than any possible ``digit'' $a_i$, comparing these degree values is exactly equivalent to comparing the exponent vectors lexicographically.

Because $\mathbf{a}_0$ is the unique lexicographically smallest vector in $S$, its corresponding weighted degree $\mathbf{w} \cdot \mathbf{a}_0$ will be uniquely minimal. 

Thus, we have constructed a weighted degree such that the lowest-degree term in $f$ is unique.
\end{proof}

According to Lemma~\ref{lem:TO_and_GSD}, for a topological $\mathbb{Z}_{p^k}$ BB code, the Laurent polynomials $f$ and $g$ that specify its stabilizer map cannot be zero divisors simultaneously. Thus, the condition in Lemma~\ref{lem:local_operator_can_be_created} generically holds, indicating that each element in $R^2$ describes a legitimate excitation pattern. 

This can be generalized to arbitrary composite qudit dimensions by the Chinese remainder theorem, where the legitimate excitation condition in Lemma~\ref{lem:local_operator_can_be_created} becomes that the greatest common divisor of all the coefficients of $f$ and $g$ must be coprime to qudit dimension $N$. Therefore, any element in $R^2$ corresponds to a legitimate excitation pattern for its topological BB code, and its anyon types are labeled by $R^2/\im\pt$.

\section{Proof of Theorem~\ref{thm:anyon_type_relation}\label{app2.5}}

In this appendix, we provide a proof of Theorem~\ref{thm:anyon_type_relation}.

We will make use of a homological algebra tool known as the torsion functor over modules, or $\operatorname{Tor}$. We consider a commutative ring $S$ and two $S$-modules $M$ and $N$. A free resolution of $N$ is constructed as the following exact sequence:
\begin{equation}
    \dots \to F_{i+1}\xrightarrow{\varphi_{i+1}}F_i\xrightarrow{\varphi_{i}}F_{i-1}\to\dots\to F_0\xrightarrow{\varphi_0} N\to 0,\label{eq:free_res}
\end{equation}
where $F_i$ denotes free $S$-modules, whereas $\varphi_i$ represents a module homomorphism from $F_{i}$ to $F_{i-1}$ for each $i$. Tensoring the free resolution Eq.~\eqref{eq:free_res} with $M$ induces the following chain complex: 
\begin{equation}
    \dots \to M\otimes_S F_{i+1}\xrightarrow{1_M\otimes_S\varphi_{i+1}}M\otimes_S F_i\to\dots\to M\otimes_
    SF_0\to 0,\label{eq:free_res2}
\end{equation}
where $1_M$ denotes the identity map on $M$. Then, the $i$-th Tor group is defined as the $i$-th homology group of the chain complex Eq.~\eqref{eq:free_res2}. Concretely, 
\begin{equation}
    \operatorname{Tor}_{i}^S(N,M)\coloneqq \begin{cases}
        M\otimes_S N,\quad i=0.\\    \ker(1_M\otimes_S\varphi_{i})/\im(1_M\otimes_S\varphi_{i+1}),\quad i\geq 1.
    \end{cases}
\end{equation}
  
We remark that $\operatorname{Tor}_{i}^S(N,M)=\operatorname{Tor}_{i}^S(M,N)$, and that the Tor groups do not depend on the particular free resolution chosen in Eq.~\eqref{eq:free_res}. For further details on torsion functors, see standard references such as~\cite{Rotman2009, Eisenbud1995, StacksProject}. The result needed for our purposes is formulated in the following lemma:
\begin{lem}[\href{https://stacks.math.columbia.edu/tag/051H}{Tag 051H} of Ref.~\cite{StacksProject}]
    Let $S$ be a commutative ring, $I\subset S$ an ideal, $M$ an $S$-module with a family of elements $\{m_{\alpha}\}_{\alpha \in A}$ indexed by a set $A$, and $[m_{\alpha}] \coloneqq m_{\alpha}+IM$. Assume
    \begin{enumerate}
    \item[(i)] $I$ is nilpotent,
    \item[(ii)] $\{[m_{\alpha}]\}_{\alpha\in A}$  forms a basis of $M/IM$ over $S/I$, and
    \item[(iii)] $\operatorname{Tor}_1^S(S/I,M)=0$.
    \end{enumerate}
Then, $M$ is a free $S$-module with basis $\{m_{\alpha}\}_{\alpha\in A}$.\label{lem:tor1}
\end{lem}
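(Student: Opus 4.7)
The plan is to prove the lemma by constructing an explicit comparison map from a free module onto $M$ and showing it is an isomorphism using nilpotent Nakayama together with the Tor-vanishing hypothesis. Let $F$ be the free $S$-module on a basis $\{e_\alpha\}_{\alpha\in A}$ (one symbol for each $\alpha$), and define $\varphi\colon F\to M$ by $\varphi(e_\alpha)=m_\alpha$. The goal reduces to proving $\varphi$ is both surjective and injective; once this is established, $M$ acquires the basis $\{m_\alpha\}$ transported from $F$.

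For surjectivity, I would look at the cokernel $C=M/\varphi(F)$. Reducing modulo $I$, the induced map $\overline{\varphi}\colon F/IF\to M/IM$ sends the $S/I$-basis $\{\bar e_\alpha\}$ to the $S/I$-basis $\{[m_\alpha]\}$ (by hypothesis (ii)) and is therefore an isomorphism. This forces $C/IC=0$, i.e.\ $C=IC$. Iterating gives $C=I^nC$ for every $n\geq 1$, and nilpotency (i) yields $I^n=0$ for some $n$, hence $C=0$. This is a clean application of the Nakayama-type argument that works for nilpotent ideals without any finite-generation hypothesis on $M$.

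For injectivity, set $K=\ker\varphi$ and consider the short exact sequence
\begin{equation}
0\to K\to F\xrightarrow{\varphi} M\to 0.
\end{equation}
Tensoring with $S/I$ produces the long exact sequence
\begin{equation}
\operatorname{Tor}_1^S(S/I,F)\to \operatorname{Tor}_1^S(S/I,M)\to K/IK\to F/IF\xrightarrow{\overline{\varphi}} M/IM\to 0.
\end{equation}
Since $F$ is free, $\operatorname{Tor}_1^S(S/I,F)=0$, and by hypothesis (iii) also $\operatorname{Tor}_1^S(S/I,M)=0$, so the natural map $K/IK\hookrightarrow F/IF$ is injective. On the other hand $\overline{\varphi}$ is an isomorphism by the same basis comparison used above, so the image of $K/IK$ in $F/IF$ must lie in $\ker\overline{\varphi}=0$. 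Thus $K/IK=0$, and a second application of the nilpotent Nakayama argument ($K=IK=I^2K=\cdots=I^nK=0$) gives $K=0$.

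The step I expect to require the most care is the identification of the map $F/IF\to M/IM$ with the basis-preserving isomorphism, since hypothesis (ii) must be invoked in the exact form needed to match the free presentation: one must check that the image of $\bar e_\alpha$ in $M/IM$ is literally $[m_\alpha]$ for the given indexing, so that bijectivity on bases upgrades to an $S/I$-module isomorphism. No finite-generation assumption is in force, so the argument must be verified for arbitrary $A$; fortunately, both free resolutions of $F$ and the Nakayama-type conclusion for $I$ nilpotent are insensitive to the cardinality of the index set, so the proof goes through verbatim in the possibly infinite-rank setting.
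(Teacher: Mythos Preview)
Your proof is correct and follows the standard argument; in fact, the paper does not supply its own proof of this lemma but simply cites the Stacks Project (Tag~051H), and your approach is precisely the one given there. The two Nakayama-type steps (for the cokernel and then, via the long exact $\operatorname{Tor}$ sequence, for the kernel) are exactly what is needed, and your remark that no finite-generation hypothesis is required is on point.
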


A proof can be found in Ref.~\cite[\href{https://stacks.math.columbia.edu/tag/051H}{Tag 051H}]{StacksProject}. To compute the anyon types and their fusion rules, we consider $\mathcal{Q} = R/(f,g)$ for the $\mathbf{BB}_{p^k}(f,g)$ code.
The following lemma shows that the $\operatorname{Tor}_1=0$ condition holds for topological $\mathbb{Z}_{p^k}$ BB codes.

\begin{lem}
Let $\mathcal{Q} = R/(f,g)$ be the module associated with the topological $\mathbf{BB}_{p^k}(f,g)$ code. Then
\begin{equation}
    \operatorname{Tor}_1^{\mathbb{Z}_{p^k}}(\mathbb{Z}_{p^k}/(p),\mathcal{Q})=0.
\end{equation}\label{lem:tor2}
\end{lem}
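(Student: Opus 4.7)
The plan is to compute the Tor group via the standard $2$-periodic free resolution of $\mathbb{Z}_p \cong \mathbb{Z}_{p^k}/(p)$ over $\mathbb{Z}_{p^k}$, namely
\begin{equation*}
\cdots \xrightarrow{p^{k-1}} \mathbb{Z}_{p^k} \xrightarrow{p} \mathbb{Z}_{p^k} \xrightarrow{p^{k-1}} \mathbb{Z}_{p^k} \xrightarrow{p} \mathbb{Z}_{p^k} \to \mathbb{Z}_p \to 0.
\end{equation*}
Tensoring with $\mathcal{Q}$ over $\mathbb{Z}_{p^k}$ and taking homology identifies
\begin{equation*}
\operatorname{Tor}_1^{\mathbb{Z}_{p^k}}(\mathbb{Z}_p,\mathcal{Q}) \;=\; \ker(p\cdot : \mathcal{Q}\to\mathcal{Q}) \,/\, \im(p^{k-1}\cdot : \mathcal{Q}\to\mathcal{Q}),
\end{equation*}
so it suffices to show that every class $[m]\in\mathcal{Q}$ annihilated by $p$ lies in $p^{k-1}\mathcal{Q}$. (The case $k=1$ is automatic since then $p^{k-1}=1$, so I assume $k\ge 2$.)

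The key input is topological exactness at the prime level. Lifting $[m]$ to $m\in R$, the hypothesis $p[m]=0$ yields $pm=\alpha f+\beta g$ for some $\alpha,\beta\in R$. Reducing modulo $p$ gives $\alpha_p f_p+\beta_p g_p=0$ in $R_p$, i.e.\ $(\alpha_p,\beta_p)^T\in\ker\varepsilon_{Ap}$. By Theorem~\ref{lem:topo_prime_factors}, the $\mathbf{BB}_p(f,g)$ code is also topological, so the corresponding chain complex over $R_p$ is exact, and $(\alpha_p,\beta_p)^T$ can be written as $h_p(g_p,-f_p)^T$ for some $h\in R$. Lifting to $R$ then yields decompositions $\alpha=hg+p\alpha'$ and $\beta=-hf+p\beta'$ with $\alpha',\beta'\in R$.

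Substituting back into $pm=\alpha f+\beta g$, the $h$-terms cancel and leave $p(m-\alpha' f-\beta' g)=0$ in $R$. Since $R=\mathbb{Z}_{p^k}[x^{\pm},y^{\pm}]$, the coefficient-wise statement that $pc=0$ in $\mathbb{Z}_{p^k}$ forces $c\in p^{k-1}\mathbb{Z}_{p^k}$ immediately produces $n\in R$ with $m-\alpha' f-\beta' g=p^{k-1}n$. Quotienting by $(f,g)$ gives $[m]=p^{k-1}[n]$, as required.

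The main obstacle is the middle step: the relation $pm=\alpha f+\beta g$ only encodes mod-$p$ information about $(\alpha,\beta)$, and without further structure one cannot extract useful content about $m$ modulo $(f,g)$. The essential bridge is exactness of the chain complex over $R_p$, which upgrades $\alpha_p f_p+\beta_p g_p=0$ to an explicit lift through $(g,-f)^T$; this is precisely what Theorem~\ref{lem:topo_prime_factors} supplies. Once that lift is in hand, the remainder is direct substitution together with the elementary fact about annihilators of $p$ in $\mathbb{Z}_{p^k}$.
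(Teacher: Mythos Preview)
Your proof is correct and follows essentially the same route as the paper: the same periodic free resolution of $\mathbb{Z}_p$ over $\mathbb{Z}_{p^k}$, the same identification $\operatorname{Tor}_1 = \operatorname{Ann}_{\mathcal{Q}}(p)/p^{k-1}\mathcal{Q}$, and the same target inclusion $\operatorname{Ann}_{\mathcal{Q}}(p)\subseteq p^{k-1}\mathcal{Q}$, followed by the same cancellation step $p(m-\alpha' f-\beta' g)=0 \Rightarrow m-\alpha' f-\beta' g\in p^{k-1}R$.

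The one genuine difference is which exactness you invoke. The paper works over $R=\mathbb{Z}_{p^k}[x^{\pm},y^{\pm}]$: it multiplies $pm=rf+sg$ by $p^{k-1}$, uses exactness of the complex over $R$ to write $p^{k-1}(r,s)^T=h(g,-f)^T$, and then needs Lemma~\ref{lem:TO_and_GSD} (that $f$ or $g$ is not a zero divisor) to conclude $h\in p^{k-1}R$ before descending. You instead reduce $pm=\alpha f+\beta g$ modulo $p$ and invoke exactness over $R_p$ directly via Theorem~\ref{lem:topo_prime_factors}, which immediately yields the lift $\alpha=hg+p\alpha'$, $\beta=-hf+p\beta'$. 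Your route is slightly more direct and bypasses the zero-divisor lemma; the paper's route stays entirely within $R$ but pays for it with the extra structural input from Lemma~\ref{lem:TO_and_GSD}.
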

\begin{proof}
    We construct the following free resolution
    \begin{equation}
        \dots\xrightarrow{\times p}\mathbb{Z}_{p^k}\xrightarrow{\times p^{k-1}} \mathbb{Z}_{p^k}\xrightarrow{\times p} \mathbb{Z}_{p^k}\xrightarrow{\pi} \mathbb{Z}_{p^k}/(p)\to 0.
    \end{equation}
    Here $\times p$ and $\times p^{k-1}$ represent the multiplication by $p$ and $p^{k-1}$, respectively. The canonical projection is denoted by $\pi$. Tensoring the free resolution with $\mathcal{Q}$ results in the chain complex
    \begin{equation}
        \dots\xrightarrow{\times p}\mathcal{Q}\xrightarrow{\times p^{k-1}} \mathcal{Q}\xrightarrow{\times p} \mathcal{Q}\xrightarrow{\pi} \mathcal{Q}/p\mathcal{Q}\to 0.
    \end{equation}
    Then by definition, we have
    \begin{equation}
        \operatorname{Tor}_1^{\mathbb{Z}_{p^k}}(\mathbb{Z}_{p^k}/(p),\mathcal{Q})=\frac{\ker(\times p)}{\im (\times p^{k-1})}=\frac{\operatorname{Ann}_{\mathcal{Q}}(p)}{p^{k-1}\mathcal{Q}}.\label{eq:tor1}
    \end{equation}
    Here, $\operatorname{Ann}_{\mathcal{Q}}(p)\coloneqq\{q\in \mathcal{Q}\,|\,pq=0\}$ is the annihilator of $p$ in $\mathcal{Q}$.
      
    To prove the lemma, we will show that $\operatorname{Ann}_{\mathcal{Q}}(p) = p^{k-1}\mathcal{Q}$. The inclusion ${p^{k-1}\mathcal{Q}}\subseteq{\operatorname{Ann}_{\mathcal{Q}}(p)}$ is immediate. We now prove the reverse inclusion, ${\operatorname{Ann}_{\mathcal{Q}}(p)}\subseteq{p^{k-1}\mathcal{Q}}$. Let $[m]\in\operatorname{Ann}_{\mathcal{Q}}(p)$ for some representative $m\in R$. By definition, $p[m]=0$ in $\mathcal{Q}$, which means $pm$ lies in the ideal $(f,g)_R$. Hence, there exist $r,s\in R$, such that
    \begin{equation}
         pm=rf+sg.\label{eq:relation_proof_1p}
    \end{equation}
    Multiplying $p^{k-1}$ on both sides of Eq.~\eqref{eq:relation_proof_1p} yields $p^{k-1}(rf+sg)=0$. This implies $p^{k-1}(r,s)^T\in\ker\eps_{A}$. By the exactness of Eq.~\eqref{eq:exact_proof_0}, there exists $h\in R$, such that $p^{k-1}(r,s)^T=h(g,-f)^T$. By Lemma~\ref{lem:TO_and_GSD}, at least one of $f$ or $g$ is not a zero divisor, which implies $ph=0$. This means $h$ must be a multiple of $p^{k-1}$, so we can write $h=p^{k-1}h_0$ for some $h_0 \in R$. Substituting this into Eq.~\eqref{eq:relation_topo_2} and canceling the common factor of $p^{k-1}$ gives $(r,s)^T=h_0(g,-f)^T+pR^2$. In other words, there exists $\alpha,\beta\in R$, such that
    \begin{equation}
        r-h_0g=p\alpha,\quad s+h_0f=p\beta.\label{eq:relation_proof_4p}
    \end{equation}
    Substituting Eq.~\eqref{eq:relation_proof_4p} into Eq.~\eqref{eq:relation_proof_1p} yields $p(m-\alpha f-\beta g)=0$. Consequently, $m'=m-\alpha f-\beta g$ is a multiple of $p^{k-1}$. In other words, $m'\in p^{k-1} R$. Projecting this relation to $\mathcal{Q}$, gives $[m]=[m']\in p^{k-1}\mathcal{Q}$. Therefore, we have that $p^{k-1}\mathcal{Q}=\operatorname{Ann}_{\mathcal{Q}}(p)$, and
    \begin{equation}
        \operatorname{Tor}_1^{\mathbb{Z}_{p^k}}(\mathbb{Z}_{p^k}/(p),\mathcal{Q})=\frac{\operatorname{Ann}_{\mathcal{Q}}(p)}{p^{k-1}\mathcal{Q}}=0.\label{eq:tor2}
    \end{equation}
\end{proof}
Then, we can determine the fusion structure of the BB code with prime power qudit dimensions by the following corollary:
\begin{prop}
    The fusion rules of anyons in the $\mathbf{BB}_{p^k}(f,g)$ code, which correspond to the additive group structure of $\mathcal{C}$, is given by
    \begin{equation}
        \mathcal{C}\cong\mathbb{Z}_{p^k}^{2Q},
    \end{equation}
    where $Q=\dim_{\mathbb{Z}_p}\mathcal{Q}_p$ with $\mathcal{Q}_p=\mathcal{Q}/p\mathcal{Q}$.\label{cor:zpk_fusion}
\end{prop}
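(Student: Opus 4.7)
The plan is to establish the structure in two stages: first show that the $\mathbb{Z}_{p^k}$-module $\mathcal{Q}=R/(f,g)$ is free of rank $Q$, and then extend the same conclusion to $\overline{\mathcal{Q}}=R/(\bar f,\bar g)$ by a symmetry argument. Because Eq.~\eqref{eq:anyon} already gives $\mathcal{C}=\mathcal{Q}\oplus\overline{\mathcal{Q}}$, these two freeness statements combine immediately to yield $\mathcal{C}\cong\mathbb{Z}_{p^k}^{2Q}$.

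The central tool for the first stage is Lemma~\ref{lem:tor1}, applied with $S=\mathbb{Z}_{p^k}$, ideal $I=(p)$, and module $M=\mathcal{Q}$, viewed as a $\mathbb{Z}_{p^k}$-module through the scalar inclusion $\mathbb{Z}_{p^k}\hookrightarrow R$. I need to check its three hypotheses. Nilpotence of $I$ is automatic from $p^k=0$ in $\mathbb{Z}_{p^k}$. For the basis condition, I invoke Corollary~\ref{cor:topo_finite}: topologicality of the $\mathbf{BB}_{p^k}(f,g)$ code makes $R_p/(f_p,g_p)\cong \mathcal{Q}/p\mathcal{Q}$ a finite, hence finite-dimensional, $\mathbb{Z}_p$-vector space, so one can pick a $\mathbb{Z}_p$-basis $\{[m_\alpha]\}_{\alpha=1}^{Q}$ of $\mathcal{Q}/p\mathcal{Q}$ and choose lifts $m_\alpha\in\mathcal{Q}$. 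The vanishing $\operatorname{Tor}_1^{\mathbb{Z}_{p^k}}(\mathbb{Z}_{p^k}/(p),\mathcal{Q})=0$ is exactly the content of Lemma~\ref{lem:tor2}. Lemma~\ref{lem:tor1} then declares that the lifts $\{m_\alpha\}$ form a free $\mathbb{Z}_{p^k}$-basis of $\mathcal{Q}$, so $\mathcal{Q}\cong\mathbb{Z}_{p^k}^{Q}$.

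To handle $\overline{\mathcal{Q}}$, I would invoke the ring involution $\sigma\colon R\to R$ given by $x\mapsto x^{-1}$, $y\mapsto y^{-1}$. This is a $\mathbb{Z}_{p^k}$-algebra automorphism that sends $(f,g)$ onto $(\bar f,\bar g)$, so it induces a $\mathbb{Z}_{p^k}$-linear isomorphism $\mathcal{Q}\cong\overline{\mathcal{Q}}$; reducing modulo $p$ gives $\dim_{\mathbb{Z}_p}(\overline{\mathcal{Q}}/p\overline{\mathcal{Q}})=Q$. Because $\sigma$ preserves the exactness of the defining complex, $\mathbf{BB}_{p^k}(\bar f,\bar g)$ is also topological, so Lemma~\ref{lem:tor2} applies to it as well and the same argument as above yields $\overline{\mathcal{Q}}\cong\mathbb{Z}_{p^k}^{Q}$. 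Assembling the two summands completes the proof.

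The main obstacle is really confined to Lemma~\ref{lem:tor2}: one must exploit the exactness of the stabilizer–excitation complex over $R$, together with the non-zero-divisor property from Lemma~\ref{lem:TO_and_GSD}, to show that the annihilator $\operatorname{Ann}_{\mathcal{Q}}(p)$ coincides with $p^{k-1}\mathcal{Q}$ and thereby annihilate the first Tor group. Once that vanishing is in hand, everything else is formal: Lemma~\ref{lem:tor1} exists precisely to ensure that a $\mathbb{Z}_p$-basis of $\mathcal{Q}/p\mathcal{Q}$ does not merely span $\mathcal{Q}$ but spans it freely over $\mathbb{Z}_{p^k}$, which is what the statement $\mathcal{C}\cong\mathbb{Z}_{p^k}^{2Q}$ requires.
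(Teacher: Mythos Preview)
Your proposal is correct and follows essentially the same approach as the paper: apply Lemma~\ref{lem:tor1} with $S=\mathbb{Z}_{p^k}$, $I=(p)$, $M=\mathcal{Q}$, verify the three hypotheses (nilpotence is trivial, the basis exists because $\mathbb{Z}_p$ is a field, and the Tor vanishing is Lemma~\ref{lem:tor2}), and then pass to $\mathcal{C}=\mathcal{Q}\oplus\overline{\mathcal{Q}}$. Your explicit use of the involution $x\mapsto x^{-1}$, $y\mapsto y^{-1}$ to identify $\overline{\mathcal{Q}}\cong\mathcal{Q}$ is a small elaboration the paper leaves implicit, but it is the natural way to close that step.
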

\begin{proof}
    This corollary is proved by directly applying Lemma~\ref{lem:tor1}, where we take $S\coloneqq\mathbb{Z}_{p^k}$, $I\coloneqq(p)$, and $M\coloneqq\mathcal{Q}$. By definition, we have $M/IM=\mathcal{Q}_p$ and $S/I=\mathbb{Z}_p$. In this scenario, condition (i) is evident, condition (ii) holds since $\mathbb{Z}_p$ is a field, and condition (iii) is valid according to Lemma~\ref{lem:tor2}. Thus, Proposition~\ref{cor:zpk_fusion} follows since $\mathcal{C}=\mathcal{Q}\oplus\overline{\mathcal{Q}}$.
\end{proof}

Then, Theorem~\ref{thm:anyon_type_relation} is a natural generalization of Proposition~\ref{cor:zpk_fusion} to any composite qudit dimension $N$ cases by the Chinese remainder theorem.

\section{Derivation of Eq.~\eqref{eq:gsd1}} \label{app4}
Here we evaluate $\tr P$ to derive Eq.~\eqref{eq:gsd1}. 
  
To calculate the trace of the projector $P$, we first decompose it into a product of two mutually commuting projectors, $P = P_A P_B$, where
\begin{equation}
    P_{A}\coloneqq\frac{1}{\left|S_{A}\right|}\sum_{s_{A}\in S_{A}}s_{A} \quad \text{and} \quad P_{B}\coloneqq\frac{1}{\left|S_{B}\right|}\sum_{s_{B}\in S_{B}}s_{B}.
\end{equation}
We evaluate the trace in the basis $\{|\sigma\rangle\}$, defined by $|\sigma\rangle \coloneqq X_\sigma |00\cdots 0\rangle$ for $\sigma \in R_L^2$. In this expression, $X_\sigma$ is the tensor product of Pauli-$X$ matrices specified by $\sigma$. This gives 
\begin{equation}
    \tr P=\sum_{\sigma\in R_{L}^{2}}\left\langle \sigma\right|P_{A}P_{B}\left|\sigma\right\rangle.
\end{equation}
The derivation proceeds in several steps. First, the projector $P_B$ acts as the identity on states $|\sigma\rangle$ for which $\sigma\in \ker \varepsilon_{BL}$ and annihilates all others. The sum thus simplifies by restricting it to this kernel:
\begin{equation}
    \tr P=\sum_{\sigma\in\ker\varepsilon_{BL}}\left\langle \sigma\right|P_{A}\left|\sigma\right\rangle .
\end{equation}
Next, for any state $|\sigma\rangle$ remaining in the sum, the matrix element is constant: $\left\langle \sigma\right|P_{A}\left|\sigma\right\rangle =1/\left|S_{A}\right|$. The trace is thus the number of terms ($|\ker\varepsilon_{BL}|$) multiplied by this value. Using the identity $\left|S_{A}\right|=\left|\im \partial_{AL}\right|$, we obtain
\begin{equation}
    \tr P=\frac{\left|\ker\varepsilon_{BL}\right|}{\left|S_{A}\right|}=\frac{\left|\ker\varepsilon_{BL}\right|}{\left|\im \partial_{AL}\right|}.
\end{equation}
Furthermore, by the isomorphisms $\im \partial_{AL} \cong R_L / \ker \partial_{AL}$ and $\im \varepsilon_{BL} \cong R_L^2 / \ker \varepsilon_{BL}$, we have
\begin{equation}
\begin{aligned}
        \tr P &=\frac{\left|\ker \partial_{AL}\right|}{\left|R_{L}\right|}\frac{\left|R_{L}\right|^{2}}{\left|\im \varepsilon_{BL}\right|}=
        \left|\operatorname{Ann}_{R_L}(\bar{f},\bar{g})\right|\left|\frac{R_{L}}{(f,g)}\right|.\label{eq:computeP}
\end{aligned}
\end{equation}
  
In Eq.~\eqref{eq:computeP}, the second equality follows from definitions $\ker\partial_{AL}=\ker(\bar{f},\bar{g})^T=\operatorname{Ann}_{R_L}(\bar{f},\bar{g})$ and $\im\eps_{BL}=\im(g,-f)=(f,g)$.
  
Finally, we show $\left|\operatorname{Ann}_{R_L}(\bar{f},\bar{g})\right|=\left| R_L/(f,g)\right|$. To this end, we work in the standard basis $\{x^iy^j|i\in\mathbb{Z}_{L_x},j\in\mathbb{Z}_{L_y}\}$ of $R_L$. Consider the $\mathbb{Z}_N$-bilinear map
\begin{equation}
    B: R_L\times R_L\to \mathbb{Z}_N,\quad B(h_1,h_2)\coloneqq\langle h_1,h_2\rangle=(\bar{h}_1h_2)_0.\label{eq:bilinear0}
\end{equation}
Here, the $(\bar{h}_1h_2)_0$ represents the constant term of $\bar{h}_1h_2$. The orthogonal complement of a submodule $M$ of $R_L$ with respect to the bilinear map $B$ is defined as
\begin{equation}
    M^{\perp}=\{r\in R_L|\langle r,m\rangle=0,\forall m\in M\}.
\end{equation}
  
Viewing $R_L$ as a finite Abelian group, all linear maps $\chi(\bullet)=\langle h_1,\bullet\rangle$ in Eq.~\eqref{eq:bilinear0} form the \emph{character group} $\hat{R}_L=\hom(R_L,\mathbb{Z}_N)$ of $R_L$, where $\hom$ denotes the set of all group homomorphisms from $R_L$ to $\mathbb{Z}_N$. We note that $R_L\cong \hat{R}_L$. This duality implies that any subgroup $M$ of $R_L$ and its orthogonal complement satisfy (see Corollary~9.3 in \cite{Lang2002})
\begin{equation}
    |M||M^{\perp}|=|R_L|.\label{eq:duality}
\end{equation}
In the case where $N$ is prime, Eq.~\eqref{eq:duality} specializes to the vector space identity $\dim_{\mathbb{Z}_N}M+\dim_{\mathbb{Z}_N}M^{\perp}=\dim_{\mathbb{Z}_N}R_L$. For more details on character groups, see standard references such as ~\cite{Lang2002, Apostol1976, Terras1999, Wood1999}. 
  
Now, let $M=(f,g)$. By definition, every $r\in (f,g)^{\perp}$ satisfies $\langle af+bg, r\rangle=0$ for all $a,b\in R_L$. This implies that $(f,g)^{\perp}=\operatorname{Ann}_{R_L}(\bar{f},\bar{g})$.
Combining this with Eq.~\eqref{eq:duality} yields
\begin{equation}
    \left|\operatorname{Ann}_{R_L}(\bar{f},\bar{g})\right|=\left|(f,g)^{\perp}\right|=\frac{|R_L|}{|(f,g)|}=\left|\frac{R_L}{(f,g)}\right|.
\end{equation}
Thus, we recover the relation describing the ground state degeneracy of BB codes in Eq.~\eqref{eq:gsd1}.

\section{Topological index of BB codes with the toric layout}
\label{app3}
In this appendix, we explain how to evaluate the fusion rules of the BB code using the algebraic-geometric approach. Explicitly, we consider the $\mathbf{BB}_{p^k}(f,g)$ code with the toric layout. By Theorem~\ref{thm:anyon_type_relation}, this requires us to calculate the topological index of the $\mathbf{BB}_{p^k}(f,g)$ code through the BKK theorem.

\begin{figure*}[t]
    \centering
    \includegraphics[width=0.7\linewidth]{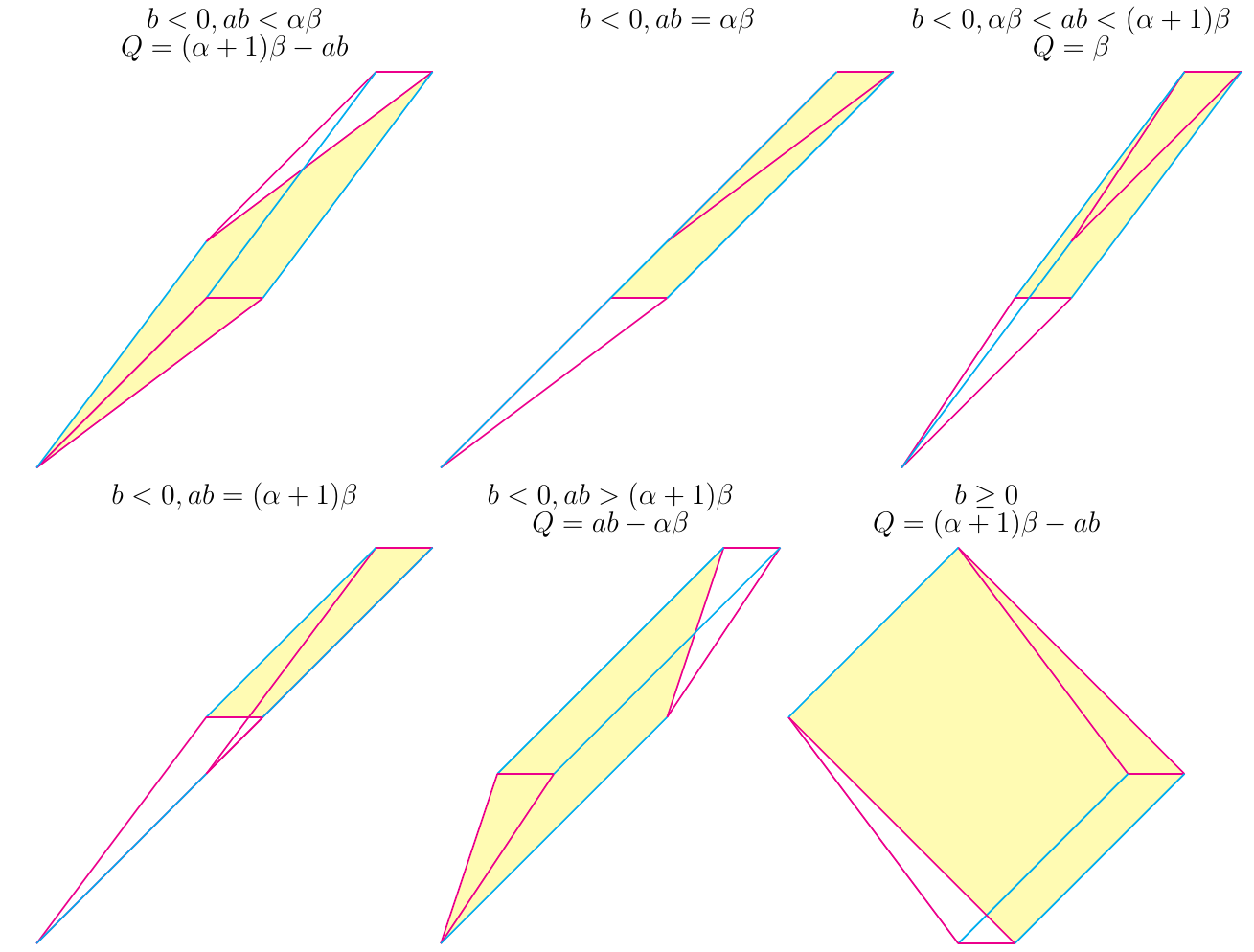}
    \caption{The topological indices of the $\mathbf{BB}_p(f,g)$ code with the toric layout for case 4 under different parameters. Throughout, we assume $\alpha,\beta\geq 0$ and $a\leq 0$. The pink triangles and cyan segments indicate the Newton polygons of $f_p$ and $g_p$, respectively, and the yellow shaded regions represent the mixed area in each case. The parameter ranges and corresponding topological indices for each case are displayed above the figures. For the boundary cases shown in the second figure on the top row and the first figure on the bottom row, the equality of the BKK theorem does not hold, and the topological indices depend explicitly on the coefficients of $f_p$ and $g_p$.}
    \label{fig:topological_index_4}
\end{figure*}

\subsection{Preliminaries}
\subsubsection{Origin shift and axis reflection operations}

We note that the topological indexes of $\mathbf{BB}_N(f,g)$ codes remain invariant under the following two type transformations.

First, the code is invariant under the transformation
\begin{equation}
    \mathbf{BB}_N(f,g)\rightarrow\mathbf{BB}_N(m_1f,m_2g),\label{eq:shift}
\end{equation}
where $m_1$ and $m_2$ are monomials. This is equivalent to shifting the origins of the coordinate systems labeling the qubits \cite{Chen2025}. 

Secondly, the topological index is also invariant under the transformations
\begin{equation}
    \begin{aligned}
        &\mathbf{BB}_N(f(x,y),g(x,y))\rightarrow\mathbf{BB}_N(f(\bar{x},y),g(\bar{x},y)),\\
        &\mathbf{BB}_N(f(x,y),g(x,y))\rightarrow\mathbf{BB}_N(f(x,\bar{y}),g(x,\bar{y})),\label{eq:reflection}
    \end{aligned}
\end{equation}
which correspond to axis reflections about the $x$- and $y$-axis, respectively. 

These two types of transformations are instrumental in simplifying the analysis that follows.

\subsubsection{BKK theorem}
We now introduce the BKK theorem, our main algebraic-geometric tool for evaluating the topological index. We focus on the bivariate polynomial case. To prepare for the theorem, we first define Newton polygons and the concept of mixed area. (The BKK method was applied to qubit codes in \cite{Chen2025} and is generalized here to arbitrary qudits.)

We use the shorthand $\mathcal{X}^{\mathbf{j}}$ for $x^{j_1}y^{j_2}$ with $\mathbf{j}=(j_1,j_2)$. The Newton polygon of a bivariate Laurent polynomial 
\begin{equation}
    h=\sum_{\mathbf{j}=(j_1,j_2)\in\mathbb{Z}^2}c_{\mathbf{j}}\mathcal{X}^{\mathbf{j}}
\end{equation}
is defined as the convex hull of the set of its exponent vectors $\{\mathbf{j}\in\mathbb{Z}^2|c_{\mathbf{j}}\neq 0\}$. For example, the Newton polygon of the polynomial $h=1+x^{-\alpha}y^{b}$ is the segment connecting $(0,0)$ and $(-\alpha,b)$, and the Newton polygon of the polynomial $h=1+y+x^{a}y^{-\beta}$ is the triangle with vertices at $(0,0)$, $(0,1)$, and $(a,-\beta)$.
  
For two Newton polygons $\mathscr{P}_1$ and $\mathscr{P}_2$, their Minkowski sum is defined as:
\begin{equation}
    \mathscr{P}=\mathscr{P}_1+\mathscr{P}_2=\{v_1+v_2|v_1\in\mathscr{P}_1,v_2\in\mathscr{P}_2\}.
\end{equation}
Note that $\mathscr{P}$ is also a convex set. Any face of $\mathscr{P}$ can be uniquely decomposed into the  Minkowski sum of faces of $\mathscr{P}_1$ and $\mathscr{P}_2$. Specifically, for any face $S$ of $\mathscr{P}$, there exist unique faces $S_1$ of $\mathscr{P}_1$ and $S_2$ of $\mathscr{P}_2$, such that $S=S_1+S_2$. Furthermore, the restricted polynomial $h_{1, S_1}$ of the faces $S_1$ is defined as the sum of terms $h_1$ whose exponent vectors lie in $S_1$, and likewise for $h_{2, S_2}$.
  
The mixed area of two Newton polygons $\mathscr{P}_1$ and $\mathscr{P}_2$ is defined as:
\begin{equation}
    \operatorname{MV}(\mathscr{P}_1,\mathscr{P}_2)=\operatorname{Area}(\mathscr{P})-\operatorname{Area}(\mathscr{P}_1)-\operatorname{Area}(\mathscr{P}_2),
\end{equation}
where $\operatorname{Area}$ denotes the area of a polygon. 
  
We now present the BKK theorem. We consider the special case for two bivariate Laurent polynomials. A comprehensive treatment of a generalized form for multiple multivariate polynomials and its proof can be found in Ref.~\cite{Gorantla2021}.
\begin{thm}[BKK Theorem]
    Let $\mathbb{K}$ be a field and $\overline{\mathbb{K}}$ be the algebraic closure of $\mathbb{K}$. Let $h_1$ and $h_2$ be two polynomials in the Laurent polynomial ring $\mathbb{K}[x^\pm,y^\pm]$. Let $\mathscr{P}_1$ and $\mathscr{P}_2$ be the Newton polygons of $h_1$ and $h_2$. If $\dim_{\mathbb{K}}\frac{\mathbb{K}[x^\pm,y^\pm]}{(h_1,h_2)}$ is finite, then
    \begin{equation}
        \dim_{\overline{\mathbb{K}}}\frac{\overline{\mathbb{K}}[x^\pm,y^\pm]}{(h_1,h_2)}=\dim_{\mathbb{K}}\frac{\mathbb{K}[x^\pm,y^\pm]}{(h_1,h_2)}\leq\operatorname{MV}(\mathscr{P}_1,\mathscr{P}_2).
    \end{equation}
    If $\operatorname{MV}(\mathscr{P}_1,\mathscr{P}_2)>0$, then the equality holds if and only if for all faces $S=S_1+S_2$ of $\mathscr{P}$, their restricted polynomials $h_{1, S_1}$ and $h_{2, S_2}$ share no common roots in $(\overline{\mathbb{K}}^\times)^2$.
\end{thm}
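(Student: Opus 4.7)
The plan is to prove the BKK theorem via intersection theory on a toric surface, following the classical approach of Bernstein, Khovanskii, and Kushnirenko. The first equality $\dim_{\overline{\mathbb{K}}}(\,\cdot\,) = \dim_{\mathbb{K}}(\,\cdot\,)$ is an immediate consequence of flat base change: the finite-dimensional $\mathbb{K}$-algebra $A = \mathbb{K}[x^{\pm}, y^{\pm}]/(h_1, h_2)$ satisfies $\dim_{\overline{\mathbb{K}}}(A \otimes_{\mathbb{K}} \overline{\mathbb{K}}) = \dim_{\mathbb{K}}(A)$. Hence one may assume $\mathbb{K}$ is algebraically closed.

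Next I would construct the projective toric surface $X_{\mathscr{P}}$ from the normal fan of the Minkowski sum $\mathscr{P} = \mathscr{P}_1 + \mathscr{P}_2$. The open torus $\mathbb{T} = (\mathbb{K}^*)^2$ sits in $X_{\mathscr{P}}$ as a dense open subset whose complement is $\bigcup_e D_e$, one torus-invariant divisor per edge $e$ of $\mathscr{P}$. Each Newton polygon $\mathscr{P}_i$ determines a nef torus-equivariant line bundle $L_i$ on $X_{\mathscr{P}}$ whose global sections are spanned by monomials with exponents in $\mathscr{P}_i$; the Laurent polynomial $h_i$ thereby extends to a section $\widetilde{h}_i \in H^0(X_{\mathscr{P}}, L_i)$.

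The key algebraic-geometric input is B{\'e}zout's theorem on $X_{\mathscr{P}}$:
\begin{equation}
L_1 \cdot L_2 \,=\, \dim_{\mathbb{K}}\frac{\mathbb{K}[x^{\pm},y^{\pm}]}{(h_1,h_2)} \,+\, \sum_e \mu_e,
\end{equation}
where $\mu_e \geq 0$ measures the intersection multiplicity of $\widetilde{h}_1$ and $\widetilde{h}_2$ along the boundary divisor $D_e$. The classical toric identity evaluates the left side as $L_1 \cdot L_2 = \operatorname{MV}(\mathscr{P}_1,\mathscr{P}_2)$, yielding the upper bound and showing that equality holds iff every $\mu_e$ vanishes. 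To identify when this happens, note that each edge $e$ of $\mathscr{P}$ decomposes uniquely as $e = S_1 + S_2$ with $S_i$ a face of $\mathscr{P}_i$, and the restriction $\widetilde{h}_i|_{D_e}$, after clearing a common torus character, coincides with $h_{i, S_i}$ viewed as a Laurent polynomial on the one-dimensional torus $\mathbb{K}^* \subset D_e$. Therefore $\mu_e = 0$ iff $h_{1,S_1}$ and $h_{2,S_2}$ share no common root in the torus. For vertex faces the $h_{i, S_i}$ are single nonzero monomials with no roots in $(\mathbb{K}^*)^2$, so those constraints are automatic, and the equality criterion of the theorem follows.

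The hard part will be the toric intersection-theoretic input, particularly verifying the identity $L_1 \cdot L_2 = \operatorname{MV}(\mathscr{P}_1,\mathscr{P}_2)$ and the precise correspondence between restrictions of $\widetilde{h}_i$ to boundary divisors and restricted polynomials on Newton-polygon faces; both require careful fan-theoretic bookkeeping of torus characters. An alternative elementary route via Puiseux-series deformation and induction on the support size of the polynomials is possible, replacing toric geometry by combinatorics, but obscures the geometric origin of the mixed-area bound.
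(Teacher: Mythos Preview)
The paper does not prove the BKK theorem: it is stated as a known result, with the remark that ``a comprehensive treatment of a generalized form for multiple multivariate polynomials and its proof can be found in Ref.~[Gorantla2021].'' There is therefore no in-paper proof to compare your proposal against.

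Your outline is a correct sketch of the standard toric-geometric proof of BKK. The reduction to algebraically closed $\mathbb{K}$ via flat base change is right; the construction of $X_{\mathscr{P}}$ from the normal fan of $\mathscr{P}=\mathscr{P}_1+\mathscr{P}_2$, the identification of $h_i$ with a section of the nef line bundle $L_i$, and the intersection-theoretic identity $L_1\cdot L_2=\operatorname{MV}(\mathscr{P}_1,\mathscr{P}_2)$ are all standard toric facts. Your decomposition of the global intersection number into the torus contribution plus boundary multiplicities $\mu_e$, together with the identification of $\widetilde{h}_i|_{D_e}$ with the face polynomial $h_{i,S_i}$, is exactly Bernstein's equality criterion. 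One point to be careful about when writing this up in full: to invoke B\'ezout in the form you state, you need the zero schemes of $\widetilde{h}_1$ and $\widetilde{h}_2$ to meet in a zero-dimensional scheme on all of $X_{\mathscr{P}}$, not just on the open torus; this follows from the finite-dimensionality hypothesis on the torus together with the fact that neither $\widetilde{h}_i$ vanishes identically on any boundary divisor (their Newton polygons have full support relative to the fan), but it deserves a sentence. Otherwise the plan is sound.
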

In our case, we take $\mathbb{K}$ to be the finite field $\mathbb{Z}_p$ for a prime $p$ and $\overline{\mathbb{K}}\coloneqq\overline{\mathbb{F}}_p$. For the $\mathbf{BB}_{p}(f,g)$ code, we will call $Q=\dim_{\mathbb{Z}_p}\frac{\mathbb{Z}_p[x^{\pm},y^{\pm}]}{(f,g)}$ the topological index. The fusion rules of this code are given by $\mathbb{Z}_p^{2Q}$.

\subsection{Topological index}

We consider the $\mathbf{BB}_{p^k}(f,g)$ code with the toric layout. Namely, the Laurent polynomials $f$ and $g$ take the form
\begin{equation}
    f=c^{(0)}+c^{(1)}x+c^{(2)}x^{\bar{\alpha}}y^b,\quad g=d^{(0)}+d^{(1)}y+d^{(2)}x^ay^{\bar{\beta}}
\end{equation}
where $\bar{\alpha}=-\alpha$ and $\bar{\beta}=-\beta$ following the convention used in Ref.~\cite{Chen2025}. 

The topological index is determined by considering the corresponding $\mathbf{BB}_p(f,g)$ code with
\begin{equation}
    f_p=c^{(0)}_p+c^{(1)}_px+c^{(2)}_px^{\bar{\alpha}}y^b,\quad g_p=d^{(0)}_p+d^{(1)}_py+d^{(2)}_px^ay^{\bar{\beta}}.
\end{equation}
where the subscript $p$ indicates that coefficients are taken modulo $p$. Since the coefficients of $f_p$ and $g_p$ can be zero in $\mathbb{Z}_p$, the polynomials may reduce to binomials or monomials. Therefore, to determine the topological index, we perform a case-by-case analysis based on the number of non-zero terms in $f_p$ and $g_p$.

\begin{figure*}[t]
    \centering
    \includegraphics[width=\linewidth]{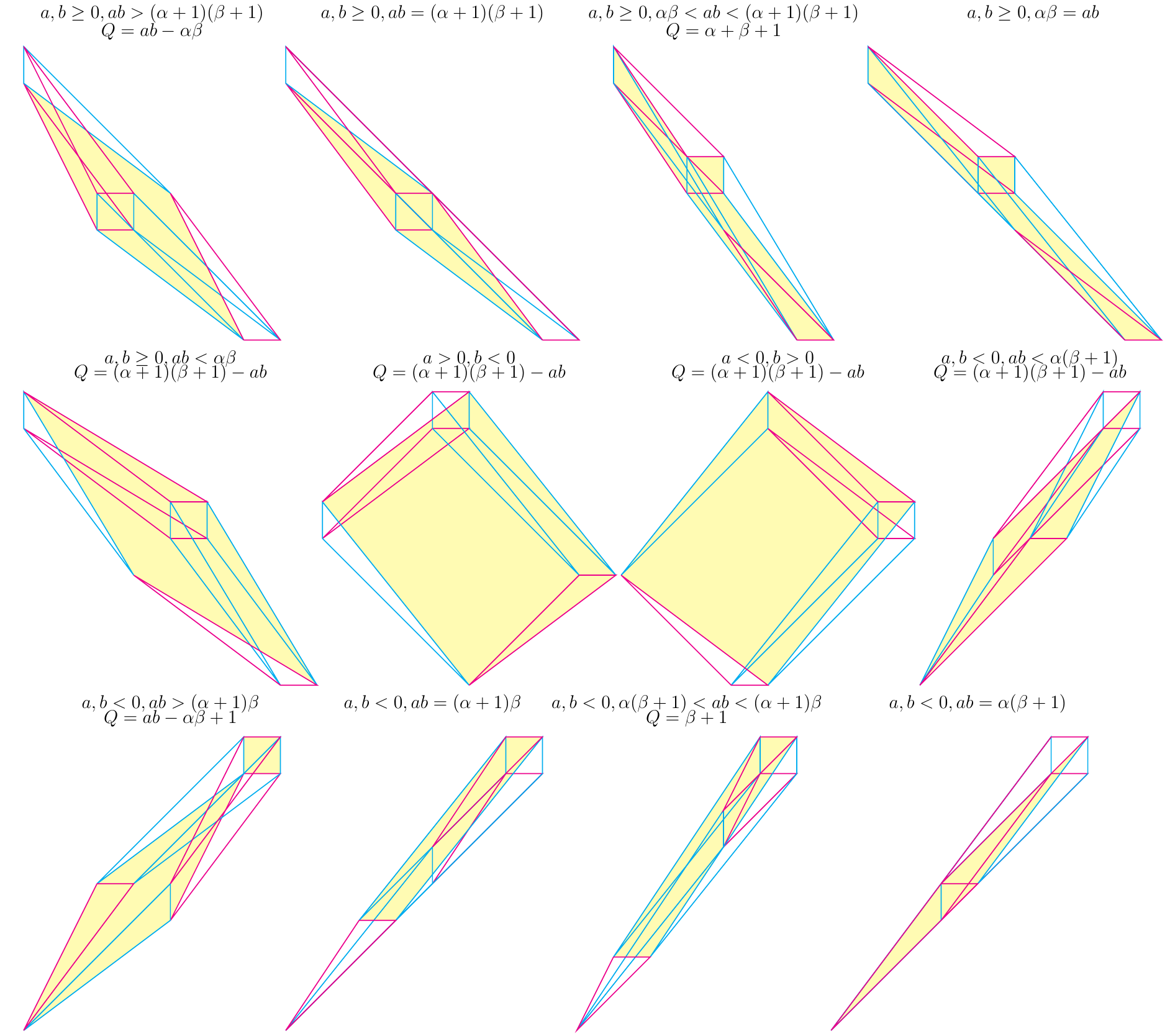}
    \caption{The topological indices of the $\mathbf{BB}_p(f,g)$ code with the toric layout for Case 5 with different parameters. In these cases, $\beta\geq\alpha\geq 0$. In each case, the pink and cyan triangles represent the Newton polygons of $f_p$ and $g_p$, respectively, while the mixed area is shown by the yellow shaded regions. The parameter ranges and corresponding topological indices for each case are displayed above the figures. For the boundary cases, the equality of the BKK theorem does not hold.}
    \label{fig:topological_index_5}
\end{figure*}

\textbf{Case 1: }Either $f_p$ or $g_p$ is a monomial. In this case, the quotient ring $R_p/(f_p,g_p)$ becomes trivial, so we can conclude $Q=0$.

\textbf{Case 2: }Either $f_p$ or $g_p$ reduces to $0$. In this case, if the other polynomial is a monomial, we have $Q=0$. Otherwise, the BB code is not topological.

\textbf{Case 3: }Both $f_p$ and $g_p$ are binomials. Without loss of generality, we assume 
\begin{equation}
    f_p=c^{(0)}_p+c^{(2)}_px^{\bar{\alpha}}y^b,\quad g_p=d^{(0)}_p+d^{(2)}_px^ay^{\bar{\beta}}.\label{eq:binomial_case}
\end{equation}
All other possibilities can be reduced to Eq.~\eqref{eq:binomial_case} by suitable shift transformations. For example, if $f_p=c^{(1)}_px+c^{(2)}_px^{\bar{\alpha}}y^b$ and $g_p=d^{(1)}_py+d^{(2)}_px^ay^{\bar{\beta}}$, then the code is transformed into the form of Eq.~\eqref{eq:binomial_case} by applying the transformation $f_p\to \bar{x}f_p$ and $g_p\to \bar{y}g_p$.

According to the finiteness criterion in Corollary~\ref{cor:topo_finite}, the code is \emph{not} topological if and only if two conditions are met simultaneously: 
\begin{equation}
    \alpha\beta-ab=0,\quad\left(-\frac{c^{(0)}_p}{c^{(2)}_p}\right)^{s}=\left(-\frac{d^{(0)}_p}{d^{(2)}_p}\right)^{t}.
\end{equation}
Here, $s=-a/\gcd(\alpha,a)$ and $ t=\alpha/\gcd(\alpha,a)$ arise from the parameterization $\alpha=-tq_1$, $\beta=-sq_2$, $a=sq_1$, and $b=tq_2$, with $q_1, q_2 \in \mathbb{Z}$ (If $\gcd(\alpha,a)=0$, then $s$ and $t$ are defined by $\beta$ and $b$). In all other scenarios, the code is topological. Specifically, if $\alpha\beta-ab=0$ but the equality is not met, the code is in a trivial phase with $Q=0$.

If the model is topological, the Newton polygons of $f_p$ and $g_p$ are two segments defined by vectors $(-\alpha,b)$ and $(a,-\beta)$. The mixed area is given by the area of the parallelogram spanned by these two vectors. According to the BKK Theorem, the topological index is
\begin{equation}
    Q=|\alpha\beta-ab|.
\end{equation}
  
\textbf{Case 4: } One polynomial is a trinomial and the other is a binomial. Without loss of generality, we focus on the case where $f_p$ is a trinomial and $g_p$ is a binomial. By a similar argument in Case 3, we can concentrate on the code determined by the following polynomials:
\begin{equation}
    f_p=c^{(0)}_p+c^{(1)}_px+c^{(2)}_px^{\bar{\alpha}}y^b,\quad  g_p=d^{(0)}_p+d^{(2)}_px^ay^{\bar{\beta}}.\label{eq:bi-trinomial_case}
\end{equation}
We restrict our analysis to the regime $\alpha \ge 0$, $\beta \ge 0$, and $a \le 0$, since negative parameters can be equivalently mapped to this domain by suitable shift and reflection transformations, namely $\alpha \to -\alpha + 1$, $\beta \to -\beta$, and $a \to -a$. By the finiteness criterion, the model is topological unless
\begin{equation}
    b=\beta=0,\quad\gcd(c^{(0)}_p+c^{(1)}_px+c^{(2)}_px^{\bar{\alpha}}, d^{(0)}_p+d^{(2)}_px^a)\ne 1.
\end{equation}

We obtain the topological index by the BKK theorem. The Newton polygons for $f_p$ and $g_p$ are a triangle and a segment, respectively. All possible Newton polygons and their corresponding topological indices are depicted in Fig.~\ref{fig:topological_index_4}.
  
\textbf{Case 5: } Both $f_p$ and $g_p$ are trinomials. In this case, the $\mathbb{Z}_p$ BB code is determined by
\begin{equation}
    f_p=c^{(0)}_p+c^{(1)}_px+c^{(2)}_px^{\bar{\alpha}}y^b,\quad g_p=d^{(0)}_p+d^{(1)}_py+d^{(2)}_px^ay^{\bar{\beta}},
\end{equation}
with coefficients being non-zero. It is sufficient to consider the scenario where $\alpha\geq 0$ and $\beta\geq 0$ after suitable shift and reflection transformations. Using the generalized Eisenstein’s criterion from Ref.~\cite{Chen2025}, the model is topological unless
\begin{equation}
    (\alpha,\beta,a,b)=(0,0,1,1),\quad \frac{c^{(0)}_p}{d^{(0)}_p}=\frac{c^{(2)}_p}{d^{(1)}_p}=\frac{c^{(1)}_p}{d^{(2)}_p}.
\end{equation}
The Newton polygons of $f_p$ and $g_p$ are two triangles. The topological indices, obtained via the BKK theorem, are illustrated in Fig.~\ref{fig:topological_index_5}.

\section{Sample codes in M2}
\label{app5}
In this appendix, we provide a brief demonstration of basic M2 command lines that are useful for computing the topological order and quasifractonic behavior in BB codes. We illustrate this with an example consistent with the main text. The code that produces Gr\"{o}bner basis is as follows:
\begin{lstlisting}[
      frame=TB,             
      framerule=0.4pt,
      captionpos=t,
      caption={Gr\"{o}bner basis computation.},
      label={code:groebner1},
      basicstyle={\ttfamily\fontsize{9pt}{11pt}\selectfont},
      breaklines=true,
      aboveskip=2pt,        
      belowskip=3pt
  ]
i1: S = ZZ[u, t, y, x, MonomialOrder => {Position => Up, Lex}]; --Define the polynomial ring.
i2: f = 1 + x + 6*t*y^5; g = 1 + y + 4*x^3*u; -- Define polynomials.
i4: I = ideal(12, t*x - 1, y*u - 1, f, g); --Define the ideal.
i5: gens gb I --Compute the Grobner Basis.

o5 = | 12 x-5 3y+3 y2+y-4 t-5 u-y-4 |
\end{lstlisting}

The command for generating a Gr\"{o}bner basis applies not only to rings, but also to modules, allowing us to directly check the topological condition as follows (For computational efficiency, we use the graded reverse lexicographic order here):
\begin{lstlisting} [
      frame=TB,             
      framerule=0.4pt,
      captionpos=t,
      caption={Topological condition check.},
      basicstyle={\ttfamily\fontsize{9pt}{11pt}\selectfont},
      breaklines=true,
      aboveskip=2pt,        
      belowskip=3pt
  ]
i1: S = ZZ[u, t, y, x]; --Define the polynomial ring.
i2: I'= ideal (12, x*t-1, y*u-1);
i3: Q'= S/I' -- Define quotient ring.
i4: f = 1 + x + 6*t*y^5; g = 1 + y + 4*x^3*u;
i6: eps = matrix{{f, g}}; partial = g || -f; --Define stabilizer and excitation maps.
i8: try (gens gb kernel eps - gens gb partial == 0) else false -- Check exactness. 

o8 = true
\end{lstlisting}

The periodicity can then be tested directly in the quotient ring $\mathcal{Q}$. For example, the periodicity along the $\hat{x}$ direction can be obtained with
\begin{lstlisting}[
      frame=TB,             
      framerule=0.4pt,
      captionpos=t,
      caption={Periodicity computation.},
      basicstyle={\ttfamily\fontsize{9pt}{11pt}\selectfont},
      breaklines=true,
      aboveskip=2pt,        
      belowskip=3pt
  ]
i1: S = ZZ[u, t, y, x];
i2: f = 1 + x + 6*t*y^5; g = 1 + y + 4*x^3*u; 
i4: I = ideal(12, t*x - 1, y*u - 1, f, g); 
i5: Q = S/I;
i6: i = 1; g = x; while i > 0 do (if g - 1 == 0 then break i; g = x*g, i = i + 1); i -- Check the ideal memberships.

o9 = 2
\end{lstlisting}
This gives that the periodicity along the $\hat{x}$ direction is 2. 

\twocolumngrid
\bibliographystyle{apsrev4-2}
\bibliography{CSSZn}

\end{document}